\titleformat{\section}{\sf\large}{\thesection.}{1em}{\filcenter}
\titleformat{\subsection}{\sf}{\thesubsection}{1em}{}
\titleformat{\subsubsection}{\sf \small}{\thesubsubsection}{1em}{\it }
\newcommand{\scX}{\mathcal{X}}
\newcommand{\bbR}{\mathbb{R}}
\newcommand{\dagg}[1]{{#1}^\dagger}
\newtheorem{theorem}{Theorem}
\newtheorem{lemma}[theorem]{Lemma}
\newtheorem{defn}{Definition}
\title{Joint Estimation of Quantile Planes over Arbitrary Predictor Spaces}
\author{Yun Yang and Surya Tokdar\\{\it University of California, Berkeley} and {\it Duke University}}
\date{}
\begin{document}

\maketitle
\begin{abstract}
In spite of the recent surge of interest in quantile regression, joint estimation of linear quantile planes remains a great challenge in statistics and econometrics. We propose a novel parametrization that characterizes any collection of non-crossing quantile planes over arbitrarily shaped convex predictor domains in any dimension by means of unconstrained scalar, vector and function valued parameters. Statistical models based on this parametrization inherit a fast computation of the likelihood function, enabling penalized likelihood or Bayesian approaches to model fitting. We introduce a complete Bayesian methodology by using Gaussian process prior distributions on the function valued parameters and develop a robust and efficient Markov chain Monte Carlo parameter estimation. The resulting method is shown to offer posterior consistency under mild tail and regularity conditions. We present several illustrative examples where the new method is compared against existing approaches and is found to offer better accuracy, coverage and model fit. 
\end{abstract}

\section{Introduction}
%\subsection{Why quantile regression?}
Quantile regression \citep[QR;][]{Koenker1978,Koenker2005} has recently gained increased recognition as a robust alternative to standard least squares regression, with applications to ecology, economics, epidemiology and climate science research \citep{Burgette2011, Elsner2008, Dunham2002, Abrevaya01}. By offering direct inference on the non-central parts of a response distribution, QR allows researchers to identify and quantify a wide range of regression heterogeneity where the predictors affect the quartiles or the tails of the response distribution differently than its mean or median. This is illustrated in Figure \ref{f:mcycle}(a), adapted from \citet{Koenker2005vig}, showing the estimated conditional quantile curves for the well-known motorcycle data \citep{Silverman1985} with ``Acceleration'' (head acceleration, in g) as the response and ``Time'' (time from impact, in ms) as the explanatory variable. The estimates do a much better job of capturing the complex relationship between the two variables than what could be inferred through a simple mean regression or from more modern nonparametric density regression techniques \citep{DeIorio2004,Tokdar2010} as shown in Figures \ref{f:mcycle}(b)-(c).

\begin{figure}[t!]
\centering
\begin{tabular}{ccc}
    \includegraphics[trim=6.5cm 10.8cm 7.5cm 9cm,clip=true,scale=0.65]{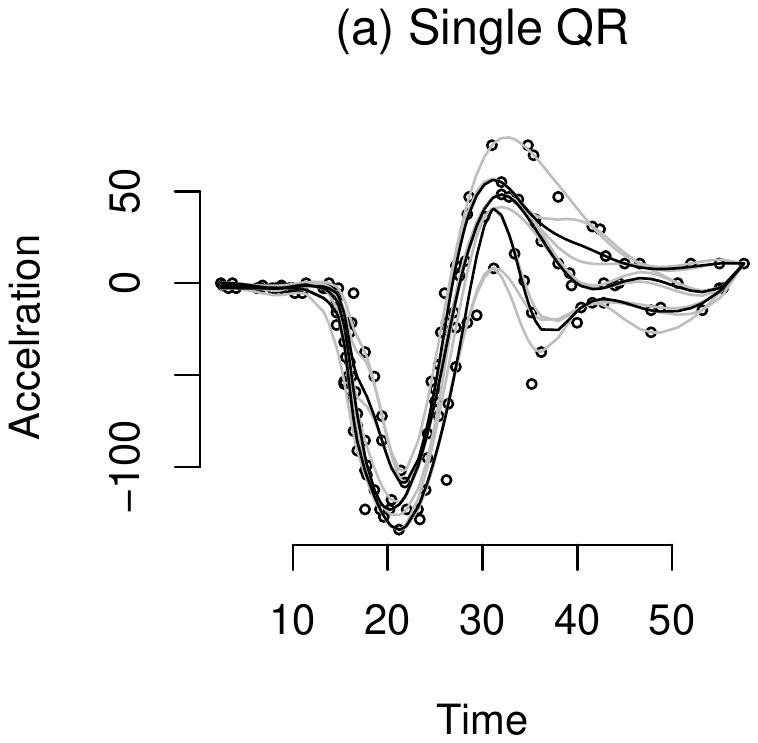}
    &
  \includegraphics[trim=6.5cm 10.8cm 7.5cm 9cm,clip=true,scale=0.65]{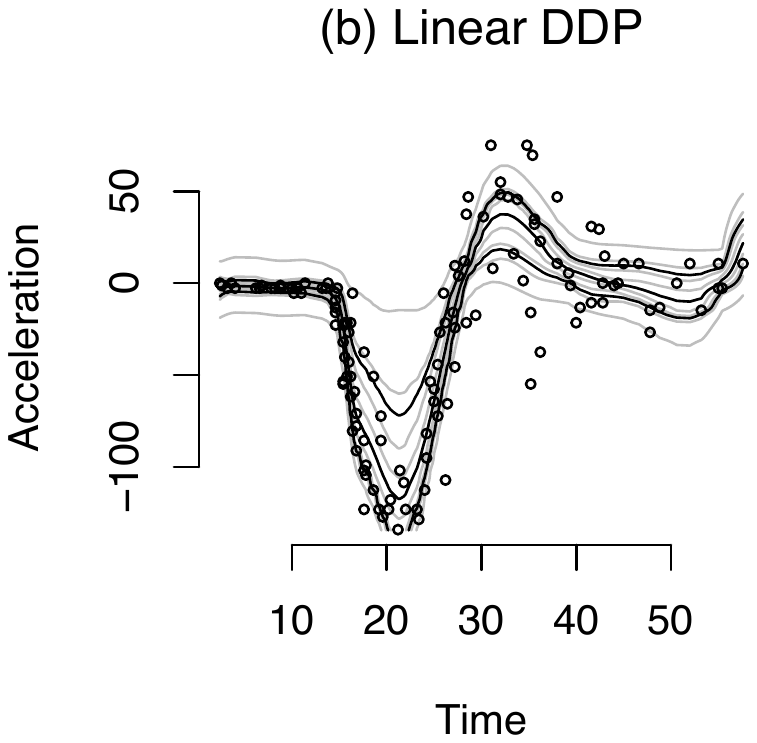}
  &
  \includegraphics[trim=6.5cm 10.8cm 7.5cm 9cm,clip=true,scale=0.65]{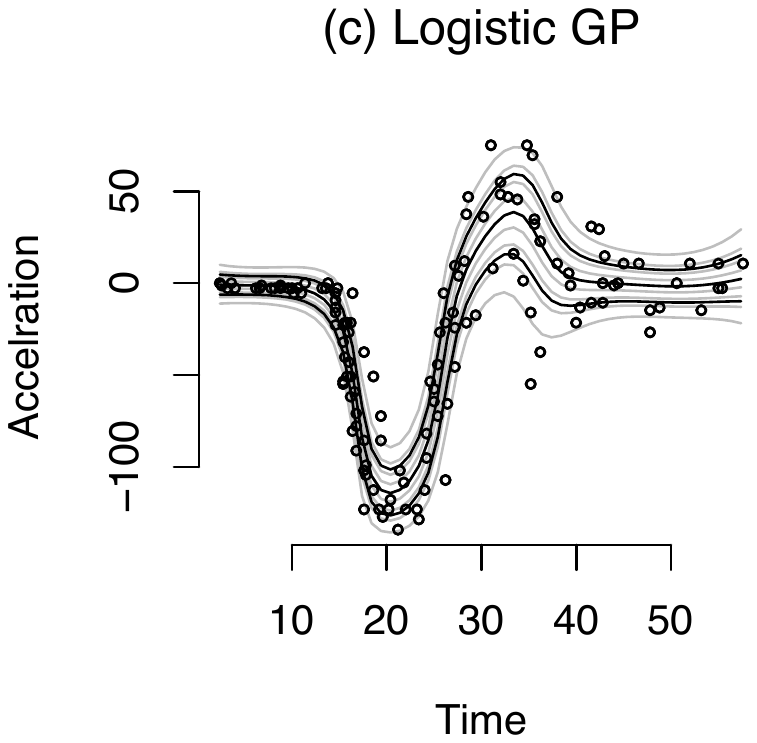}
  \\
    \includegraphics[trim=6.5cm 10.8cm 7.5cm 9cm,clip=true,scale=0.65]{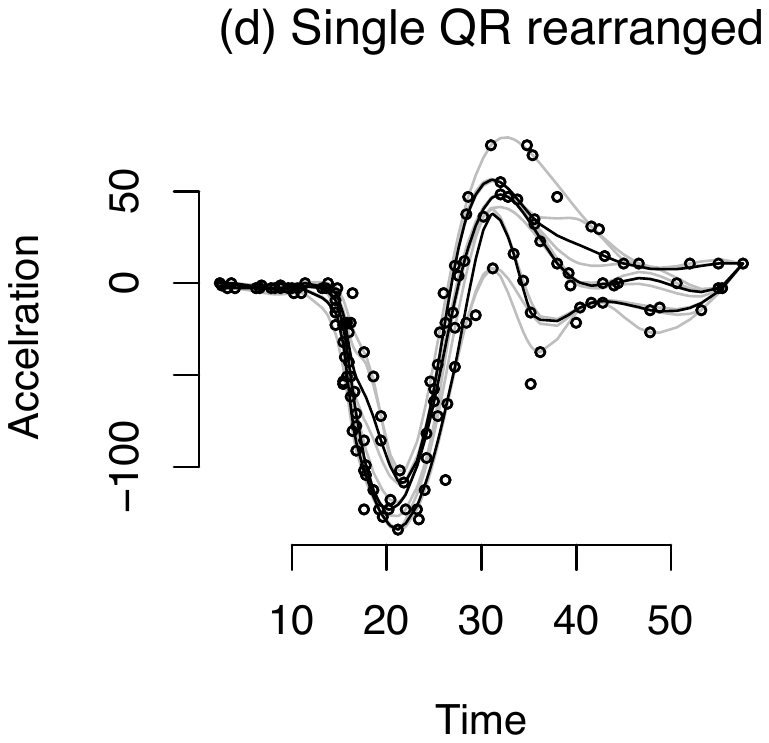}
  &
  \includegraphics[trim=6.5cm 10.8cm 7.5cm 9cm,clip=true,scale=0.65]{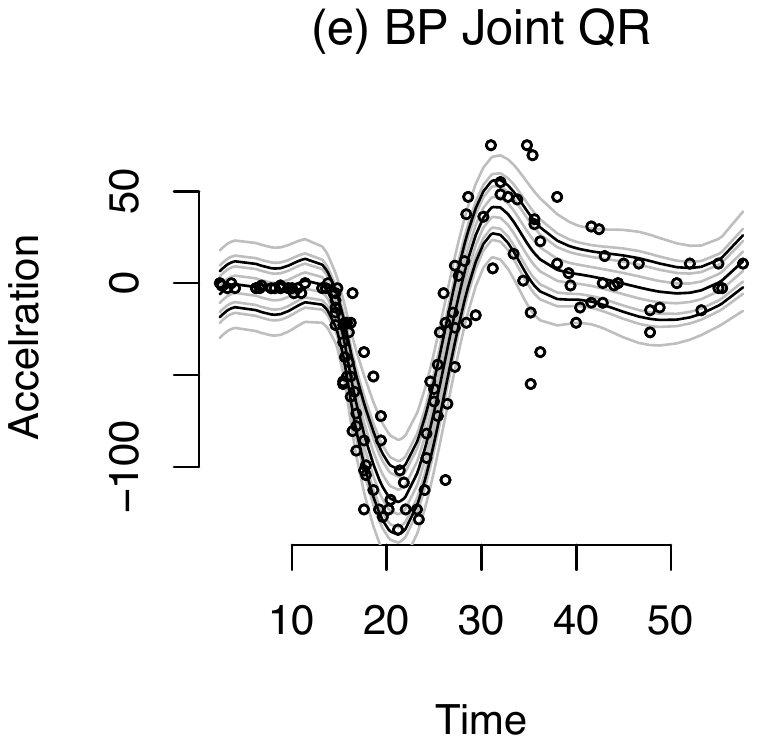}  
  &
  \includegraphics[trim=6.5cm 10.8cm 7.5cm 9cm,clip=true,scale=0.65]{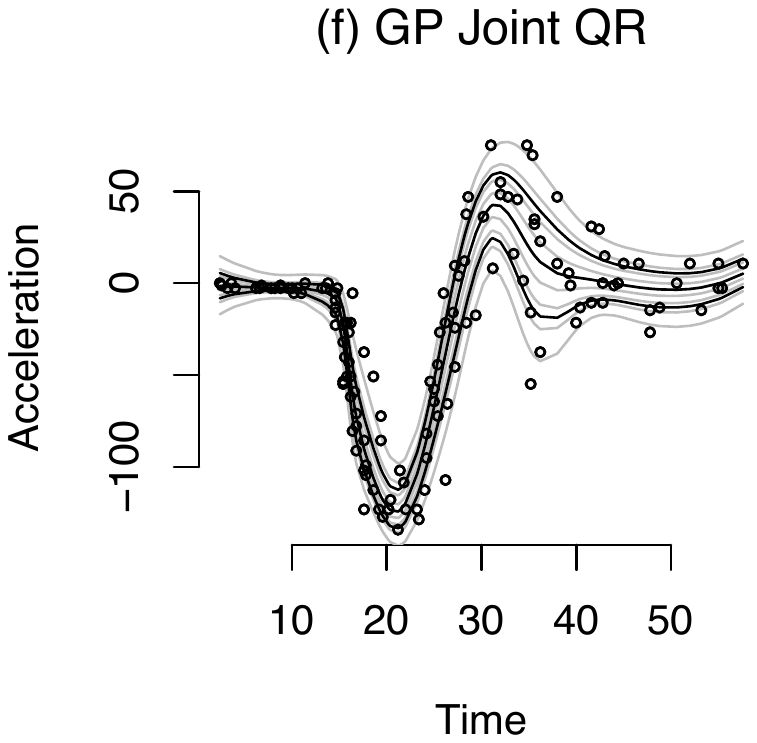}
\end{tabular}
\caption{Estimated quantile curves at $\tau \in \{0.1, 0.2, \cdots, 0.9\}$ (gray lines) and at $\tau \in \{0.25, 0.5, 0.75\}$ (black lines) for motorcycle data (open circles). Single QR fits were done with \texttt{rqss()} function of the \texttt{quantreg} R-pcakage. Rearrangement was done by obtaining single QR fits over the dense grid $\tau \in \{0.01, 0.02, \cdots, 0.99\}$. GP joint QR was from a preliminary implementation of the method described in Section 3.2. Linear DDP \cite{DeIorio2004} was implemented with the R-package \texttt{DPpackage} of \cite{DPpackage}.}
\label{f:mcycle}
\end{figure}

The estimates in \ref{f:mcycle}(a) were generated by using the original linear quantile regression technique of \citet{Koenker1978}. For a response proportion $\tau \in (0, 1)$ let $Q_Y(\tau | X) = \inf\{a: P(Y \le a | X) \ge \tau\}$ denote the $\tau$-th conditional quantile of a response $Y$ given a predictor vector $X$. The linear quantile regression model postulates
\begin{equation}
\label{eq1}
Q_Y(\tau | X) = \beta_0(\tau) + X^T\beta(\tau),
\end{equation}
which is equivalent to saying $Y = \beta_0(\tau) + X^T\beta(\tau) + U$ with the error variable satisfying $Q_U(\tau|X) \equiv 0$. The model is linear in the model parameters $(\beta_0(\tau), \beta(\tau))$. The predictor vector $X$ may include non-linear and interaction terms of the original covariates. In the motorcycle data analysis, we used B-spline transforms (df = 15) of Time as predictors, with $\dim(X) = 15$. The model parameters are easily estimated by linear programming and the estimates are consistent, asymptotically Gaussian and robust against outliers. Current literature on quantile regression (QR) is both deep and diverse; see \cite{Koenker2005} for a comprehensive overview and \cite{Tokdar2012} for references to Bayesian approaches.

%\subsection{The issue of joint estimation}
Most scientific applications of QR require inference over a dense grid of $\tau$ values, which is usually done by assimilating inference from single-$\tau$ model fits \citep[e.g.,][]{Elsner2008}. Such assimilations are often problematic. In Figure \ref{f:mcycle}(a) the estimated curves cross each other violating laws of probability; the waviness and the local optima of the curves change wildly across $\tau$ reflecting poor borrowing of information; all quantile curves nearly collapse to a single point at boundary, where uncertainty should have been high due to data scarcity. Post-hoc rearrangement of the estimated quantiles \citep{Chernozhukov2011} avoids the embarrassing issue of crossing (Figure \ref{f:mcycle}(d)), but the other two problems persist.

Joint estimation of the conditional quantile planes requires working with the linear specification \eqref{eq1} simultaneously for all $\tau \in (0,1)$. These specifications together define a valid statistical model, parametrized by function valued parameters $\beta_0 : (0,1) \to \bbR$, $\beta: (0, 1) \to \bbR^p$, provided
\begin{equation}
\label{eq2}
\beta_0(\tau_1) + x^T \beta(\tau_1) \ge \beta_0(\tau_2) + x^T \beta(\tau_2),~\mbox{for every pair}~\tau_1 > \tau_2~\mbox{and for every}~x \in \scX
\end{equation}
where $\scX$ is a pre-specified domain for $X$. Such models and related methods are a minority in the current quantile regression literature, and existing approaches have severe shortcomings. The methods by \cite{He1997} and \cite{Bondell2010} impose serious restrictions on the shape of $\beta(\tau)$. The procedure by \cite{Dunson2005}, based on  substitution likelihood, does not scale to dense $\tau$ grids and the role of substitution likelihood in Bayesian estimation remains debated \citep{Monahan1992}. \cite{Tokdar2012} provide a complete, scalable solution for the univariate case, but their handling of multivariate $X$ through univariate, single index projection is unsatisfactory. 

To date, the most comprehensive treatment is given by \citet{Reich2011} who utilize monotonicity properties of Bernstein basis polynomials with non-negative coefficients to ensure non-crossing quantile planes in any dimension. Their use of truncated Gaussian prior distributions on the non-negative coefficients  leads to an attractive Gibbs sampling based Bayesian model fitting. However, both the model and the computing algorithm of \cite{Reich2011} crucially depend on the predictor domain $\scX$ being a hyper-rectangle in $\bbR^p$. This is a fairly major handicap that may lead to a poor fit for reasons explained below.

The specification of $\scX$ is a critical model choice in QR. Without loss of generality, $\scX$ can be chosen convex because \eqref{eq2} holds over $\scX$ if and only if it holds over the convex hull of $\scX$. The convex hull of the observed predictor vectors presents the most obvious practical choice. In spite of convexity, such an $\scX$ may have a fairly irregular shape and may occupy only a fraction of the volume of the encompassing hyper-rectangle. The B-spline transforms of Time in the motorcycle data analysis live on a tiny 1 dimensional manifold in $\bbR^{15}$. Quantiles planes that are required to be non-crossing over the larger hyper-rectangle will appear mostly parallel within the original $\scX$, as can be seen in Figure \ref{f:mcycle}(e). Unfortunately, such narrow predictor convex hulls are unavoidable whenever non-linear effects are sought within the Koenker-Bassett program or the measured covariates are naturally correlated. These are also situations where assimilation techniques exhibit dramatic crossing problems and hence a sound statistical model is most needed for joint estimation.

%For ease of exposition assume the response distribution to admit a density function at every value of $x$. Then $\beta_0$ and $\beta$ must be at least once differentiable and the non-crossing condition simplifies to: $\dot \beta_0(\tau) + x^T\dot\beta(\tau) > 0$ for every $\tau \in (0,1)$ and every $x \in \scX$. 
For an arbitrary convex $\scX$, the space of $\beta_0$, $\beta$ curves satisfying \eqref{eq2} is highly non-regular and unsuitable for statistical modeling and investigation. For the case of $p = 1$, \cite{Tokdar2012} provides a much simpler representation parametrized by two monotonically increasing curves over $(0,1)$. A generalization of this to any $p \ge 1$ and any convex $\scX$ of arbitrary shape is currently not available in the literature. %Below we outline an original theory that accomplishes this feat and promises to deliver an attractive modeling platform for methods development.

%\subsection{A new theory and a method for joint estimation}
In this paper we propose a novel theory that delivers the right modeling platform for joint quantile regression. Our theory covers any dimension $p$ and any bounded  convex $\scX$ of arbitrary shape. It provides a complete characterization of joint quantile regression in terms of a collection of scalars, vectors and curves all but one of which are entirely constraint-free. Even the one curve with a constraint has only a mild shape restriction on it; it is required to live in the space of all CDFs on $(0,1)$ with full support. Our reparametrization leads to an easy likelihood score calculation in the model parameters, making it ideally suited to develop practicable methods by using either penalized likelihood or Bayesian techniques.

We build upon this novel theory to introduce a semiparametric Bayesian methodology for joint quantile regression over any $\scX$ and any $p$, where the curve valued model parameters are assigned Gaussian process and transformed Gaussian process priors within a hierarchical setting. Asymptotic frequentist properties of the method are studied in Section \ref{s:theory} and we establish posterior consistency over a broad class of true data generating distributions with linear quantile curves.
For parameter estimation, we propose a Monte Carlo technique that incorporates efficient model space discretization, adaptive Markov chain sampling \citep{Haario1999} and reduced rank approximation \citep{tokdar07, Banerjee2008}. We provide empirical evidence (Section \ref{s:simu1}-\ref{s:cs}) that our Gaussian process method enjoys much better estimation accuracy and coverage than the method by \citet{Reich2011}, and our estimates are comparable to regularized versions of the classical single-$\tau$ estimates. We consider the developments here make a strong case for linear quantile regression to be used as a model based inferential method rather than just an exploratory tool!

\relax
\section{A novel theory of joint quantile planes estimation}
\label{s:necsuff}
\subsection{Characterizing non-crossing hyperplanes}

We focus only on the case where the response distribution is non-atomic and admits a probability density function conditionally at every $X$, and hence \eqref{eq2} is equivalent to requiring $\dot\beta_0(\tau) + x^T\dot\beta(\tau) > 0$ for all $\tau \in (0,1), x \in \scX$. Our theory could be extended to atomic response distributions with known atoms. Assume $0$ is an interior point of $\scX$. This can be achieved without any loss of generality by a simple translation of the predictors once a suitable interior point is found within the convex hull of the observed predictors; see Appendix \ref{A:centering} for more details. Define a map $b \mapsto a(b, \scX)$ on $\bbR^p\cup \{\infty\}$ as
\[
a(b, \scX) = \left\{\begin{array}{ll} \sup_{x \in \scX} \{-x^Tb\}/\|b\| & b \ne 0,\\\infty & b = 0.\end{array}\right.
\]
Note that for every $b \ne 0$ we have $a(b, \scX) \in (0, \infty)$ because $\scX$ is bounded with 0 as an interior point.

\begin{theorem}
\label{thm:char}
Let $\scX$ be a bounded convex set in $\bbR^p$ with zero as an interior point and let $\beta_0(\tau)$ and  $\beta(\tau) = (\beta_1(\tau), \cdots, \beta_p(\tau))^T$ be real, differentiable functions in $\tau \in (0, 1)$. Then $\dot\beta_0(\tau) + x^T\dot\beta(\tau) > 0$ for all $\tau \in (0, 1)$ at every $x \in \scX$ if and only if
\begin{align}
\dot\beta_0(\tau) > 0,~~\dot\beta(\tau) = \dot\beta_0(\tau) \frac{v(\tau)}{a(v(\tau), \scX) \sqrt{1 + \|v(\tau)\|^2}},~~\tau \in (0,1), \label{c2}
\end{align}
for some $p$-variate, real function $v(\tau) = (v_1(\tau), \cdots, v_p(\tau))^T$ in $\tau \in (0, 1)$.
\end{theorem}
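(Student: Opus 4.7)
The plan is to first reduce the pointwise-in-$x$ condition to a single scalar inequality via an infimum-over-$\scX$ computation, and then, at each fixed $\tau$, exhibit an explicit pointwise bijection between the admissible derivative $\dot\beta(\tau) \in \bbR^p$ and an unconstrained auxiliary vector $v(\tau) \in \bbR^p$.

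\emph{Step 1 (reduction).} Fixing $\tau$ and writing $b := \dot\beta(\tau)$, I would use affineness of $\dot\beta_0(\tau)+x^T b$ in $x$ together with the definition of $a$ to obtain
\[
\inf_{x \in \scX}\bigl(\dot\beta_0(\tau)+x^T b\bigr) \;=\; \dot\beta_0(\tau) - a(b,\scX)\,\|b\|
\]
for $b \ne 0$, with the case $b = 0$ trivial. Because $0$ is an interior point of $\scX$, one has $a(b,\scX) > 0$ for every nonzero $b$, so the hypothesis of the theorem is equivalent to the two scalar statements $\dot\beta_0(\tau) > 0$ and $a(b,\scX)\,\|b\| < \dot\beta_0(\tau)$ (vacuous when $b = 0$).

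\emph{Step 2 ($\Leftarrow$).} A direct substitution shows the displayed parametrization is always admissible. Since $\dot\beta(\tau)$ is a positive scalar multiple of $v(\tau)$, scale-invariance of $a$ under positive scalars gives $a(\dot\beta(\tau),\scX) = a(v(\tau),\scX)$, and the verification collapses to the elementary inequality $\|v(\tau)\|/\sqrt{1+\|v(\tau)\|^2} < 1$.

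\emph{Step 3 ($\Rightarrow$).} Conversely, given $\dot\beta_0(\tau) > 0$ and $b$ with $a(b,\scX)\,\|b\| < \dot\beta_0(\tau)$, I would look for $v(\tau)$ of the form $v(\tau) = \lambda b$ with $\lambda > 0$ (and set $v(\tau) = 0$ when $b = 0$). Positive collinearity forces $a(v,\scX) = a(b,\scX)$, reducing the defining equation to a scalar quadratic in $\lambda$ with unique positive root
\[
\lambda \;=\; \frac{a(b,\scX)}{\sqrt{\dot\beta_0(\tau)^2 - a(b,\scX)^2\|b\|^2}},
\]
which is a well-defined positive real number precisely because of the inequality established in Step 1.

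The main obstacle is of a bookkeeping nature rather than a conceptual one. The convention $a(0,\scX) = \infty$ must be handled so that the degenerate point $v(\tau) = 0$ consistently produces $\dot\beta(\tau) = 0$ through the formal cancellation $0/\infty$, and the interior-point hypothesis on $0$ must be invoked at the two key places: to guarantee $a(b,\scX) > 0$ for every nonzero $b$ (so that the reduction in Step 1 is an equivalence rather than an implication), and to keep $\dot\beta_0(\tau)^2 - a(b,\scX)^2\|b\|^2$ strictly positive under the square root in Step 3. No regularity of $v$ beyond pointwise existence is claimed, so the construction in Step 3 is all that is needed.
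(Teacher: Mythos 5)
Your proposal is correct and follows essentially the same approach as the paper's proof: the ``only if'' direction constructs the same scaling constant (your $\lambda$ equals the paper's $c(\tau)/\|\dot\beta(\tau)\|$), exploiting positive homogeneity of $a(\cdot,\scX)$, while the ``if'' direction verifies the same inequality $\|v\|/\sqrt{1+\|v\|^2} < 1$ that the paper bounds by a direct chain of estimates. The only (cosmetic) difference is that you first reduce the $\forall x \in \scX$ condition to the single scalar inequality $a(b,\scX)\|b\| < \dot\beta_0(\tau)$ and then treat both implications at the scalar level, whereas the paper performs the reduction only in the ``only if'' half.
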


\begin{proof}
\underline{If part.} Suppose \eqref{c2} holds. For any $\tau \in (0,1)$ either $v(\tau) = 0$ in which case $\dot\beta(\tau) = 0$ and so $\dot\beta_0(\tau) + x^T\dot\beta(\tau) > 0$ at every $x \in \scX$. Otherwise, if $v(\tau) \ne 0$ then at any $x \in \scX$.
\begin{align*}
\dot\beta_0(\tau) + x^T\dot\beta(\tau) & = \dot\beta_0(\tau)\left\{1 + \frac{x^T v(\tau)}{a(v(\tau), \scX) \sqrt{1+ \|v(\tau)\|^2}} \right\}\\
& = \dot\beta_0(\tau) \left\{1 - \sqrt{\frac{\|v(\tau)\|^2}{1 + \|v(\tau)\|^2} } \frac{\{-x^T v(\tau) \}/ \|v(\tau)\|}{a(v(\tau), \scX)}\right\}\\
& \ge \dot\beta_0(\tau)\left\{1 - \sqrt{\frac{\|v(\tau)\|^2}{1 + \|v(\tau)\|^2} } \right\}\\
& > 0
\end{align*}

\noindent \underline{Only if part.} We must have $\dot\beta_0(\tau) > 0$ for all $\tau \in (0, 1)$ because $\scX$ contains 0.  For any $\tau \in (0, 1)$, if $\dot\beta(\tau) = 0$, set $v(\tau) = 0$. Otherwise, $\dot\beta_0(\tau) > \{-x^T\dot \beta(\tau)\}$ at every $x \in \scX$ and hence $\dot\beta_0(\tau) > \|\dot\beta(\tau)\|a(\dot\beta(\tau), \scX)$. So the positive scalar  $c(\tau) = [ [\dot\beta_0(\tau)/\{\|\dot\beta(\tau)\|a(\dot\beta(\tau), \scX)\}]^2  - 1]^{-1/2}$ satisfies $\|\dot\beta(\tau)\| a(\dot\beta(\tau), \scX) / \dot\beta_0(\tau) = c(\tau) / \{1  + c(\tau)^2\}^{1/2}$. Set
\[v(\tau) = c(\tau)\dot\beta(\tau) / \|\dot \beta(\tau)\|,~~\tau \in (0, 1).
\]
A $v(\tau)$ constructed as above defines a real $p$-variate function on $\tau \in (0, 1)$ and satisfies \eqref{c2}.
\end{proof}

\subsection{An almost constraint-free parametrization of linear quantile regression}

Theorem \ref{thm:char} greatly reduces the monotonicity constraint on the quantile hyperplanes to that on a single function $\beta_0(\tau)$. Construction of a single monotone function is a relatively easy task,
but some care is needed in handling the range $(\beta_0(0), \beta_0(1))$, which corresponds to the support of the conditional density of $Y$ given $X = 0$. We pursue a model for $\beta_0(\tau)$ based on a user specified or default ``prior guess'' $f_0(y)$ for this conditional density.
In the special case where $(\beta_0(0), \beta_0(1))$ is a known finite interval, $f_0$ could be chosen with support equal to the same interval. In general $f_0$ should be chosen to have support $(-\infty, \infty)$, such as a standard normal density, or a Student-t density with a modest degrees of freedom if the response distribution is expected to have heavy tails. We focus only on this general case, although the model described below could be easily modified to $f_0$ supported on a bounded interval.

Let $f_0$ have support $(-\infty, \infty)$ and define its cumulative distribution function $F_0(y) = \int_{-\infty}^y f_0(z)dz$, quantile function $Q_0(\tau) = F_0^{-1}(\tau)$ and quantile density $q_0(\tau) = \dot Q_0(\tau)$. Let $\tau_0 = F_0(0)$; by the full support assumption, $0 < \tau_0 < 1$. We pursue a model for $\beta_0$ and $\beta$ as follows
\begin{align}
& \beta_0(\tau_0) = \gamma_0,~~\beta(\tau_0) = \gamma \label{eq:tau0}\\
& \beta_0(\tau) - \beta_0(\tau_0) = \sigma \int_{\zeta(\tau_0)}^{\zeta(\tau)} q_0(u) du, ~~~ \tau \in (0,1) \label{eq:beta0}\\
& \beta(\tau) - \beta(\tau_0) = \sigma \int_{\zeta(\tau_0)}^{\zeta(\tau)} \frac{w(u)}{a(w(u), \scX)\sqrt{1 + \|w(u)\|^2}}q_0(u)du,~~~ \tau \in (0,1) \label{eq:beta}
\end{align}
with model parameters $\gamma_0 \in \bbR$; $\gamma \in \bbR^p$; $\sigma > 0$; $w : (0, 1) \to \bbR^p$, an unconstrained $p$-variate function on $(0,1)$; and $\zeta : [0, 1] \to [0, 1]$, a differentiable, monotonically increasing bijection, i.e., a diffeomorphism, of $[0,1]$ onto itself. We write $(\beta_0, \beta) = \mathcal{T}(\gamma_0, \gamma, \sigma, w, \zeta)$ to indicate $\beta_0, \beta$ defined as in \eqref{eq:tau0}-\eqref{eq:beta}. 

All model parameters, except the diffeomorphism $\zeta$, are essentially unconstrained. The function space of $\zeta$ is simply the space of cumulative distribution functions associated with all probability densities with support $[0,1]$. Such function spaces are easy to handle for statistical model fitting; a simple approach is presented in Section 3. Note that when $\zeta$ is the identity map of $(0,1)$ onto itself and $w(\tau) \equiv 0$, we get $\beta_0 = \sigma Q_0$, $\beta = 0$, and the resulting joint, linear quantile regression model simplifies to a standard homogeneous, linear model: $Y_i = \gamma_0 + X_i^T\gamma + \sigma \epsilon_i$ with $\epsilon_i \sim f_0$. This model indeed provides a complete representation of all $\beta_0$, $\beta$ satisfying the non-crossing condition \eqref{eq2}, subject to a matching range criterion, as detailed below.
\begin{theorem}
\label{thm:model}
Let $\beta_0: (0,1) \to \bbR$, $\beta: (0,1) \to \bbR^p$ be differentiable with $(\beta_0(0), \beta_0(1)) = (-\infty, \infty)$ [defined in the limit]. Then \eqref{eq2} holds if and only if $(\beta_0, \beta) = \mathcal{T}(\gamma_0, \gamma, \sigma, w, \zeta)$ for some $\gamma_0 \in \bbR$, $\gamma \in \bbR^p$, $\sigma > 0$, $w : (0,1) \to \bbR^p$, and, $\zeta$, a diffeomorphism from $[0,1]$ onto itself $[0,1]$. 
\end{theorem}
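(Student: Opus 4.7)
The plan is to split the biconditional: the \emph{if} direction will follow from Theorem \ref{thm:char} by a one-line chain-rule calculation, while the \emph{only if} direction will require building $(\gamma_0, \gamma, \sigma, w, \zeta)$ explicitly out of a given $(\beta_0, \beta)$ and then verifying that the integral formulas \eqref{eq:tau0}--\eqref{eq:beta} are satisfied.

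For the \emph{if} direction, I would differentiate the defining equations \eqref{eq:beta0}--\eqref{eq:beta} to obtain
\[
\dot\beta_0(\tau) = \sigma\, q_0(\zeta(\tau))\,\dot\zeta(\tau), \qquad \dot\beta(\tau) = \dot\beta_0(\tau)\cdot \frac{w(\zeta(\tau))}{a(w(\zeta(\tau)),\scX)\sqrt{1 + \|w(\zeta(\tau))\|^2}}.
\]
Since $f_0$ has full support, $q_0 > 0$ on $(0,1)$, and since $\zeta$ is a diffeomorphism of $[0,1]$, $\dot\zeta > 0$. Hence $\dot\beta_0 > 0$, and setting $v(\tau) := w(\zeta(\tau))$ puts $(\dot\beta_0, \dot\beta)$ exactly into the form \eqref{c2} of Theorem \ref{thm:char}, which in turn delivers \eqref{eq2}.

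For the \emph{only if} direction, I would first apply Theorem \ref{thm:char} to extract a function $v: (0,1)\to\bbR^p$ encoding the ``angular'' part of $\dot\beta$. I would then declare
\[
\gamma_0 = \beta_0(\tau_0),\quad \gamma = \beta(\tau_0),\quad \sigma = 1,\quad \zeta(\tau) = F_0(\beta_0(\tau)),\quad w(u) = v(\zeta^{-1}(u)).
\]
With this choice, $Q_0(\zeta(\tau)) - Q_0(\zeta(\tau_0)) = \beta_0(\tau) - \beta_0(\tau_0)$ is tautological, matching \eqref{eq:beta0}, and the corresponding representation \eqref{eq:beta} for $\beta$ follows by applying the fundamental theorem of calculus to $\int_{\tau_0}^\tau \dot\beta(s)\,ds$ and then performing the substitution $u = \zeta(s)$, using the identity $w(\zeta(s)) = v(s)$.

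The main obstacle, as I see it, is confirming that the $\zeta$ so constructed is really a diffeomorphism of $[0,1]$ onto $[0,1]$. Interior positivity $\dot\zeta = f_0(\beta_0)\,\dot\beta_0 > 0$ is immediate from the full support of $f_0$ and the $\dot\beta_0>0$ conclusion of Theorem \ref{thm:char}, but the endpoint conditions $\zeta(0^+) = 0$ and $\zeta(1^-) = 1$ are exactly where the hypothesis $(\beta_0(0), \beta_0(1)) = (-\infty, \infty)$ is consumed, together with $F_0(-\infty) = 0$ and $F_0(\infty) = 1$. Without the range condition the parametrization $\mathcal{T}$ cannot reach systems whose conditional support at $X=0$ is a bounded interval, which also explains why the theorem is stated only in the full-range case.
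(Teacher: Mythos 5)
Your proof is correct and follows essentially the same route as the paper's: differentiate and invoke Theorem \ref{thm:char} for the \emph{if} direction, and for the \emph{only if} direction set $\zeta = F_0 \circ \beta_0$ (the paper allows an arbitrary $\sigma > 0$ and writes $\zeta(\tau) = F_0(\gamma_0 + (\beta_0(\tau)-\gamma_0)/\sigma)$, which reduces to your formula at $\sigma = 1$), then verify the endpoint behavior $\zeta(0^+)=F_0(-\infty)=0$, $\zeta(1^-)=F_0(\infty)=1$ and interior positivity of $\dot\zeta$. You are slightly more explicit than the paper in spelling out the recovery of $w$ as $v\circ\zeta^{-1}$ and checking \eqref{eq:beta} by the change of variables $u=\zeta(s)$; the paper leaves this step implicit, reducing to the assertion that $\beta_0$ alone can be written in the form \eqref{eq:beta0}.
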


\begin{proof}
If $(\beta_0,\beta) =  \mathcal{T}(\gamma_0, \gamma, \sigma, w, \zeta)$ then,
\[
\dot\beta_0(\tau) = \sigma q_0(\zeta(\tau)) \dot\zeta(\tau), ~~\dot\beta(\tau) = \dot\beta_0(\tau) \frac{v(\tau) }{a(v(\tau), \scX) \sqrt{1 + \|v(\tau)\|^2}},
\] 
with $v(\tau) = w(\zeta(\tau))$. Hence, by Theorem \ref{thm:char}, we only need establish that any real, differentiable function $\beta_0$ on $(0,1)$, with $\dot\beta_0(\tau) > 0$ for all $\tau \in (0,1)$ and $\beta_0(0) = \infty$, $\beta_0(1) = \infty$, can be constructed as in \eqref{eq:beta0} for some diffeomorphism $\zeta:[0,1]\to[0,1]$ and $\sigma > 0$. This is indeed true, since one could fix $\sigma > 0$ arbitrarily, and then take,
\[
\zeta(\tau) = F_0\left(\gamma_0 + \frac{\beta_0(\tau)-\gamma_0}\sigma \right),~~\tau \in (0,1),
\]
which is differentiable and monotonically increasing in $(0,1)$ since $\dot\zeta(\tau) = f_0(\gamma_0 + (\beta_0(\tau) - \gamma_0)/\sigma) \dot\beta_0(\tau) > 0$ for all $\tau \in (0,1)$, and, $\zeta(0) = F_0(-\infty) = 0$, $\zeta(1) = F_0(\infty) = 1$.
\end{proof}

When $\beta_0(0)$ or $\beta_0(1)$ is finite (or both), we can still write $\beta_0$ as in \eqref{eq:beta0}, but we need either $\zeta(0) > 0$ or $\zeta(1) < 1$ (or both). While such a $(\beta_0, \beta)$ does not strictly belong within our model space, they can be approximated arbitrarily well by a model element $\mathcal{T}(\gamma_0, \gamma, \sigma, v, \zeta)$, and consistently estimated from large samples (Lemma \ref{le:6} and Section \ref{s:theory}).

%\begin{theorem}
%\label{thm:range}
%Fix any $\delta \in (0, \min(\tau_0, 1 - \tau_0))$, arbitrarily small and let $\beta_0: (0, 1) \to \bbR$ be a real, differentiable function. Then $\dot\beta_0(\tau) > 0$ for all $\tau \in (0, 1)$ if and only if
%\begin{align}
%\beta_0(\tau) - \beta_0(\tau_0) & = \sigma \int_{\zeta(\tau_0)}^{\zeta(\tau)}q_0(u)du = \sigma \{Q_0(\zeta(\tau)) - Q_0(\zeta(\tau_0))\},~~~ \tau \in (0, 1) \label{c1}
%\end{align}
%for some $\sigma > 0$ and some differentiable, monotonically increasing bijection $\zeta$ of $(0, 1)$ into $(0,1)$ with $\zeta(\delta) \le \delta$ and $\zeta(1 - \delta) \ge 1 - \delta$. Moreover, if $\beta_0(0) = -\infty$ and $\beta_0(1) = \infty$, then $\delta$ could be set equal to 0 and \eqref{c1} holds with $\zeta$ a diffeomorphism.\end{theorem}

%For practical purposes, we will focus on model fitting and inference over a dense, trimmed grid of $\tau \in \{t_1,\ldots, t_m\}$ with $t_1 > 0$ and $t_m < 1$. Theorem \ref{thm:range} guarantees that even if the conditional density of $Y$ given $X = 0$ had a finite or half-finite support, the model \eqref{eq:tau0}-\eqref{eq:beta} with $\zeta$ a diffeomorphism from $(0,1)$ onto itself will be adequate over such a grid.

\subsection{Likelihood evaluation}
\label{ss:le}
%It is well known that at any given $\tau \in (0,1)$, the Koenker-Bassett estimates of $\beta_0(\tau)$ and $\beta(\tau)$ are exactly the maximum likelihood estimates of these parameters under the model $Y = \beta_0(\tau) + x^T\beta(\tau) + \sigma\varepsilon$ where $\varepsilon$ is distributed according to the asymmetric Laplace density $f(\epsilon) =\tau(1 - \tau) \exp[-\epsilon\{\tau - I(\epsilon < 0)\}]$. This analogy also served as the jumping off point for early Bayesian implementations of the Koenker-Bassett program \citep{Yu2001, tsionas03}. But the analogy breaks down in the case of joint estimations, because the single-$\tau$ asymmetric Laplace models across $\tau \in (0, 1)$ are logically inconsistent to each other \citep[see][for more details]{Tokdar2012}.
%
%However, a much more natural and theoretically correct likelihood function is available for joint estimation of quantile planes. 
A salient feature of a valid specification of $Q_Y(\tau|x)$ for all $\tau \in (0,1)$ is that it uniquely defines the conditional response density $f_Y(y|x)$ over $x\in \scX$, given by
\[
f_Y(y|x) = \left.\frac{1}{\frac{\partial}{\partial\tau} Q_Y(\tau | x)}\right|_{\tau = \tau_x(y)}
\]
where $\tau_x(y)$ solves $Q_Y(\tau | x) = y$ in $\tau$ \citep{Tokdar2012}.
Consequently, one can define a valid log-likelihood score
\begin{equation}\label{eq8}
\sum_{i}\log f_Y(y_i|x_i)=-\sum_i\log\big\{\dot{\beta}_0\big(\tau_{x_i}(y_i)\big)+
 x_i^T\dot{\beta}\big(\tau_{x_i}(y_i)\big)\big\}.
\end{equation}
in the model parameters based on observations $(x_i, y_i)$, $i = 1, \ldots, n$. From \eqref{eq:tau0}-\eqref{eq:beta}, we could write 
\[
\dot \beta_0(\tau) + x^T \dot \beta(\tau) = \sigma q_0(\zeta(\tau))\dot \zeta(\tau) \left\{1 + \frac{w(\zeta(\tau))}{a(w(\zeta(\tau), \scX)\sqrt{1+ \|w(\zeta(\tau))\|^2}}\right\}
\]
and therefore a quick evaluation of the log-likelihood score is possible once we figure out $\tau_{x_i}(y_i)$ for each $i = 1, \ldots, n$, by solving $\tau = \int_{\tau_0}^\tau \{\dot\beta_0(u) + x_i^T \dot \beta(u)\}du$. 

With enough resources, these numbers could be found up to any desired level of accuracy through standard numerical methods for integration and root finding. But for all practical needs, model fitting and inference could be restricted to a dense grid of $\tau \in \{t_1, \ldots, t_L\}$, for which finding $\tau_{x_i}(y_i)$ requires only a simple sequential search involving trapezoidal approximations to the integral of $\dot\beta_0(\tau) + x^T\dot\beta(\tau)$. Algorithm \ref{algo:loglik} presents a pseudo-code for likelihood evaluation involving only simple matrix and vector multiplication. The code runs extremely fast when implemented in any low-level programming language with quick ``for loops''. In our numerical studies we used a C implementation which offered 1000 likelihood evaluations in 2 seconds on an Intel(R) Core(TM) i7-3770 machine with $n = 1000$, $p = 7$ and a grid over $\tau$ with mesh size 0.01.

%{\it Talk about the tailing grid values.}
A practical issue with a discrete grid of $\tau$ is that it needs to cover the image of the data range mapped into the quantile space, while ensuring the grid length $L$ remains manageable. In our implementations we chose a data dependent grid as follows. We used equispaced grid points between $\tau = 0.01$ and $\tau = 0.99$ with an increment of $0.01$. Next, on the upper tail, we augmented the grid with new grid points $0.995$, $0.9975$, $\cdots$ until we covered $\tau = 1 - 1/(2n)$ where $n$ is the sampler size. Same augmentation strategy with geometrically reducing increment lengths were adopted on the lower tail to reach up to $\tau = 1/(2n)$. 

\relax
\section{Bayesian inference with hierarchical Gaussian process priors}

\label{s:prior}

\subsection{Prior specification}

We adopt a Bayesian approach to parameter estimation with suitable prior distributions on the model parameters, including the function valued parameters $\zeta$ and $w = (w_1, \ldots, w_p)$. It is useful that $w_j$s are completely unrestricted, allowing us to handle them with Gaussian process prior distributions. For handling $\zeta$, we first introduce a constraint free version $w_0:(0,1) \to \bbR$ related to $\zeta$ through the ``logistic transformation'':
\begin{equation}
    \zeta(\tau)=\frac{\int_0^\tau e^{w_0(u)}du}{\int_0^1 e^{w_0(u)}du}, \;\;\tau \in (0,1),
    \label{eq:zeta0}
\end{equation}
and use a Gaussian process prior on $w_0$; see \cite{Lenk88, tokdar07} for similar uses in density estimation. 

Recall that a Gaussian process $g = \{g(\tau): \tau \in (0,1)\}$ could be viewed as a random element of the Banach space of real valued functions on $(0,1)$ equipped with the supremum norm. Every Gaussian process $g$ is characterized by two functions, the mean function $m(\tau) = {\rm E} g(\tau)$ and the non-negative definite covariance function $c(\tau, \tau') = {\rm Cov}(g(\tau), g(\tau'))$, and we use the label $GP(m, c)$ to denote such a process. When $g \sim GP(m, c)$, for any finite set of points $\{\tau_1, \ldots, \tau_k\}$ the random vector $(g(\tau_1), \ldots, g(\tau_k))$ has a $k$-variate Gaussian distribution with mean $(m(\tau_1), \cdots, m(\tau_k))^T$ and $k\times k$ covariance matrix with elements $c(\tau_i, \tau_j)$.

Our prior specification can be expressed in the following hierarchical form:
\begin{align}
w_j & \sim GP(0, \kappa_j^2 c^{\textit{\tiny SE}}(\cdot,\cdot|\lambda_j)),~~j = 0,\ldots,p \label{eq:w}\\
(\kappa_j^2, \lambda_j) & \sim \pi_k(\kappa_j^2)\pi_\lambda(\lambda_j), ~~ j = 0,\ldots, p\\
(\gamma_0, \gamma, \sigma^2) & \sim \pi(\gamma_0,\gamma,\sigma^2) \propto \frac1{\sigma^2},
\end{align}
where $c^{\textit{\tiny SE}}(\tau, \tau'|\lambda^2) =  \exp(-\lambda^2 (\tau - \tau')^2)$ is the so-called square exponential covariance function equipped with a rescaling parameter $\lambda$ \citep{vanderVaart&vanZanten08}. This particular choice of the covariance function is motivated by two facts. First, for any fixed $\lambda > 0$, the probability distribution $GP(0, c^{\textit{\tiny SE}}(\cdot,\cdot|\lambda))$ assigns 100\% probability to the set of all continuous functions on $(0,1)$ and hence our prior specification does not {\it a-priori} rule out any valid specification of the joint linear QR model. Second, $\lambda$ plays the role of a bandwidth parameter for the sample paths generated from $GP(0, c^{\textit{\tiny SE}}(\cdot,\cdot|\lambda))$, with more wavy paths realized as $\lambda$ gets larger. In a seminal work, \cite{vanderVaart&vanZanten09} show that with a suitable prior distributions specified on $\lambda$, the resulting rescaled square-exponential Gaussian process prior offers adaptively efficient estimation in nonparametric mean regression and density estimation problems by automatically adjusting $\lambda$ to attain optimal smoothing.

For specifying $\pi_\lambda$, it is more insightful to fix a small $h > 0$ and consider the quantity $\rho_h(\lambda) = \exp(-h^2\lambda^2)$, which gives the correlation between $w_j(\tau)$ and $w_j(\tau + h)$ given $\lambda_j = \lambda$, and assign $\rho_h(\lambda)$ a $Be(a_\lambda,b_\lambda)$ prior. In our applications we use $h = 0.1$, $a_\lambda = 6$ and $b_\lambda = 4$, which assigns 95\% mass to $\rho_{0.1}(\lambda) \in (0.3, 0.86)$. However, in our experience, the method shows little sensitivity to these choices. We take $\pi_\kappa$ to be $IG(a_\kappa, b_\kappa)$, the inverse gamma pdf with shape $a_\kappa$ and rate $b_\kappa$. The inverse gamma choice allows us to integrate out all $\kappa_j$ parameters at the time of model fitting. In our applications, we use $a_\kappa = b_\kappa = 3/2$, which is small enough to ensure a reasonably diffuse marginal prior on each $w_j$ while retaining a finite second moment. 

Our choice of $\pi_\kappa$ and the right Haar prior on the location scale parameters $(\gamma_0, \gamma, \sigma^2)$ is partially motivated by our numerical experimentations in which we found these choices to lead to estimates and credible intervals most similar to the Koenker-Basette estimates and confidence intervals. Other reasonable choices could be made and we discuss in Section \ref{s:dis} choices that offer useful shrinkage properties. 

When no special information is available about the support of $Y$, we take $f_0$ to be a Student t-distribution with an unknown degrees of freedom parameter $\nu$ and assign $\nu/6$ a standard logistic prior distribution. The logistic prior is reasonably diffuse and helps the resulting method adapt well to a wide spectrum of tail behavior of the response distribution.

\subsection{Model fitting via discretization and adaptive blocked Metropolis} 
\label{s:comp}

With likelihood evaluation discretized over a grid of $\tau$ values $\{t_1, \ldots, t_L\}$ as in Algorithm \ref{algo:loglik}, the curve valued parameters $w_j$, $1\le j \le p$ are needed to be tracked only over the specified grid, reducing each curve to a parameter vector of length $L$. The same applies to $w_0$ from which $\dot\zeta$ and $\zeta$ could be obtained on the grid by using the trapezoidal rule of integration. While it is theoretically possible to fit the model by running a Markov chain Monte Carlo over these parameters vector and the other model parameters, such a strategy is not entirely practicable. The parameter vector derived from any $w_j$ is conditionally an $L$ dimensional Gaussian variable given $\lambda_j$ and $\kappa_j$, and evaluating its log prior density requires factorizing or inverting a $L\times L$ covariance matrix which has an $O(L^3)$ computing complexity. Furthermore, a Markov chain sampler that operates on both these parameter vectors and the rescaling parameters $\lambda_j$s run into serious mixing problems. 

To overcome these difficulties, we use two sets of further discretization. First, we replace $\pi_\lambda$ with a dense, discrete approximation covering the range $\rho_{0.1}(\lambda) \in (0.05, 0.95)$. Let $\pi^*_\lambda$ denote the approximating probability mass function with support points $\{\lambda^*_1, \ldots, \lambda^*_G\}$. We choose the support points to be more densely packed for smaller $\lambda$ values, the rational behind this and the exact manner in which the grid is chosen are discussed in Appendix \ref{a:support}. 

Next, we fix a set of uniformly spaced knots $\{t^*_1, \ldots, t^*_m\} \subset[0,1]$, for some $m$ much smaller than $L$ and replace each $w_j$ curve with
\begin{equation}
\tilde w_j(\tau) := E\{w_j(\tau) | w_j(t^*_1), \ldots, w_j(t^*_m)\}, \tau \in (0,1),
\label{eq:pp}
\end{equation}
which provides an interpolation approximation to $w_j$ over $(0,1)$, passing through the points $(t^*_k, w_j(t^*_k))$, $k = 1, \ldots, m$, and determined entirely by the $m$-dimensional vector $W_{j*} = (w_j(t^*_1), \ldots, w_j(t^*_m))^T$, whose prior density evaluations require only $O(m^3)$ flops. Such interpolation based low rank approximations to Gaussian process priors are widely used in statistics and machine learning literature, see for example, \citet{snelson&ghahramani06, tokdar07, banerjee&etal08}. 

Our treatment here, however, differs slightly from the above papers in that we carry out the conditional expectation in \eqref{eq:pp} after marginalizing out both $\lambda_j$ and $\kappa_j$. Let $\tilde W_j$ denote the $L$-dimensional vector $(\tilde w_j(t_1), \ldots, \tilde w_j(t_L))^T$ that is needed for the likelihood evaluation. Then we can write,
\[
\tilde W_j = \sum_{g = 1}^G p_g(W_{j*})A_g W_{*j}
\]
where $A_g$ denotes the $L\times m$ matrix $C_{o *}(\lambda_g)C_{**}(\lambda_g)^{-1}$ with $C_{o*}(\lambda_g) = ((c^{\textit{\tiny SE}}(t_l, t^*_k|\lambda_g)))_{l,k=1}^{L,m}$ and $C_{**}(\lambda_g) = ((c^{\textit{\tiny SE}}(t^*_l, ^*t_k|\lambda_g)))_{l,k=1}^{m}$, and $p_g(W_{j*}) \propto \pi^*_\lambda(\lambda_g)p(W_{j*} | \lambda_g)$ with
\[
p(W_{j*} | \lambda_g) \propto \pi^*_\lambda(\lambda_g)\left\{1 + \frac{W_{j*}^TC^{-1}_{**}(\lambda_g)W_{j*} }{2b_\kappa}\right\}^{-(a_\kappa + m/2)} \frac{\Gamma(a_\kappa + m/2)b_\kappa^{-m/2}}{\Gamma(a_\kappa)},
\]
the multivariate t-density of $W_{j*}$ given $\lambda_j = \lambda_g$. Also notice that the marginal prior density of $W_{j*}$ is precisely $\sum_{g = 1}^G \pi^*_\lambda(\lambda_g) p(W_{j*}|\lambda_g)$.

With the help of the above sets discretization, our joint QR model is entirely determined by the $(m+1)(p+1)+2$ dimensional parameter vector $\theta = (W_{0*}^T, \ldots, W_{p*}^T, \gamma_0, \gamma^T, \sigma^2, \nu)^T$ and model fitting may be carried out by running a Markov chain sampler on $\theta$ followed by Monte Carlo approximations of posterior quantities. In our experience, an adaptive blocked Metropolis sampler has worked extremely well, offering fast mixing and reproducible results. For this sampler, we use $p+3$ block updates of $\theta$ per iteration of the sampler, where the first $p+1$ blocks are given by $(W_{j*}^T, \gamma_j)^T$, $j = 0,\ldots,p$ and the last two blocks are $(\gamma_0, \gamma^T)^T$ and $(\log \sigma^2, \log \nu)^T$. For each block, we perform a random walk Metropolis update governed by a multivariate Gaussian proposal distribution centered at the current realization of the block and with covariance that is slowly adapted to resemble, up to a scaler multiplication, the posterior covariance matrix of the block, where the scaler multiplier is also adapted slowly to achieve a pre-specified acceptance rate. We carry out these updates according to Algorithm 4 in \citet{andrieu.thoms}.

In our implementation, we precompute and save the matrix $A_g$ and a Cholesky factor $R_g$ of $C_{**}(\lambda_g)$ for every $g = 1, \ldots, G$ and plug them into the likelihood and prior density evaluations during Markov chain sampling. The precomputation step adds little overhead cost but  results in a big jump in computing speed by drastically reducing the computing time for each Markov chain iteration.

%In particular, we will use the knot based approximation used by \cite{Tokdar07} for density estimation with logistic Gaussian process priors and by \cite{banerjee&etal08} in spatial statistics; the latter coined it as the ``predictive process approximation''. The idea is to fix a subset $\{{t^*}_1, \ldots, {t^*}_k\}$ of the original grid $\{t_1, \ldots, t_m\}$ as a ``set of knots'' and to replace the random curve $w_j$ (given $\lambda_j$, $\kappa_j$) with the rank-$k$ random curve ${\tilde w}_j$ given by ${\tilde w}_j(\tau) = {\rm E} \{w_j(\tau) | w_j(t^*_1), \ldots, w_j(t^*_k), \lambda_j, \kappa_j\}$. Let $\tilde W_j$ denote the correpsonding replacement of $W_j$ and note that from multivariate Gaussian distribution theory, we can write $\tilde W_j = A(\lambda_j, \kappa_j) Z_j$ where $Z_j$ is $k$-variate standard Gaussian and $A(\lambda_j, \kappa_j) = C_{\rm{o} *}(\lambda_j, \kappa_j) \{R_{**}(\lambda_j, \kappa_j)\}^{-1}$ with $ C_{\rm{o} *}(\lambda_j, \kappa_j)$ denoting the $m \times k$ matrix with elements $c_j(t_l, {t^*}_i)$, $l = 1, \ldots, m$, $i = 1, \ldots, k$ and $R_{**}(\lambda_j, \kappa_j)$ denoting the upper triangular Cholesky factor of the $k \times k$ covariance matrix $C_{**}(\lambda_j, \kappa_j)$ with elements $c_j({t^*}_l, {t^*}_i)$, $l,i = 1, \ldots, k$. Computation with $\tilde W_j$ is far less demanding, needing only $O(n k^2)$ flops. Consequently, model fitting with predictive process approximation scales linearly in $m$, the original grid length.

\relax
\section{Posterior consistency}
\label{s:theory}

Frequentist justification of Bayesian methods are often presented in the form asymptotic properties of the posterior distribution. A basic desirable property is posterior consistency: the posterior mass assigned to any fixed neighborhood of the true data generating model element should converge to 1 in probability or almost surely as sample size goes to infinity. More refined evaluations of asymptotic properties emerge through posterior convergence rate calculations, where one considers a sequence of shrinking neighborhoods and calibrates the fastest rate of shrinkage for which the posterior mass assigned to these neighborhoods still converges to 1. 

We restrict only to a study of {\it weak} posterior consistency of the Gaussian process based QR method developed in this paper. For a formal treatment, we consider a stochastic design setting where $X_i$s are drawn independently from a pdf $f_X$ on $\scX$. Since any valid specification of the quantile planes $\{Q_Y(\tau|x)$: $\tau \in (0,1)$, $x \in \scX\}$ uniquely corresponds to a specification of conditional response densities $\{f_Y(y|x) : y\in \bbR, x \in \scX\}$, it also uniquely corresponds to a bivariate density function $f(x, y) = f_X(x)f_Y(y|x)$ under the stochastic design assumption. Hence our prior specification on the quantile planes induces a prior probability measure $\Pi$ on the space $\mathcal{F}$ of probability density functions on $\scX \times \bbR$. If $f^*(x,y) = f_X(x)f^*_Y(y|x)$ is the true data generating element in this space, then the posterior is said to be weakly consistent at $f^*$ if $\Pi(U|(X_i,Y_i), i = 1, \ldots, n) \to 1$ almost surely for every weak neighborhood $U$ of $f^*$ in $\mathcal{F}$. 

The celebrated Schwartz Theorem \citep{Schwartz1965} provides a fairly sharp sufficient condition for weak posterior consistency of $\Pi$ at $f^*$. Let $d_{KL}(p,q) := \int p \log (p/q)$ denote the Kullback-Leibler (KL) divergence. For any $f\in\mathcal{F}$ and $\epsilon > 0$, let $K_{\epsilon}(f)$ denote the $\epsilon$-KL neighborhood $\{g\in\mathcal{F}: d_{KL}(f,g)<\epsilon\}$. We say that $f^*$ is in the KL support of $\Pi$ if $\Pi(K_{\epsilon}(f^*))>0$ for all $\epsilon > 0$. \citet{Schwartz1965} proved
\begin{theorem}[Schwartz]\label{thm:2}
The posterior is weakly consistent at $f^* \in \mathcal{F}$ if $f^*$ is in the KL support of $\Pi$.
\end{theorem}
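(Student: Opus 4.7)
The plan is to deploy the classical Schwartz argument, which in its weak-topology form reduces to two essentially independent tasks: lower-bounding the denominator $D_n := \int \prod_{i=1}^n (f(Z_i)/f^*(Z_i))\, d\Pi(f)$ using the KL support hypothesis, and upper-bounding the numerator $N_n(U^c) := \int_{U^c} \prod_{i=1}^n (f(Z_i)/f^*(Z_i))\, d\Pi(f)$ using exponentially consistent tests that, for weak neighborhoods, come essentially for free. Here $Z_i = (X_i, Y_i)$ are i.i.d.\ under $f^*$ and $\Pi(U^c \mid Z_1, \ldots, Z_n) = N_n(U^c)/D_n$, so the theorem follows once both bounds are in hand.

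For the denominator I would fix $\beta > 0$ and pick $\epsilon < \beta/2$; the KL support condition gives $\Pi(K_\epsilon(f^*)) > 0$. Restricting the $d\Pi(f)$-integral to $K_\epsilon(f^*)$ and applying Jensen's inequality to $\log$ reduces matters to the strong law of large numbers for the i.i.d.\ sequence $\int_{K_\epsilon(f^*)} \log(f(Z_i)/f^*(Z_i))\, d\Pi(f)$, whose mean is at least $-\epsilon\, \Pi(K_\epsilon(f^*))$ by Fubini and the definition of $K_\epsilon(f^*)$. This yields $D_n \geq \Pi(K_\epsilon(f^*))\, e^{-\beta n}$ eventually almost surely.

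For the numerator I exploit that every weak neighborhood $U$ of $f^*$ contains a finite intersection of subbasic sets of the form $\{f : |\int g_j(f - f^*)| < \delta\}$ with $g_j$ bounded continuous. For each such direction define the test $\phi_n^{(j)} = \mathbf{1}\{|n^{-1}\sum_i g_j(Z_i) - \int g_j f^*| > \delta/2\}$; by Hoeffding's inequality applied to the bounded variables $g_j(Z_i)$, the type-I error $E_{f^*}\phi_n^{(j)}$ and, for every $f \in U^c$ failing the $j$-th constraint, the type-II error $E_f(1 - \phi_n^{(j)})$ both decay exponentially in $n$. Taking $\phi_n = \max_j \phi_n^{(j)}$ and splitting $N_n(U^c) = \int_{U^c} \phi_n L_n\, d\Pi + \int_{U^c}(1-\phi_n) L_n\, d\Pi$ with $L_n(f) = \prod_i f(Z_i)/f^*(Z_i)$, Fubini, Markov, and Borel--Cantelli yield $N_n(U^c) \leq e^{-\eta n}$ eventually almost surely for some $\eta > 0$. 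Choosing $\beta < \eta$ in the denominator step gives $\Pi(U^c \mid Z_1, \ldots, Z_n) \leq e^{(\beta - \eta)n}/\Pi(K_\epsilon(f^*)) \to 0$ almost surely.

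The main subtle point is the measurability hidden in the Fubini/SLLN step for the denominator: one must ensure that $(f, \omega) \mapsto \log(f(Z_1(\omega))/f^*(Z_1(\omega)))$ is jointly measurable so that $\int_{K_\epsilon(f^*)} \log(f(Z_i)/f^*(Z_i))\, d\Pi(f)$ is a genuine random variable to which SLLN applies, and symmetrically one must verify that the SLLN null set can be taken independent of $f$ for $\Pi$-a.e.\ $f$. Under the standard setup where $\mathcal{F}$ is a separable metric space of densities carrying a Borel prior with jointly measurable density evaluation---properties inherited by the Gaussian-process construction used in this paper---these checks are automatic, and no new ideas beyond those in \citet{Schwartz1965} are required.
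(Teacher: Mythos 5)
The paper does not prove this statement; it is quoted as a known result with attribution to \citet{Schwartz1965}, and the authors rely on it as a black box to reduce the consistency question to the Kullback--Leibler support condition verified in Theorem~\ref{th:2}. Your sketch fills in the standard proof of Schwartz's theorem, and in outline it is correct: a Jensen/SLLN lower bound for the prior--predictive denominator using the KL support hypothesis, combined with uniformly exponentially consistent tests for the complement of a weak neighborhood, followed by the Markov/Borel--Cantelli bookkeeping. This is essentially the argument in Schwartz's original paper and in the standard textbook treatments, so there is nothing to compare against within this paper itself.

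One detail worth tightening: when you write that the i.i.d.\ variable $\int_{K_\epsilon(f^*)} \log\bigl(f(Z_i)/f^*(Z_i)\bigr)\,d\Pi(f)$ ``has mean at least $-\epsilon\,\Pi(K_\epsilon(f^*))$,'' the Jensen step should be applied to the \emph{normalized} restriction $\Pi(\cdot \cap K_\epsilon(f^*))/\Pi(K_\epsilon(f^*))$; the unnormalized form as stated does not give a valid Jensen inequality. After normalizing, the mean bound becomes $\ge -\epsilon$, and one recovers $D_n \ge \Pi(K_\epsilon(f^*))\,e^{-\beta n}$ exactly as you claim. You should also note (to justify the Fubini and SLLN steps) that $E_{f^*}\bigl[\log^+\!\bigl(f(Z)/f^*(Z)\bigr)\bigr] \le 1$ for every density $f$, which combined with $d_{KL}(f^*,f) < \epsilon$ on $K_\epsilon(f^*)$ gives absolute integrability of $\log\bigl(f(Z)/f^*(Z)\bigr)$ jointly in $(f,Z)$. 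With those two small repairs your proof is complete.
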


We show that an $f^*$ with linear conditional quantiles $Q^*_Y(\tau|x) = \beta^*_0(\tau) + x^T\beta^*(\tau)$ belongs to the KL support of $\Pi$ under mild smoothness and tail conditions. Tail conditions are needed to ensure that $d_{KL}(f^*_Y(\cdot|x), f_Y(\cdot|x)) < \infty$, which holds when $f^*_Y(\cdot|x)$ has tails decaying faster than those of $f_Y(\cdot|x)$, with $f$ generated from $\Pi$. With our choice of $\Pi$, the tails of $f_Y(\cdot|x)$ are expected to be similar to those of $f_0$, and hence, a minimum requirement is that the tails of $f^*_Y(\cdot|x)$ decay faster than those of $f_0$. We make the notion of faster tail decay more precise with the following definitions. 
\begin{defn}
Let $f$ be a probability density function on $\bbR$ with quantile function $Q$. Take $m = Q(\tau_0)$. All statements below are interpreted with respect a given $f_0$.
\begin{enumerate}
\item We say $f$ has a type I left tail if $Q(0) > -\infty$, and, for every $\sigma > 0$, 
\begin{equation}
\frac{\frac1\sigma f_0(m + \frac{Q(t) - m}\sigma)}{f(Q(t))} \to c_L(\sigma) \in (0, \infty), ~\mbox{as}~t \downarrow 0,
\label{eq:type1}
\end{equation} 
with, $c_L(\sigma) \to 0$ as $\sigma \downarrow 0$.
\item We say $f$ has a type II left tail if for every $\sigma > 0$, ${\frac1\sigma f_0(m + \frac{Q(t) - m}\sigma)}/{f(Q(t))} $ diverges to $\infty$ as $t \downarrow 0$ and,
\begin{equation}
u_L(\sigma) :=  \inf\left\{t > 0: \frac{\frac1\sigma f_0(m + \frac{Q(t) - m}\sigma)}{f(Q(t))} \le 1\right\} > 0,
\label{eq:type2}
\end{equation}
with, $u_L(\sigma) \to 0$ as $\sigma \downarrow 0$.
\item Same definitions apply to the right tail, with, $Q(1 - t)$ replacing $Q(t)$ in \eqref{eq:type1}, \eqref{eq:type2}, and $c_R$ and $u_R$ denoting the right tail counterparts of $c_L$ and $u_L$.
\end{enumerate}
\end{defn}
Recall that we have taken $f_0 = f_0(\cdot|\nu) = t_\nu$ with a prior on $\nu \in (0,\infty)$. Notice that an $f$ has a type I left tail with respect to any $t_\nu$, when ${\rm supp}(f)$ is bounded from below, which is same as saying $Q(0) > -\infty$, and, $f(y)$ is bounded away from zero near $Q(0)$. If $Q(0) > -\infty$ but $f(y) \to 0$ as $y \to Q(0)$ then $f$ has a type II left tail with respect to any $t_\nu$. If $Q(0) = -\infty$ and $f(y)$ decays to zero as $y \to -\infty$ at a polynomial or faster rate, then, $f$ has a type II left tail with respect to $t_\nu$ for all $\nu > 0$ sufficiently small. It is straightforward to see that $d_{KL}(f, f_0) < \infty$ whenever $f$ has tails that are type I or type II with respect to $f_0$.

It turns out that a type I or II tail condition on $f^*_Y(\cdot|0)$, coupled with some regularity conditions on $\beta_0,\beta$ are all that is needed to ensure consistency. Here is a precise statement.

\begin{theorem}\label{th:2}
Suppose $\beta^*_0$, $\beta^*$ are differentiable on $(0,1)$. Also assume $\dot\beta^* / \dot\beta^*_0$ can be extended to a continuous function on $[0,1]$, and, there exists a $c_0 > 0$ such that $\dot\beta^*_0(t) + x^T\dot\beta^*(t) \ge c_0 \dot\beta^*_0(t)$ for all $t \in (0,1)$. Then $f^*$ belongs to the KL support of $\Pi$ whenever $f^*_Y(\cdot|0)$ has type I or II tails with respect to $t_\nu$ for all small enough $\nu > 0$.
%Suppose the true quantile function $Q^*(\tau|x)$ and the associated conditional pdf $f^*(y|x)$ satisfies two conditions:
%\begin{align*}
% &Q^*(y|x)=\beta_0^*(\tau)+\beta^*(\tau)^Tx, \text{ with $\beta_0^*(\tau)$ and $\beta^*(\tau)$ differentiable in $\tau$};\\
% &\text{For some $x_0\in \mathcal{B}_p$, $f^*(y|x_0)$ has a compact support in $y$}.
%\end{align*}
%Moreover, assume that $f_0$ is continuous and upper bounded by some constant $K_0$. Assume that the support of the associated pdf $f_0$ has nonempty interior containing $0$.
%If we put inverse gamma prior for $\sigma^2$, normal prior for $\gamma_0$, $\gamma_j$ for $j=1,\ldots,p$, and put independent identically distributed Gaussian processes priors for $\omega_0,\ \omega_1,\ldots,\omega_p$, then $f^*$ satisfies $f^*\in\text{KL}(\Pi)$.
\end{theorem}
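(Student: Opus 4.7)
By the Schwartz theorem (Theorem~\ref{thm:2}), it suffices to verify that $\Pi(K_\epsilon(f^*)) > 0$ for every $\epsilon > 0$. Since $\Pi$ is the pushforward of the priors on $(\gamma_0,\gamma,\sigma^2,\nu,w_0,w_1,\ldots,w_p)$ under the parametrization of Section~\ref{s:necsuff}, my plan is to (a) identify a target parameter $\theta^\dagger$ whose induced density $f^\dagger$ satisfies $d_{KL}(f^*,f^\dagger) < \epsilon/2$, (b) exhibit a neighborhood $U$ of $\theta^\dagger$ carrying positive prior mass, and (c) show that $\theta \mapsto f_\theta$ is continuous in KL distance at $\theta^\dagger$, so $U$ can be shrunk to lie inside $K_\epsilon(f^*)$.

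\textbf{Step 1: targeting.} Theorem~\ref{thm:char} applied to $(\beta_0^*, \beta^*)$ yields a $p$-variate $v^*(\tau) = c^*(\tau)\dot\beta^*(\tau)/\|\dot\beta^*(\tau)\|$. The continuous extendability of $\dot\beta^*/\dot\beta_0^*$ to $[0,1]$, together with the uniform bound $\dot\beta_0^*(t) + x^T\dot\beta^*(t) \ge c_0\dot\beta_0^*(t)$ over $x \in \scX$ (equivalent to $\|v^*\|/\sqrt{1+\|v^*\|^2} \le 1 - c_0$), forces $v^*$ to be continuous and uniformly bounded on $[0,1]$. Set $\gamma_0^\dagger = \beta_0^*(\tau_0)$ and $\gamma^\dagger = \beta^*(\tau_0)$, and pick $\nu^\dagger > 0$ small enough that the type I or II condition holds with $f_0 = t_{\nu^\dagger}$. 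In the full-support case $\beta_0^*(0) = -\infty$, $\beta_0^*(1) = +\infty$, Theorem~\ref{thm:model} provides an exact parametrization with $\zeta^\dagger(\tau) = F_{0,\nu^\dagger}(\gamma_0^\dagger + (\beta_0^*(\tau) - \gamma_0^\dagger)/\sigma^\dagger)$ and $w^\dagger = v^* \circ (\zeta^\dagger)^{-1}$ for any $\sigma^\dagger > 0$. In the finite-endpoint case, the remark following Theorem~\ref{thm:model} and the announced Lemma~\ref{le:6} supply an approximation: choose $\zeta^\dagger$ with $\zeta^\dagger(0) > 0$ and/or $\zeta^\dagger(1) < 1$, extend $\beta_0^\dagger$ smoothly past the true support, and use \eqref{eq:type1} or \eqref{eq:type2} to absorb the resulting tail discrepancy into KL distance $O(\epsilon)$. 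Finally, recover $w_0^\dagger$ from $\dot\zeta^\dagger$ through \eqref{eq:zeta0}.

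\textbf{Step 2: prior positivity.} For each $j = 0,\ldots,p$, the rescaled squared-exponential Gaussian process $GP(0, \kappa_j^2 c^{\textit{\tiny SE}}(\cdot,\cdot|\lambda_j))$ equipped with the stated priors on $(\kappa_j,\lambda_j)$ has full topological support in $C([0,1])$ under the supremum norm \citep{vanderVaart&vanZanten08, vanderVaart&vanZanten09}, so $P(\|w_j - w_j^\dagger\|_\infty < \eta) > 0$ for every $\eta > 0$. The improper right-Haar prior on $(\gamma_0,\gamma,\sigma^2)$ and the transformed logistic prior on $\nu$ have positive density at the target, hence any Euclidean neighborhood of $(\gamma_0^\dagger, \gamma^\dagger, \sigma^\dagger, \nu^\dagger)$ also carries positive prior mass. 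Multiplying yields $\Pi(U) > 0$ for the product neighborhood.

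\textbf{Step 3 and main obstacle.} The technical heart is KL-continuity. From \eqref{eq8}, $f_{Y,\theta}(y|x) = [\dot\beta_0(\tau_x(y)) + x^T\dot\beta(\tau_x(y))]^{-1}$, which depends continuously on the $w_j$'s in sup norm, uniformly for $y$ in any compact set. The delicate part is the tails: the integrand $f^*_Y(y|x)\log[f^*_Y(y|x)/f_{Y,\theta}(y|x)]$ must remain integrable with uniformly small tail contribution. This is exactly what the type I and type II hypotheses deliver at $x = 0$: type I matches the tails of $f^*_Y(\cdot|0)$ to those of a rescaled $t_{\nu^\dagger}$ up to a constant $c_L(\sigma), c_R(\sigma)$ tuneable by $\sigma$, while type II delivers eventual domination of $f^*_Y$ by the rescaled $t_{\nu^\dagger}$ beyond thresholds $u_L(\sigma), u_R(\sigma)$ that shrink to zero. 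Propagating these pointwise-at-$x = 0$ conditions to a uniform-in-$x \in \scX$ KL bound requires the boundedness of $\scX$ and the uniform control on $v^*$ and on the slope-correction factor $v^*/[a(v^*,\scX)\sqrt{1+\|v^*\|^2}]$ secured in Step~1; this is where the bulk of the technical work concentrates. Averaging the conditional KL bound against $f_X$ then yields $d_{KL}(f^*, f_\theta) < \epsilon$ for all $\theta \in U$ when the sup-norm and Euclidean radii of $U$ are taken small enough, completing the verification.
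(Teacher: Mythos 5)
Your high-level skeleton (Schwartz theorem, target an approximant $f^\dagger$, positive prior mass via GP full support, then shrink the neighbourhood) does mirror the structure of the paper's Appendix A, but there are two genuine gaps — one an error, one a substantial omission.

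\textbf{Step 1 is incorrect in the full-support case.} You assert that when $\beta_0^*(0) = -\infty$, $\beta_0^*(1) = +\infty$, Theorem~\ref{thm:model} gives an \emph{exact} target $\zeta^\dagger(\tau) = F_0(\gamma_0^\dagger + (\beta_0^*(\tau)-\gamma_0^\dagger)/\sigma^\dagger)$ for arbitrary $\sigma^\dagger$, and that $w_0^\dagger$ can then simply be read off from $\dot\zeta^\dagger$. But in this situation $Q^*_Y(0|0) = -\infty$, so by Definition~1 the left tail cannot be type~I, hence it must be type~II, which forces $\dot\zeta^\dagger(t) = \frac{\sigma^{-1}f_0(\gamma_0^\dagger + (\beta_0^*(t)-\gamma_0^\dagger)/\sigma^\dagger)}{f_Y^*(\beta_0^*(t)|0)} \to \infty$ as $t \downarrow 0$. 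Then $w_0^\dagger = \log\dot\zeta^\dagger + \mathrm{const}$ is unbounded, and $\{\|w_0 - w_0^\dagger\|_\infty < \eta\}$ is the empty event under a GP prior whose paths are bounded on $[0,1]$. Step~2's positivity argument therefore fails for that block. The paper's Lemma~\ref{le:6} avoids this by \emph{always} truncating: it splices a quadratic onto $\zeta^*$ on $[0,\delta_L]\cup[1-\delta_R,1]$ with $\delta_L,\delta_R$ chosen from $c_L(\sigma^*)$ or $u_L(\sigma^*)$ (and right-tail analogues) so that the modified $\dagg\zeta$ has $\|\log\dot{\dagg\zeta}\|_\infty < \infty$ \emph{and} the truncation costs at most $\delta$ in KL. The type~I/II conditions are doing quantitative work there, not merely ensuring $d_{KL}(f^*,f_0) < \infty$; your sketch does not use them quantitatively.

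\textbf{Step 3 is only a statement of what must be shown, not an argument.} You observe that tails are the hard part and that one must propagate the $x=0$ tail conditions to a uniform-in-$x$ bound, but you do not supply the mechanism. The paper does not prove ``KL-continuity of $\theta \mapsto d_{KL}(f^*,f_\theta)$'' directly. Instead it exploits the additive identity $d_{KL}(f^*,f) = d_{KL}(f^*,\dagg f) + \int f^*\log(\dagg f/f)$, bounds the first term by Lemma~\ref{le:6}, and controls the second via a three-way decomposition $\log\frac{\dagg f_Y}{f_Y} = \log\frac{\dagg f_Y}{f^{e\dagger}_Y} + \log\frac{f^e_Y}{f_Y} + \log\frac{f^{e\dagger}_Y}{f^e_Y}$ through two ``identity-$\zeta$'' intermediaries, where the first two ratios reduce to $\log\dot{\dagg\zeta}$ and $\log\dot\zeta$ by Lemma~\ref{lem:aux} (hence bounded by $2b$), and the third is handled by Lemmas~\ref{lem:aux2}--\ref{lem:aux3}, splitting $\bbR$ into $[Q^*_Y(\epsilon|x), Q^*_Y(1-\epsilon|x)]$ and its complement. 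None of this machinery is in your sketch, and without it the claim ``the radius can be taken small enough'' is unsupported: for a density model parametrized through a diffeomorphism of $(0,1)$, supremum-norm closeness of $w_j$'s does not by itself control the tails of $\log(f_\theta/f_{\theta^\dagger})$.
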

A proof is given in Appendix \ref{a:proof4}. The two regularity conditions on $(\dot\beta^*_0, \dot\beta^*)$ ensure that the conditional density functions do not exhibit pathological behaviors in the tails. Notice that the basic validity assumption $\dot\beta^*_0(t) + x^T\dot\beta^*(t) > 0$ for all $t \in (0,1)$ automatically guarantees that $\dot\beta^*(t)/\dot\beta^*_0(t)$ is bounded for all $t$. To see this, notice that $\scX$ must contain an open ball of radius $r > 0$ around origin which is an interior point. So, for any $t \in (0,1)$ with $\dot\beta^*(t) \ne 0$, $u := -r\dot\beta^*(t)/\|\dot\beta^*(t)\| \in \scX$, and hence, $0 \le \dot\beta^*_0(t) + u^T\dot\beta^*(t) = \dot\beta^*_0(t) - r \|\dot\beta^*(t)\|$, and hence, $\|\dot\beta^*(t) / \dot\beta^*_0(t)\| \le 1/ r$.

\relax
\section{Numerical Experiments}
\label{s:simu1}

\subsection{A small experiment with a triangular $\scX$.}

\begin{figure}[htp]
\includegraphics[scale = 0.4]{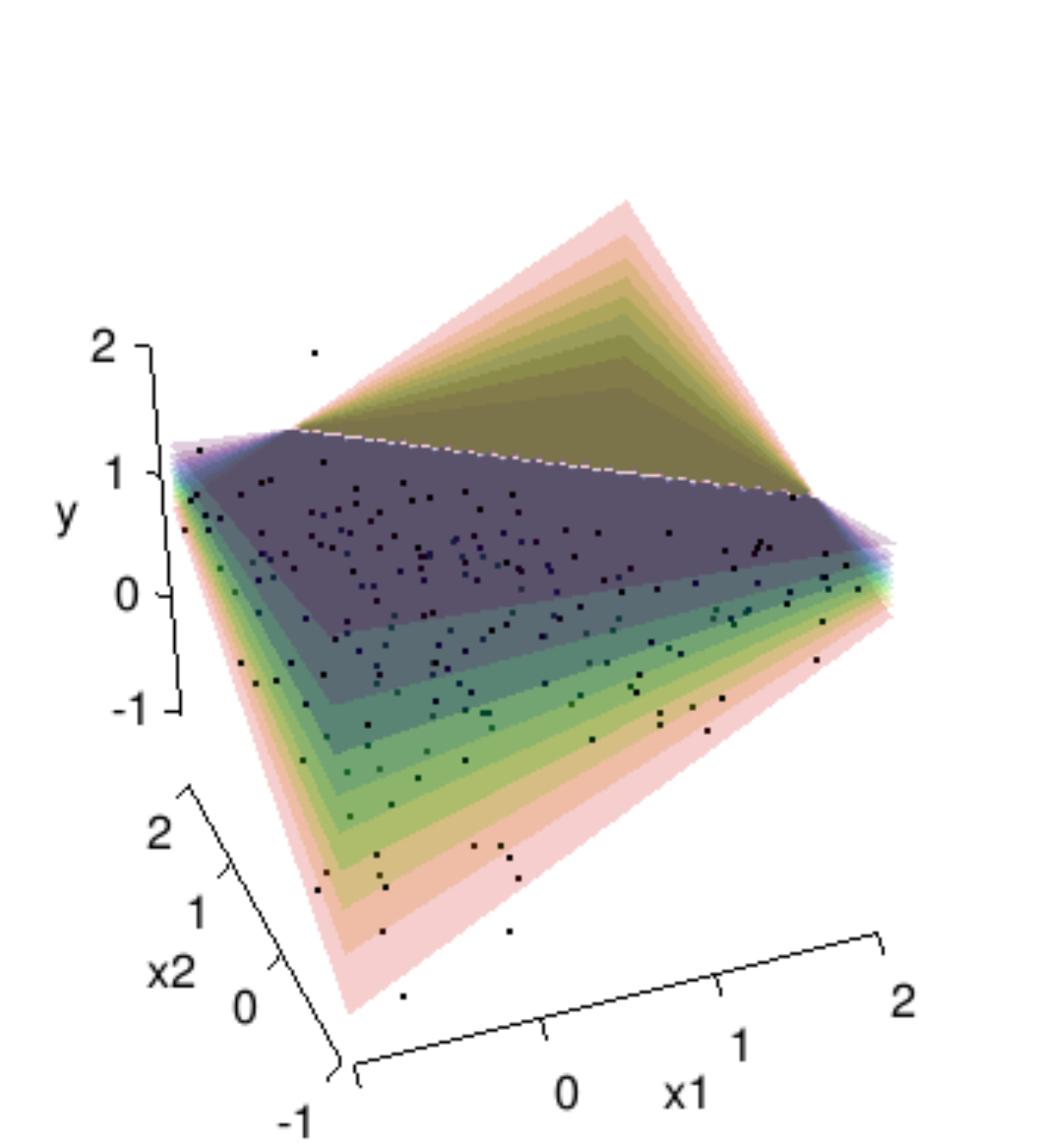}
\includegraphics[scale = 1]{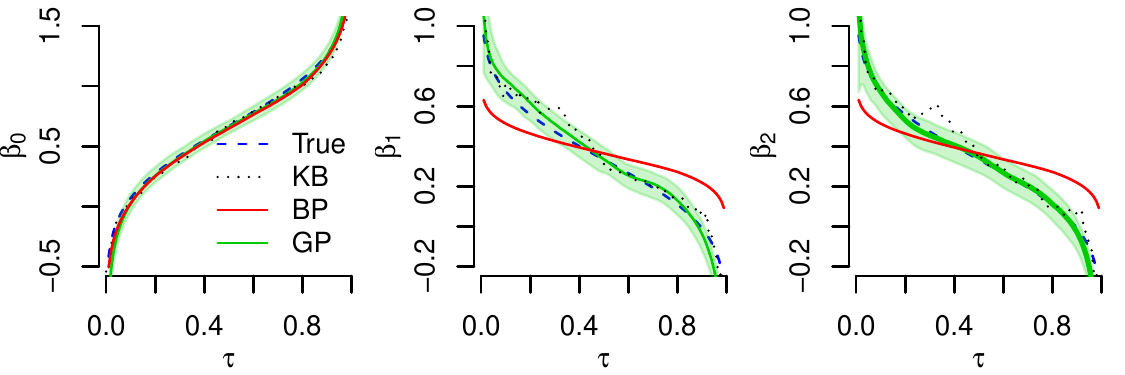}
\caption{Triangular $\scX$ example. Left panel shows quantile planes and their extensions to the embedding rectangle. Other three panels show estimated coefficient curves. KB: classical Koenker-Basette estimates; BP: Bayesian estimates from Bernstein polynomial model of \citet{Reich2011}; GP: estimates from the proposed Gaussian process method.}\label{f:tri}
\end{figure}

To illustrate why adjusting to the shape of $\scX$ is important for joint QR estimation, we generated 200 synthetic observations from the model:
\begin{align}
& X \sim \mbox{Uniform}(\scX);~~\scX = \{x = (x_1, x_2)^T  \in \bbR^2: -1 \le x_1, x_2 \le 2, x_1 + x_2 \le 1\},\nonumber\\
& Q_Y(\tau|X) =  \frac{1 - (X_1 + X_2)}{3}Q_N(\tau | 0,1) + \frac{2 + X_1 + X_2}{3}Q_N(\tau | 1, 0.2^2) \label{triQ}
\end{align}
where $Q_N(\tau|\mu, \sigma^2)$ denotes the $\tau$-th quantile of the $N(\mu, \sigma^2)$ distribution. The hyperplanes on the right hand side of \eqref{triQ} are correctly ordered on the triangular predictor space $\scX$, but cross each other inside the smallest embedding rectangle $[-1,2]\times[-1,2]$, as seen on the left panel of Figure \ref{f:tri}. This negatively impacts estimation by the \citet{Reich2011} method (Figure \ref{f:tri}), which cannot adapt to the triangular shape of the predictor space and is restricted to estimates that do not cross on the smallest rectangle enclosing all observed predictors. In contrast, our method, which works on the convex hull of the observed predictors, can retrieve the true parameter curves with a much higher accuracy.

\subsection{Performance assessment: univariate $X$}

For a thorough study of the frequentist performances of the proposed method, we simulated $100$ synthetic datasets each with $n = 1000$ observations from the model
\[
X \sim \mbox{Uniform}(-1,1);~~~Q_Y(\tau|X) =  3(\tau-\tfrac12) \log \frac{1}{\tau(1-\tau)} + 4(\tau-\tfrac12)^2 \log\frac1{\tau(1-\tau)}X,
\]
and compared parameter estimation against the methods of \citet{Reich2011} and \citet{Koenker1978}. Here $X$ is one dimensional, and the shape of $\scX$ is a not an issue. However, the nearly quadratic $\beta_1(\tau)$ function is slightly challenging to estimate.  

\begin{figure}[htp]
\centering
\includegraphics[scale = .9]{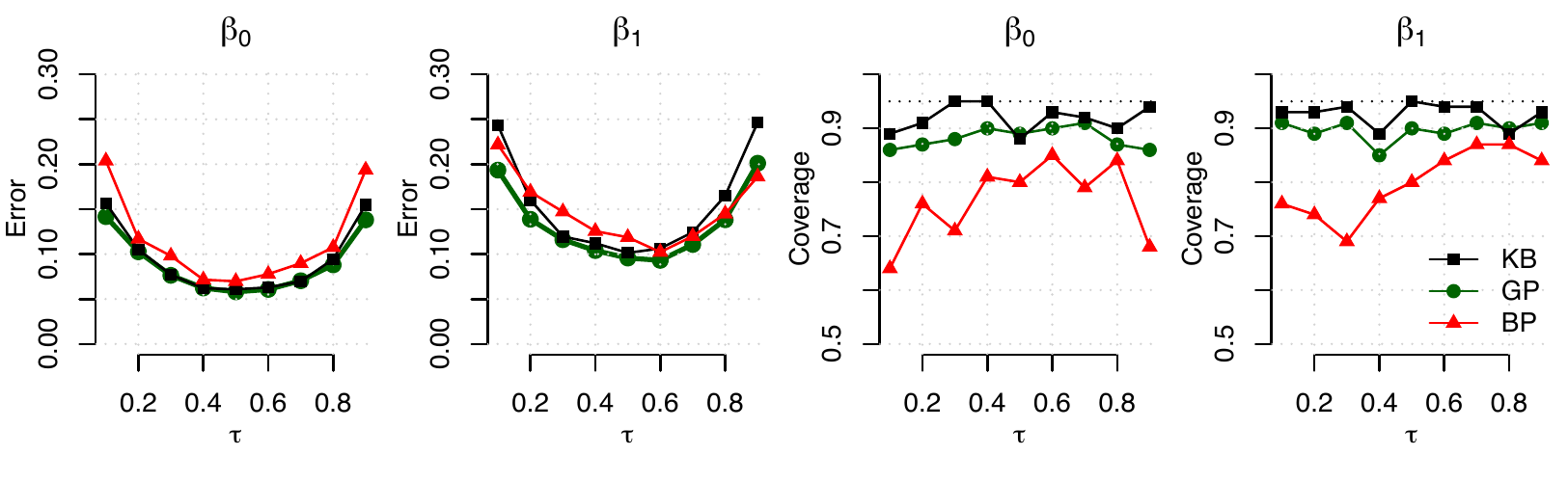}
\caption{Assessing performance with univariate $X$. Left two panels show mean absolute estimation errors and right two panels show coverage by 95\% confidence or credible bands. KB: classical Koenker-Basette estimates; BP: Bayesian estimates from Bernstein polynomial model of \citet{Reich2011}; GP: estimates from the proposed Gaussian process method.}
\label{f:simu1}
\end{figure}

We used the default setting for the method by \citet{Reich2011} with 5 basis functions. For each implementation, the Gibbs sampler was run for 10000 iterations and 200 samples from the second half of the chain were used for Monte Carlo. We also tried two other versions with 10 and 15 basis functions respectively. But increasing the number of basis functions resulted in a progressively poor performance, and thus we only report here the results from the 5 basis function setting. The \texttt{QuantReg} package in \texttt{R} was used to implement the classical method by \citet{Koenker1978} and confidence intervals were constructed with 200 bootstrapped samples. For our Gaussian process method, we used 6 equispaced knots $\tau^*_k = (k-1)/5$, $k = 1,\ldots,6$. We ran the adaptive blocked Metropolis sampler for 10000 iterations with 10\% burn-in and used 200 samples from the rest for Monte Carlo. Nearly identical results were obtained with 11 equispaced knots. 

Figure \ref{f:simu1} shows comparisons of pointwise mean absolute estimation errors of the three methods and also the coverage of the associated 95\% confidence or credible bands, averaged across the 100 synthetic datasets. Our method offered lowest estimation errors over the entire range of $\tau$ values, and a consistently high coverage close to the nominal target of 95\%. It is important to remember that the credible bands produced by our method are calibrated in a Bayesian way, and so 95\% credible bands are not automatically guaranteed to offer 95\% coverage.

\subsection{Performance assessment: multivariate $X$}

For assessing performance in the multivariate case, we ran another simulation study with synthetic data generated from the model:
\[
X \sim \mbox{Uniform}(\{x \in \mathbb{R}^7: \|x\| \le 1\});~~~Q_Y(\tau|X) =  \beta_0(\tau) + X^T\beta(\tau)
\]
with $\beta_0$ and $\beta$ specified by the equations
\begin{align*}
&\beta_0(0.5) = 0,~~~\beta(0.5) = \begin{pmatrix}0.96 &  -0.38 &   0.05 &  -0.22 &  -0.80 &  -0.80 &   -5.97\end{pmatrix}^T,\\
& \dot\beta_0(\tau)  = \frac{1}{\tau(1-\tau)};~~\dot\beta(\tau) = \frac{\dot\beta_0(\tau)v(\tau)}{\sqrt{1 + \|v(\tau)\|^2}},~~\tau \in (0,1),
\end{align*}
where $v_j(\tau) = \sum_{l = 0}^2 a_{lj} \cdot \phi(\tau; {l}/{2}, 1/{(3^2)})$, $1\le j \le 7$, with $\phi(\cdot|\mu,\sigma^2)$ denoting the $N(\mu,\sigma^2)$ density function and 
\[
a = \begin{pmatrix} 0 & 0 & -3 & -2 & 0 & 5 & -1\\
-3 & 0 & 0 & 2 & 4 & 1 & 0\\
0 & -2 & 2 & 2 & -4& 0 & 0
\end{pmatrix}.
\]
These specifications define a valid model by Theorem \ref{thm:char} because $a(b, \scX) = 1$ for any non-zero $b$ when $\scX$ is the unit ball centered at zero. Also note that with these specifications, $Q_Y(\tau|0)$ is precisely the quantile function of the standard logistic distribution. For simulating an $(X,Y)$ from the model we set $X = U_1 Z/\|Z\|$ and $Y = Q_Y(U_2|X)$ where $U_1,U_2 \sim U(0,1)$ and $Z \sim N_7(0,I_7)$, drawn independently of each other. We evaluated each instance of $Q_Y(U_2|X)$ to a precision of $10^{-16}$ by numerically integrating $\dot\beta_0$ and $\dot\beta$ between $0.5$ and $U_2$ with the \textsf{integrate()} function in \textsf{R}.
\begin{figure}[!t]
\centering
\includegraphics[scale = .9]{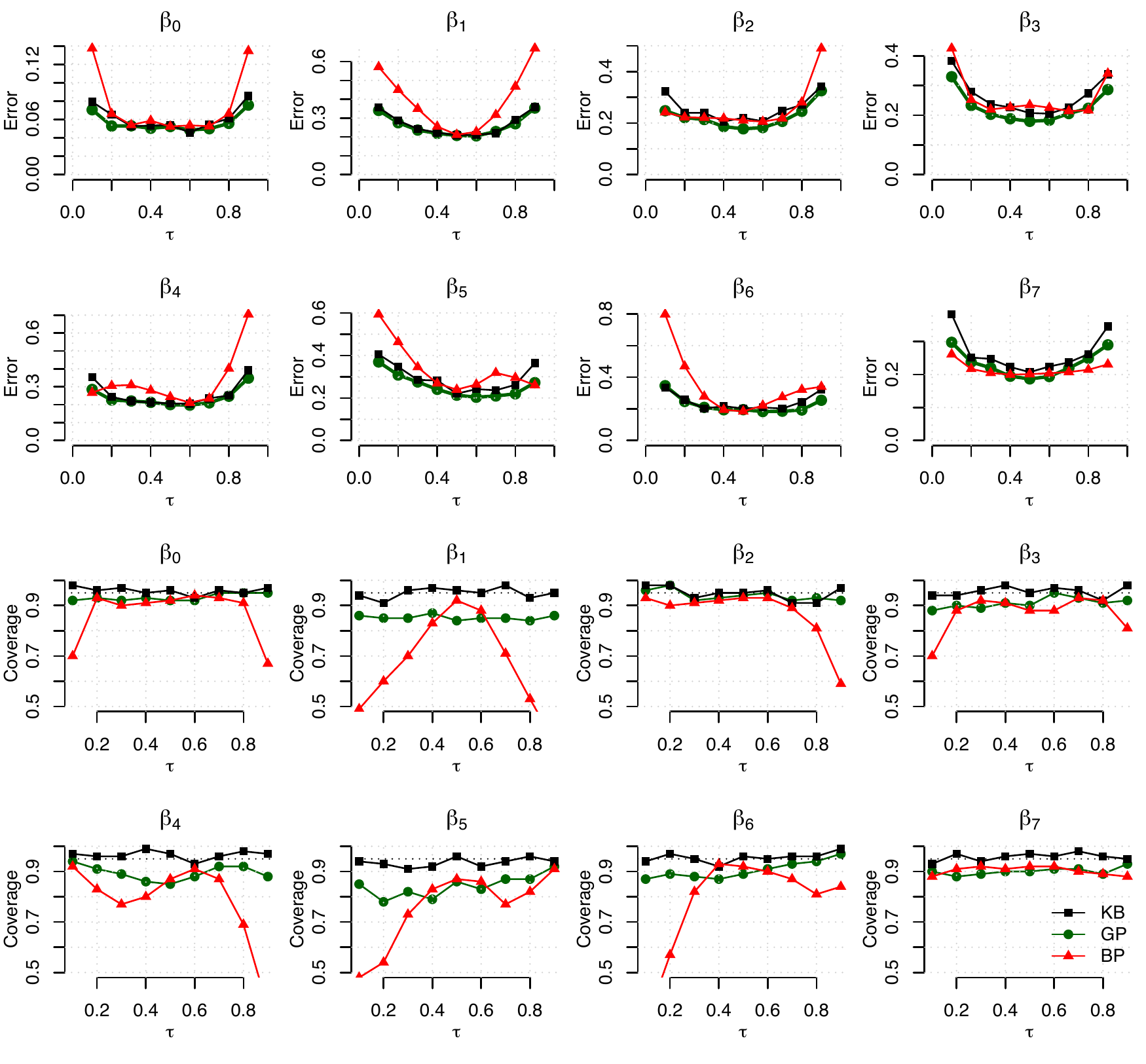}
\caption{Assessing performance with a 7 dimensional $X$. Top two rows show estimation errors and bottom two rows show coverage by 95\% confidence or credible bands. KB: classical Koenker-Basette estimates; BP: Bayesian estimates from Bernstein polynomial model of \citet{Reich2011}; GP: estimates from the proposed Gaussian process method.}
\label{f:simu7}
\end{figure} 

Figure \ref{f:simu7} compares the estimation error and coverage of the three methods averaged across 100 datasets of size $n = 1000$ generated from the above simulation model. All three methods were set as in the previous example. Like before, our method again offered nearly lowest estimation errors and a consistently high coverage close to the nominal target of 95\%,  over the entire range of $\tau$ values.

\relax
\section{Case studies}
%\section{A case study with plasma concentration of beta-carotene}
\label{s:cs}
\subsection{Plasma concentration of beta-carotene}

\cite{nierenberg} presents a study of the association of beta-carotene plasma concentrations with dietary intakes and drugs use for nonmelanoma skin cancer patients. The Statlib database (\url{http://lib.stat.cmu.edu/datasets/Plasma_Retinol}) hosts a subset of the data from 315 patients who had an elective surgical procedure during a three-year period to biopsy or remove a lesion of the lung, colon, breast, skin, ovary or uterus that was found to be non-cancerous. This dataset has been analyzed in the literature \citep{kai2011new} to assess how personal characteristics, smoking and dietary habits as well as dietary intake of beta-carotene affects concentration levels of beta-carotene in the plasma. 

\begin{figure}[!ht]
\centering
\includegraphics[scale = .8]{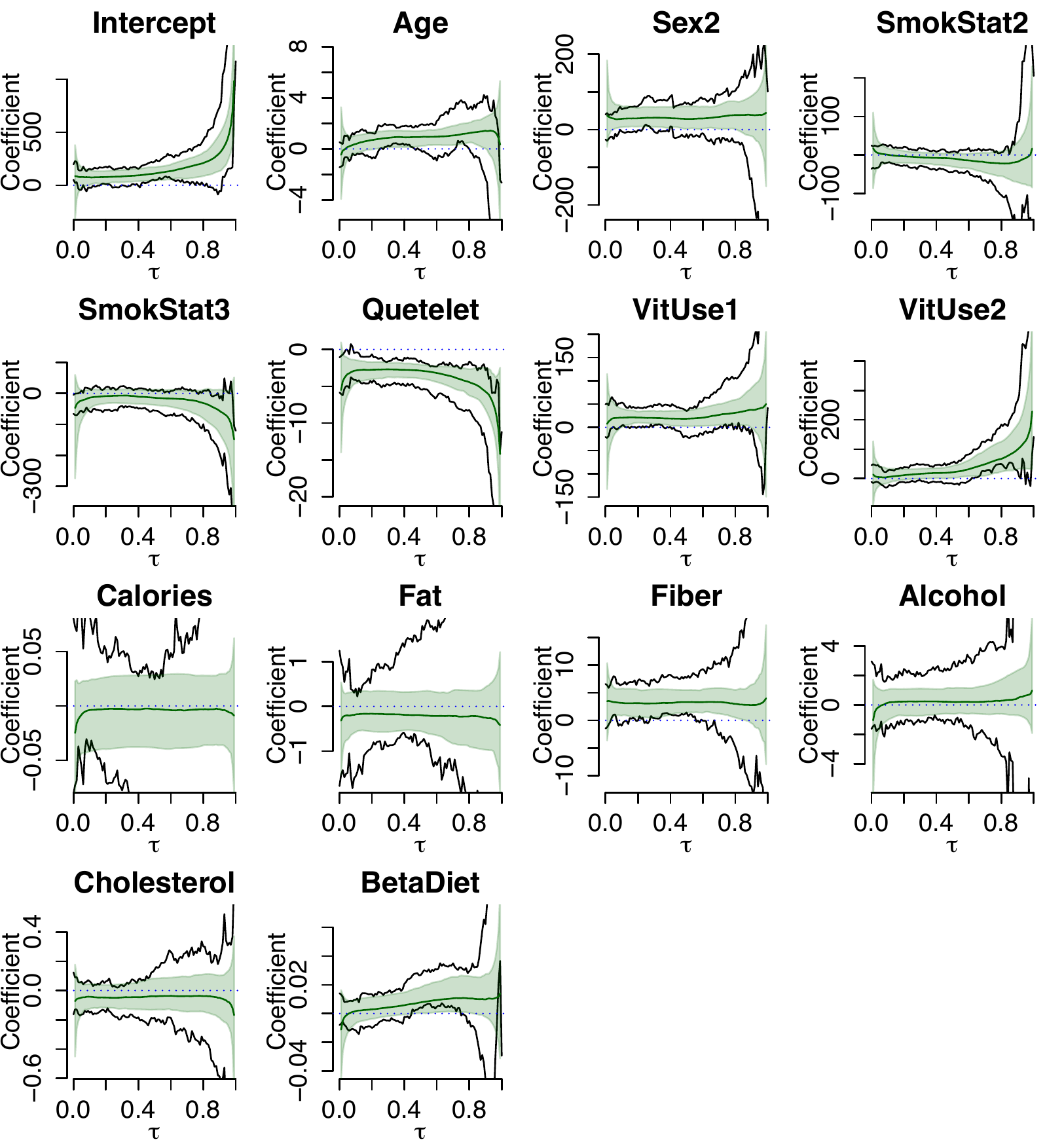}
\caption{Parameter estimation for plasma data analysis. Green lines and bands give posterior means and pointwise 95\% credible bands for intercept and slope curves, overlaid with the 95\% bootstrap confidence bands obtained from single-$\tau$ Koenker-Bassett fits shown in black.}
\label{f:plasma}
\end{figure}

We analyzed the same data with our joint QR model with plasma beta-carotene concentration (ng/ml) as the response and 11 covariates consisting of age (years), sex (1=Male, 2=Female), smoking status (1=Never, 2=Former, 3=Current Smoker), Quetelet index or BMI (weight/(height$^2$)), vitamin use\footnote{relabeled for better clarity as `$3~-$ the original label'} (1=No, 2=Yes, not often, 3=Yes, fairly often), and daily consumption of calories, fat (g), fiber (g), alcohol (number of drinks), cholesterol (mg) and dietary beta-carotene (mcg). These covariates gave rise to 13 predictors when the categorical variables (sex, smoking status and vitamin use) were coded with dummy indicators. Estimated intercept and slope curves, with 95\% credible bands are shown in Figure \ref{f:plasma}.

The estimated intercept curve strongly suggests a longer right tail for the response distribution. The slope curve estimates indicate that being female, use of vitamin and consumption of fiber have reasonably strong positive effect on plasma concentration of beta-carotene, whereas, smoking and BMI have reasonably strong negative effect. Calories, fat, alcohol or cholesterol consumption appears to have little effect. Dietary intake of beta-carotene appears to have a positive effect, but the inference is not conclusive. The slope estimates in Figure \ref{f:plasma} suggest more dramatic effects of some predictors on the upper quantiles, but the credible bands paint a more modest picture. However, credible bands for $\beta_j(0.9) - \beta_j(0.1)$ and $\beta_j(0.9) - \beta_j(0.5)$, constructed directly from the posterior draws, indeed suggest more enhanced positive and negative effects, respectively for heavy vitamin use and BMI, on the upper quantiles (Table \ref{t:plasma}).

We also performed a ten fold validation study to assess how well our joint model captured the intricacies of the beta-carotene data. In each fold of the study, we randomly partitioned the 315 observations into training and test sets at roughly 2:1 ratio. We fitted our joint model on the training data and obtained estimates $\hat\beta_j$ of $\beta_j$, $j = 0,\ldots,p$ in the form of posterior means. These estimates were then used to evaluate the training and test data ``check'' loss at every $\tau \in \{0.1, \ldots, 0.9\}$ by averaging $\rho_\tau(Y_i - \hat\beta_0(\tau) - X_i^T\hat\beta(\tau))$ over, respectively, all training and all test set observations $(X_i, Y_i)$, where $\rho_\tau(r) = r\{\tau -I(r < 0)\}$. The same was done with Koenker-Bassette, \cite{Reich2011} and standard least squares estimates. The relative accuracy of a method at any $\tau$ was calculated as the reciprocal of its check loss at that $\tau$ relative to the least square method. Figure \ref{f:plasma-eff} shows these relative accuracy measures for the three quantile regression methods, averaged across the 10 repetitions. Our joint QR method can be seen to offer the best test data accuracy across all $\tau$ values and maintain its advantage over least squares at the upper quantiles where the other two quantile regression methods appear to suffer a sharp loss of efficiency.

\begin{table}
\centering
\begin{tabular}{rlcc}
$j$ & Predictor & 95\% CI for $\beta_j(0.9) - \beta_j(0.1)$ & 95\% CI for $\beta_j(0.9) - \beta_j(0.5)$\\
\hline
1  &  Age  & $( -0.52 ,  2.32 )$ & $( -0.88 ,  1.69 )$\\
2  &  Sex2  & $( -43.34 ,  54.12 )$ & $( -33.53 ,  55.31 )$\\
3  &  SmokStat2  & $( -66.81 ,  28.51 )$ & $( -47.64 ,  35.7 )$\\
4  &  SmokStat3  & $( -102.75 ,  26.64 )$ & $( -95.05 ,  19.97 )$\\
\bf5  &  \bf Quetelet  & $( -5.43 ,  0.12 )$ & $\mathbf{( -4.93 ,  -0.14 )}$\\
6  &  VitUse1  & $( -21.05 ,  54.43 )$ & $( -14.1 ,  55.09 )$\\
\bf 7  &  \bf VitUse2  & $\mathbf{( 12.27 ,  177.02 )}$ & $\mathbf{( 10.57 ,  153.73 )}$\\
8  &  Calories  & $( -0.01 ,  0.01 )$ & $( -0.01 ,  0.02 )$\\
9  &  Fat  & $( -0.75 ,  0.33 )$ & $( -0.58 ,  0.25 )$\\
10  &  Fiber  & $( -3.7 ,  3.14 )$ & $( -3.22 ,  2.54 )$\\
11  &  Alcohol  & $( -0.74 ,  2.32 )$ & $( -0.72 ,  1.51 )$\\
12  &  Cholesterol  & $( -0.12 ,  0.16 )$ & $( -0.12 ,  0.09 )$\\
13  &  BetaDiet  & $( 0 ,  0.02 )$ & $( -0.01 ,  0.01 )$
\end{tabular}
\caption{Evidence of more dramatic upper tail effects of certain predictors on plasma beta-carotene concentration. CI denotes posterior credible interval.}
\label{t:plasma}
\end{table}

\begin{figure}
\centering
\includegraphics[scale = .6]{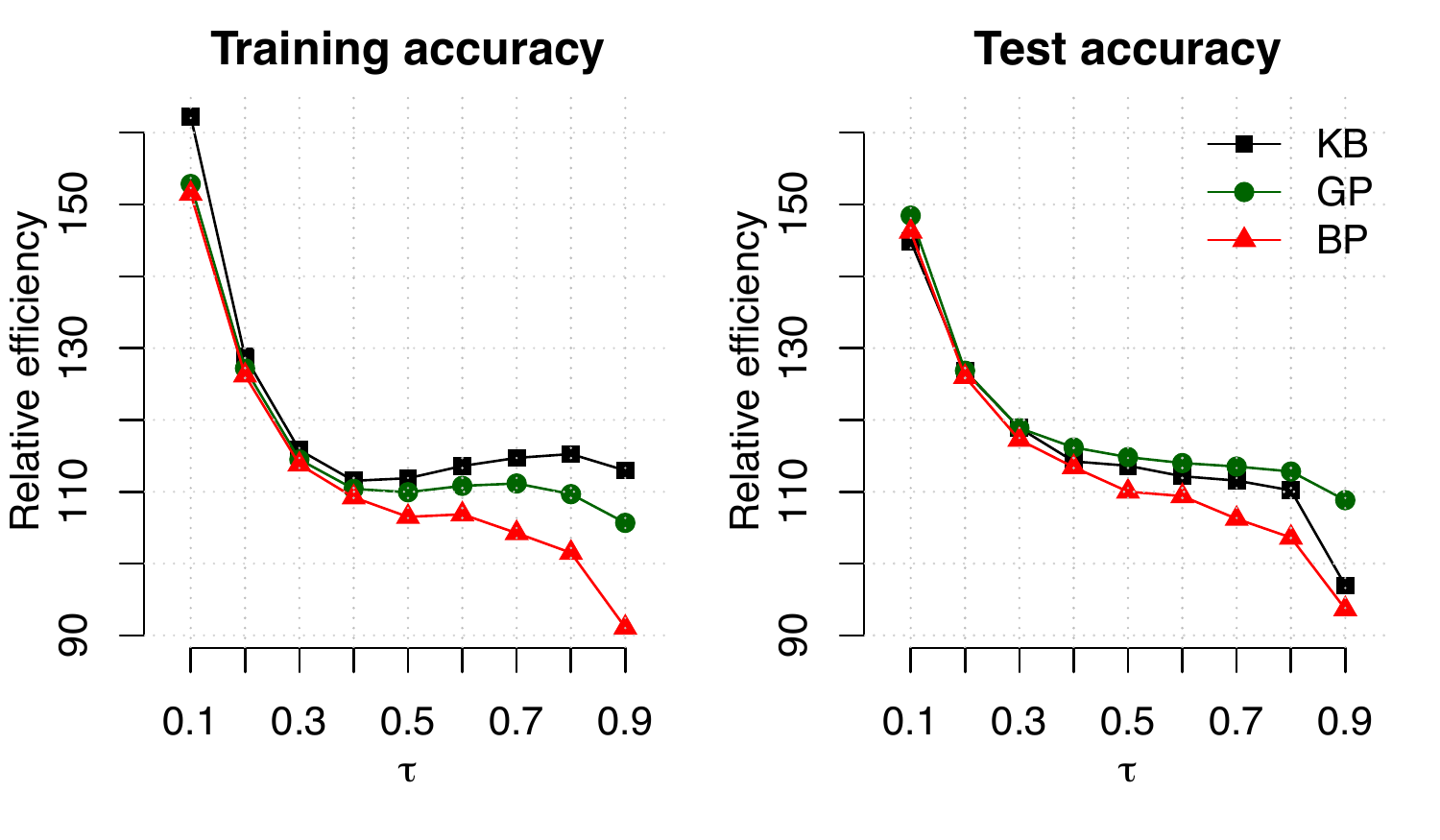}
\caption{A 10-fold cross validation assessment of the fit of various linear quantile regression methods to plasma data with standard least squares regression being the benchmark. In held out test data, the proposed Gaussian process method (GP)  offers better fit at all quantiles than Koenker-Bassette (KB) or the method by \citet[BP]{Reich2011}.}
\label{f:plasma-eff}
\end{figure}

\subsection{Survival analysis under right censoring}
Joint estimation of quantile regression parameters could be particularly beneficial for survival analysis with censored response. A greater borrowing of information may help cover the information gaps left by censoring. A crossing-free estimation of the quantile functions means that the estimated survival curves are proper and interpretable. Also, a joint estimation offers an automatic way to quantify estimation uncertainty of the entire survival curves by simple inversions of estimated quantile functions. The probabilistic modeling framework of our joint quantile regression approach makes it particularly straightforward to handle right-censoring. The log-likelihood score calculation \eqref{eq8} now changes to
\begin{align}\label{eq8c}
&\sum_{i} [(1 - c_i) \log f_Y(y_i|x_i) + c_i \log\{1 - F_Y(y_i|x_i)\}] \nonumber\\
&~~~~~~~~~~~~~~~=\sum_i \left[c_i \log\{1 - \tau_{x_i}(y_i)\}-(1 - c_i) \log\big\{\dot{\beta}_0\big(\tau_{x_i}(y_i)\big)+
 x_i^T\dot{\beta}\big(\tau_{x_i}(y_i)\big)\big\}\right],
\end{align}
where $c_i$ is the censoring status (1= right censored, 0 = observed). With this single change, the same prior specification and Markov chain Monte Carlo parameter estimation as detailed in Section \ref{s:prior} remain applicable.

We illustrate these points with a reanalysis of the University of Massachusetts Aids Research Unit IMPACT Study data \citep[UIS,][Table 1.3]{Hosmer1998} in which we estimated the conditional quantiles of the logarithm of the time to return to drug use ($Y$) as linear functions of current treatment assignment (\textsf{TREAT}, 1 = Long course, 0 = Short course), number of prior drug treatments (\textsf{NDT}), recent intravenous drug use (\textsf{IV3}, 1 = Yes, 0 = No), Beck depression score (\textsf{BECK}), a compliance factor measuring length of stay in the treatment relative to the course length (\textsf{FRAC}), race of the subject (\textsf{RACE}, 1 = Non-white, 0 = White), age (\textsf{AGE}) and treatment site (\textsf{SITE}). For model fitting, we used the 575 complete observations available in the \textsf{uis} data set of the R package \textsf{quantreg}. Return times were right censored for 111 of these subjects.

Figures \ref{f:uis-coef}-\ref{f:uis-surv} show parameter and survival curves (for 9 randomly chosen subjects) estimation with our joint quantile regression approach and also with the censored quantile regression approach described in \citet{Koenker2008}. The latter was implemented by using the \textsf{crq} function in R-package \textsf{quantreg} which uses a technique by \citet{Portnoy2003}. For joint estimation, we fixed the base probability density $f_0$ to be $N(0,1)$ instead of a $t_\nu$, since the tails of the distribution of log return time are expected to be fast decaying. The two sets of parameter estimates are comparable, except in the upper tails. The Portnoy method fails to produce an estimate beyond $\tau = 0.88$, and confidence intervals get extremely wide for $\tau$ close to this limit. In contrast, the credible bands from joint estimation are much more stable across the entire range of $\tau$. Estimated survival curves are remarkably similar, though for the Portnoy method, the issue of quantile crossing manifests in the form of estimated survival curves that are not strictly decreasing (e.g., subject \# 313).

\begin{figure}[!t]
\centering
\includegraphics[scale = .6]{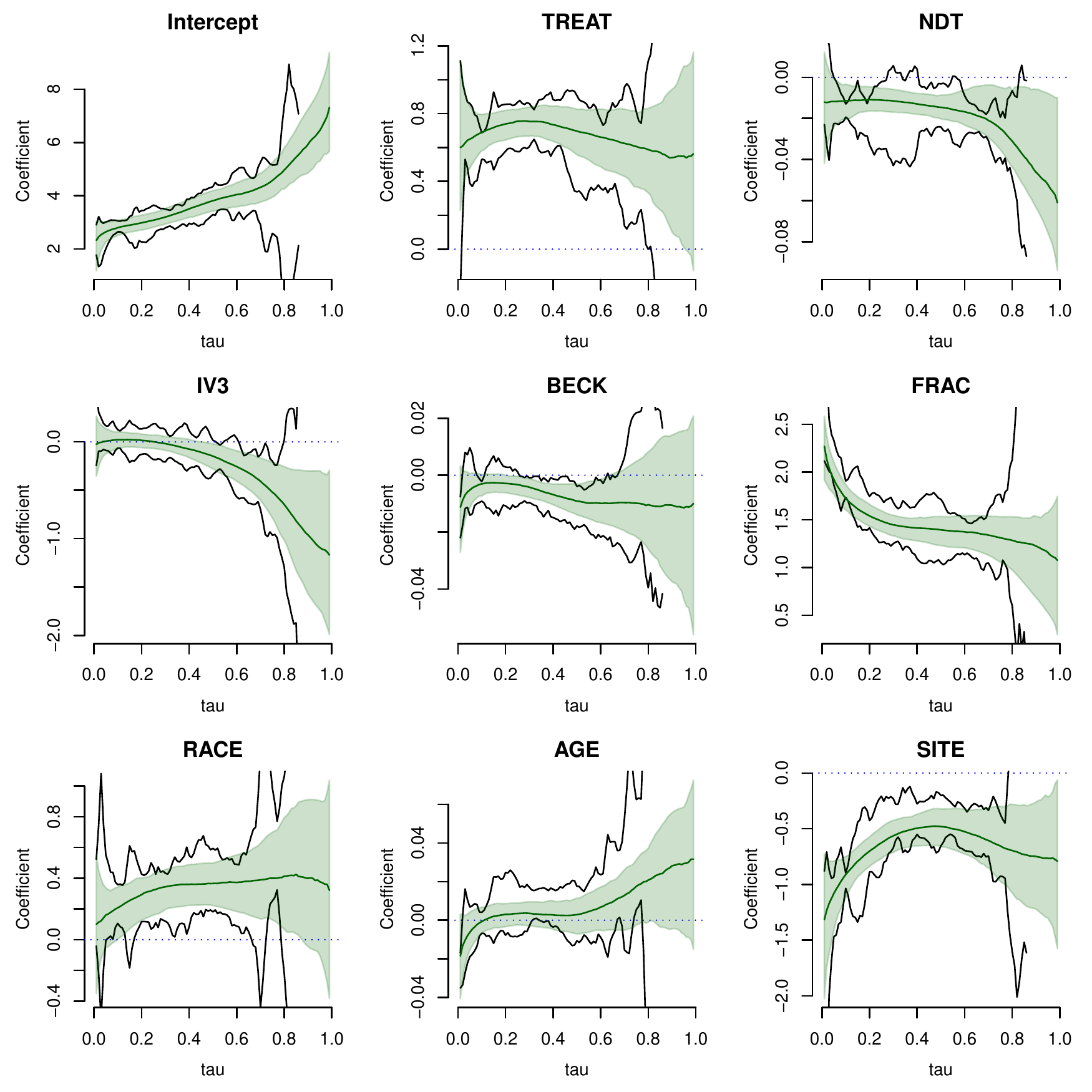}
\caption{Parameter estimation for UIS data analysis. Green lines and bands give posterior means and pointwise 95\% credible bands for intercept and slope curves, overlaid with the 95\% bootstrap confidence bands obtained from the Portnoy approach.}
\label{f:uis-coef}
\end{figure} 
\begin{figure}[!t]
\centering
\includegraphics[scale = .6]{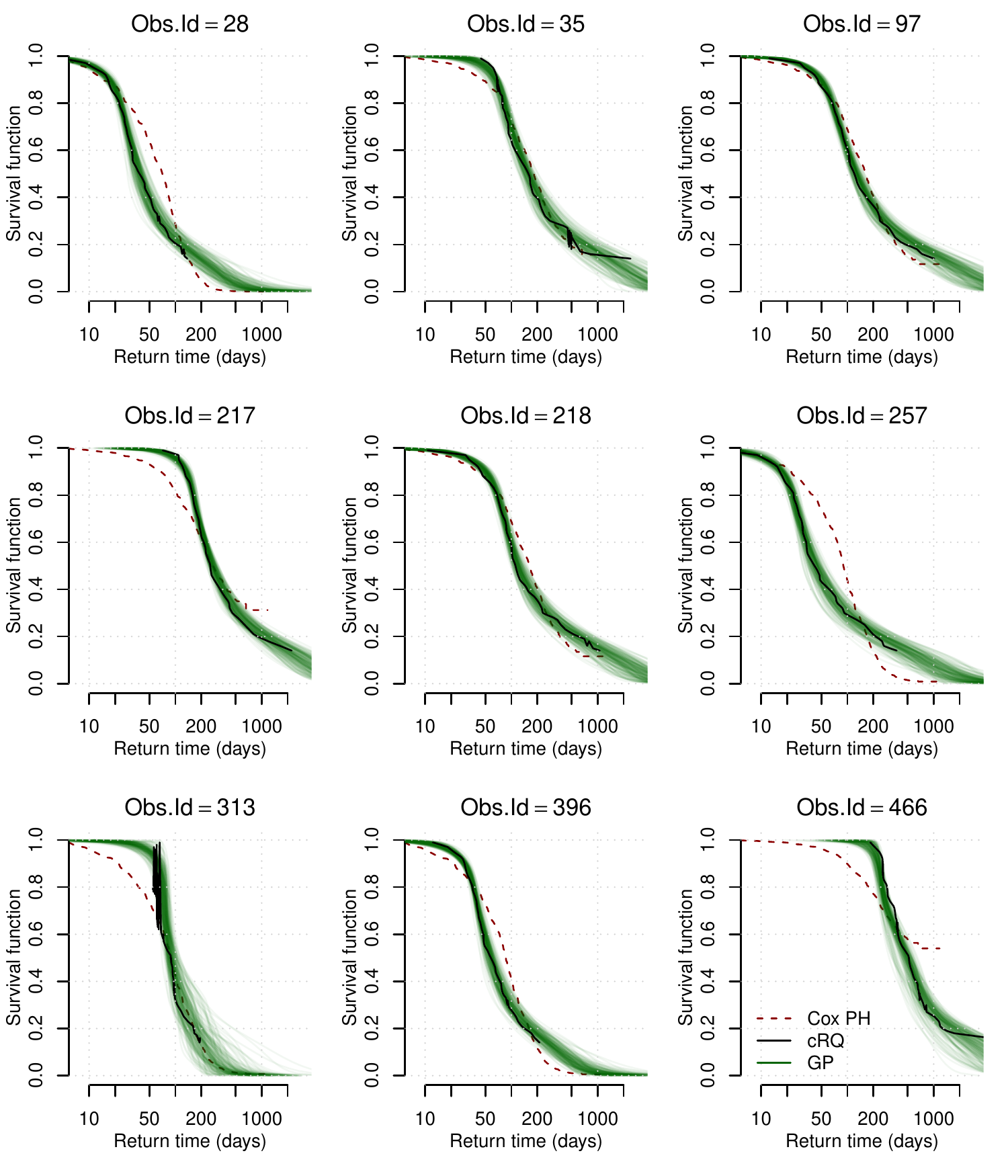}
\caption{Estimated survival curves for 9 random sampled subjects in the UIS study. Green lines are posterior draws of the survival curves. Blacks lines are the estimates from the Portnoy approach, and the dark red dashed lines are estimates under the Cox proportional hazard model.}
\label{f:uis-surv}
\end{figure} 
\relax
\section{Discussion}
\label{s:dis}

We have introduced a complete and practicable theoretical framework for simultaneous estimation of linear quantile planes in any dimension and over arbitrarily shaped convex predictor domains. Although we have pursued here a specific estimation procedure, our modeling platform is extremely broad and parameter estimation could be done in a variety of other manners. For example, one could choose to use spline based estimation of the basic functions $w_0, \ldots, w_p$ via penalized likelihood maximization or Bayesian averaging. Also, a variety of specifications could be used on the diffeomorphism parameter $\zeta$, e.g., one could model $\zeta$ as a mixture of beta cumulative distribution functions, or try estimating $\zeta$ directly by adding isotonic regression type constraints.

%Do we provide more details on the possibility of penalized type estimation procedure as a future direction. For example, we may consider kernel ridge penalty + monotonic constraint:
%
%(\hat{\gamma}, \hat{\gamma}_0, \sigma, w, \zeta) := \argmin l(\gamma,\gamma_0,\sigma,w \zeta) + \sum_{j=1}^p \lambda_j \|w_j\|_H^2,
%subject to \zeta(t_l) \geq \zeta(t_k) for all l > k.
%
%Here l(\gamma,\gamma_0,\sigma,w \zeta) denotes the log-likelihood function.

A number of interesting features could be added to the Bayesian parameter estimation method we have pursued here. An important consideration is shrinkage for large $p$. For moderately large $p$, any standard shrinkage prior could be used on $\gamma$, and the resulting posterior could be explored by the same Markov chain sampler as in Section \ref{s:comp} as long as the prior density on $\gamma$ is available in an explicit form up to a normalizing constant. Shrinkage could also be applied on the curve valued parameters $w_j$, $j = 1, \ldots, p$, by choosing appropriate prior distributions on $(\kappa^2_1, \ldots, \kappa^2_p)$. An attractive choice is to replace the single gamma prior distribution we used in Section \ref{s:prior} with a spike-slab type mixture of gamma distributions, e.g., $\kappa_j^{-2} \sim 0.5 Ga(a_\kappa, b_\kappa) + 0.5 Ga(a_\kappa, b_\kappa/100)$. Such a specification still allows integrating out $\kappa_j$ in \eqref{eq:pp} and hence could be explored by the same Markov chain sampler as before. 

%Joint estimation of quantile regression parameters could be particularly beneficial for survival analysis with censored response. A greater borrowing of information may help cover the information gaps left by censoring. A crossing-free estimation of the quantile functions means that the estimated survival curves are proper and interpretable. Also, a joint estimation offers an automatic way to quantify estimation uncertainty of the entire survival curves by simple inversions of estimated quantile functions. The probabilistic modeling framework of our joint quantile regression approach makes it particularly straightforward to handle right-censoring. The log-likelihood score calculation \eqref{eq8} now changes to
%\begin{align}\label{eq8c}
%&\sum_{i} [(1 - c_i) \log f_Y(y_i|x_i) + c_i \log\{1 - F_Y(y_i|x_i)\}] \nonumber\\
%&~~~~~~~~~~~~~~~=\sum_i \left[c_i \log\{1 - \tau_{x_i}(y_i)\}-(1 - c_i) \log\big\{\dot{\beta}_0\big(\tau_{x_i}(y_i)\big)+
% x_i^T\dot{\beta}\big(\tau_{x_i}(y_i)\big)\big\}\right],
%\end{align}
%where $c_i$ is the censoring status (1= right censored, 0 = observed). With this single change, the same prior specification and Markov chain Monte Carlo parameter estimation as detailed in Section \ref{s:prior} remain applicable.

The primary computational bottleneck of our method is that the likelihood evaluation involves a search over the grid of $\tau$ values for each observation. While our current implementation easily scales to thousands of observations, scaling it to even larger datasets will require further computing innovations. Fortunately, the likelihood evaluation is embarrassingly parallel in the observations and involves very simple arithmetic operations, and thus, it should be possible to obtain manyfold speed ups by the use of graphics processing units; such an implementation is currently underway.

%The assumption of linearity is somewhat restrictive, and one may ask the validity of this assumption in real world application. As we illustrated with motorcycle data, by considering nonlinear transformation, and, in the multivariate case, interaction of the original covariates, one may substantially relax the shape restrictions on the response surface while still taking advantage of parameter estimation within a  linear model framework. 

%\begin{figure}
%\centering
%\includegraphics[scale = .6]{prostate-eff}
%\caption{Prostate.}
%\label{f:prostate-eff}
%\end{figure} 
%

\begin{appendices}
\section{Technical details}
This section presents a proof of Theorem \ref{th:2}, starting with a few fundamental results that allow comparing two probability density functions given information on their corresponding quantile density functions. We adopt the following notation in the remainder of this section: by a `probability function quartet' we mean a four-tuple $(Q, q, F, f)$ of real valued functions where $Q:(0,1) \to \bbR$ is a non-atomic quantile function that admits a strictly positive derivative $q = \dot Q$, $F = Q^{-1}$ is the associated cumulative distribution function with probability density function  $f = \dot F$. Recall the identities $f(y) = 1/q(F(y))$ and $q(t) = 1/f(Q(t))$.

\subsection{Auxiliary results}
\begin{lemma}\label{lem:aux2}
Let $(Q_1, q_1, F_1, f_1)$ and $(Q_2, q_2, F_2, f_2)$ be two probability function quartets and take $m_j = Q_j(\tau_0)$, $j = 1,2$. If there exist $0 < c_1 \le 1 \le c_2 < \infty$ such that $c_1 q_2(t) \le q_1(t) \le c_2 q_2(t)$, for all $t \in (0,1)$, then, 
\[
f_2(y) = f_1(y + \Delta_1(y)) \cdot \Delta_2(y),~~\mbox{for all}~y \in {\rm supp}(f_2),
\]
for two real valued functions $\Delta_1, \Delta_2$ satisfying $|\Delta_1(y)| \le \max(1 - c_1, c_2 - 1) |y - m_2| +  |m_1 - m_2|$, and, $\Delta_2(y) \in [c_1, c_2]$.
\end{lemma}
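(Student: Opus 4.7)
The plan is to define the two functions $\Delta_1, \Delta_2$ explicitly, verify the product identity by unwinding the relationship $f_j(y) = 1/q_j(F_j(y))$, and then bound each via integration of the given ratio inequality.

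First I would fix $y \in {\rm supp}(f_2)$ and set $t = F_2(y)$, so that $y = Q_2(t)$ and $f_2(y) = 1/q_2(t)$. The natural candidates are then
\[
\Delta_2(y) := \frac{q_1(t)}{q_2(t)}, \qquad \Delta_1(y) := Q_1(t) - Q_2(t) = Q_1(F_2(y)) - y.
\]
The hypothesis $c_1 q_2 \le q_1 \le c_2 q_2$ gives $\Delta_2(y) \in [c_1, c_2]$ immediately. For the product identity, observe that $y + \Delta_1(y) = Q_1(t)$, hence $f_1(y + \Delta_1(y)) = f_1(Q_1(t)) = 1/q_1(t)$, and consequently $f_1(y+\Delta_1(y))\,\Delta_2(y) = (1/q_1(t))\cdot(q_1(t)/q_2(t)) = 1/q_2(t) = f_2(y)$, as required.

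The remaining task is to control $|\Delta_1(y)|$. Using the fundamental theorem of calculus at $\tau_0$, write
\[
\Delta_1(y) = [Q_1(t) - m_1] - [Q_2(t) - m_2] + (m_1 - m_2) = \int_{\tau_0}^{t} \bigl(q_1(u) - q_2(u)\bigr)\,du \;+\; (m_1 - m_2).
\]
The ratio bound gives $(c_1 - 1)\,q_2(u) \le q_1(u) - q_2(u) \le (c_2 - 1)\,q_2(u)$, so $|q_1(u) - q_2(u)| \le \max(1-c_1,\,c_2-1)\,q_2(u)$. Pulling this through the integral (and noting the integrand keeps sign consistently on each side of $\tau_0$ only in a one-sided bound, but the absolute-value bound is what is actually needed), we get
\[
\Bigl|\int_{\tau_0}^{t}(q_1(u)-q_2(u))\,du\Bigr| \le \max(1-c_1,\,c_2-1)\,\Bigl|\int_{\tau_0}^{t} q_2(u)\,du\Bigr| = \max(1-c_1,\,c_2-1)\,|Q_2(t) - m_2| = \max(1-c_1,\,c_2-1)\,|y - m_2|.
\]
A triangle inequality then yields $|\Delta_1(y)| \le \max(1-c_1, c_2-1)\,|y - m_2| + |m_1 - m_2|$, completing the argument.

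I do not foresee a genuine obstacle here; the proof is essentially a bookkeeping exercise once one identifies $\Delta_1$ and $\Delta_2$ with the quantile difference and quantile-density ratio, respectively. The only point that requires a moment's care is the absolute-value bound on the integral: the integrand $q_1 - q_2$ need not be of one sign, but since the pointwise bound $|q_1(u)-q_2(u)| \le \max(1-c_1, c_2-1)\,q_2(u)$ holds uniformly and $q_2 \ge 0$, the integral bound goes through regardless of whether $t < \tau_0$ or $t > \tau_0$.
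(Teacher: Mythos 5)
Your proof is correct and follows essentially the same route as the paper: identify $\Delta_1(y) = Q_1(F_2(y)) - y$ and $\Delta_2(y) = q_1(F_2(y))/q_2(F_2(y))$, verify the product identity via $f_j = 1/(q_j \circ F_j)$, and bound $\Delta_1$ by integrating the ratio inequality between $\tau_0$ and $t$. If anything, your integral bound $\bigl|\int_{\tau_0}^t (q_1 - q_2)\,du\bigr| \le \max(1-c_1, c_2-1)\bigl|\int_{\tau_0}^t q_2\,du\bigr|$ is stated a bit more carefully than the paper's intermediate display $c_1 y + b_1 \le Q_1 F_2(y) \le c_2 y + b_2$, which as written only holds for $y \ge m_2$, though the paper's final conclusion is the same.
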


\begin{proof}
By the assumption on $q_1, q_2$, for every $y \in {\rm supp}(f_2)$, 
\[
c_1 y + b_1 \le Q_1F_2(y) \le c_2 y + b_2,
\]
where $b_1 = m_1 - c_1m_2$, $b_2 = m_1 - c_2m_2$. Then, $\Delta_1(y) := Q_1F_2(y) - y$ satisfies, $\|\Delta_1(y)\| \le \max(1 - c_1, c_2 - 1)|y - m_2| + |m_1 - m_2|$. Since, $f_i(y) = 1/q_i(F_i(y))$, $i = 1,2$, we have, for any $y \in {\rm supp}(f_2)$, $f_1(Q_1(F_2(y)))q_1(F_2(y))  = 1$, and hence,
\[
f_2(y) =  \frac{f_1(Q_1(F_2(y)))q_1(F_2(y))}{q_2(F_2(y))} = f_1(y + \Delta_1(y)) \cdot \frac{q_1(F_2(y))}{q_2(F_2(y))},
\]
which proves the result. 
\end{proof}

\begin{lemma}\label{lem:aux3}
Let $f^*$ satisfy the conditions of Theorem \ref{th:2}. Given any $\delta, \sigma, c_1, c_2 > 0$, there exists an $\epsilon > 0$ such that ,
\[
\sup_{x \in \scX} \int_{[Q^*_Y(\epsilon|x), Q^*_Y(1 - \epsilon|x)]} f^*_Y(y|x) \left|\log \{ f_0(y /\sigma + \Delta(y))/\sigma\}\right|dy < \delta
\]
for every $\Delta: \bbR \to \bbR$ satisfying $|\Delta(y)| < c_1|y| + c_2$ for all $y \in \bbR$.
\end{lemma}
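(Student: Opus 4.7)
The plan is to exploit that the $f^*_Y(\cdot|x)$-mass of $[Q^*_Y(\epsilon|x), Q^*_Y(1-\epsilon|x)]$ equals exactly $1 - 2\epsilon$ by definition of the quantile function, while the integrand $|\log\{f_0(y/\sigma + \Delta(y))/\sigma\}|$ admits an $x$- and $\Delta$-independent uniform bound once the interval is confined to a compact subset of $\bbR$. Taking $\epsilon$ sufficiently close to $1/2$ then drives the mass to zero while leaving the uniform bound finite, yielding the required control.

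First I would show that for every $\epsilon \in (0, 1/2)$ there exists $M = M(\epsilon) < \infty$ with $[Q^*_Y(\epsilon|x), Q^*_Y(1-\epsilon|x)] \subseteq [-M, M]$ for all $x \in \scX$. This follows from the representation $Q^*_Y(t|x) = \beta^*_0(t) + x^T\beta^*(t)$, continuity of $\beta^*_0$ and $\beta^*$ on $(0,1)$ (implied by differentiability), and boundedness of $\scX$. Crucially, $M(\epsilon)$ remains bounded as $\epsilon \uparrow 1/2$, since it approaches $|\beta^*_0(1/2)| + \sup_{x \in \scX}\|x\|\cdot\|\beta^*(1/2)\|$.

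Next, for $y \in [-M, M]$ and any $\Delta$ with $|\Delta(y)| \le c_1|y| + c_2$, the shifted argument $z = y/\sigma + \Delta(y)$ lies in $[-K_1, K_1]$ with $K_1 = M/\sigma + c_1 M + c_2$. Since $f_0 = t_\nu$ is continuous and strictly positive, both $f_0$ and $1/f_0$ are bounded on $[-K_1, K_1]$, producing a uniform bound $|\log\{f_0(z)/\sigma\}| \le K_2$ with $K_2 = K_2(\epsilon, \sigma, c_1, c_2, \nu)$ independent of $x$, $y$, and admissible $\Delta$. Combining with the $1 - 2\epsilon$ mass identity,
\[
\sup_{x \in \scX} \int_{[Q^*_Y(\epsilon|x),\, Q^*_Y(1-\epsilon|x)]} f^*_Y(y|x)\, \bigl|\log\{f_0(y/\sigma + \Delta(y))/\sigma\}\bigr|\, dy \;\le\; K_2(\epsilon)\,(1 - 2\epsilon),
\]
and picking $\epsilon \in (0, 1/2)$ close enough to $1/2$ forces the right-hand side below $\delta$.

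The main obstacle is verifying that $K_2(\epsilon)$ does not blow up as $\epsilon \uparrow 1/2$, which reduces to showing that $M(\epsilon)$ stays bounded there; this is immediate from continuity of $\beta^*_0$ and $\beta^*$ at $\tau = 1/2$ together with boundedness of $\scX$. I note that this argument uses neither the tail conditions on $f^*_Y(\cdot|0)$ nor the constraint $\dot\beta^*_0(t) + x^T\dot\beta^*(t) \ge c_0\dot\beta^*_0(t)$ from Theorem \ref{th:2}; those hypotheses enter only when controlling the complementary tail integral (which is presumably treated by a separate lemma in the proof of Theorem \ref{th:2}).
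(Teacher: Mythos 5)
Your argument is internally valid for the statement exactly as printed, but the printed statement carries a typo that your proof inadvertently exposes: the integration region is meant to be the complement $[Q^*_Y(\epsilon|x), Q^*_Y(1-\epsilon|x)]^c$, i.e.\ the two tails. Two pieces of evidence. First, the paper's own proof of Lemma \ref{lem:aux3} reduces the supremum over $x$ to the case $x=0$ via Lemma \ref{lem:aux2} and then invokes the type I/II tail assumption on $f^*_Y(\cdot|0)$ to conclude that the full integral $\int f^*_Y(z|0)\,|\log f_0(z/\sigma+\Delta_2(z))|\,dz$ is finite --- none of which would be needed for an interval shrinking onto the conditional median. Second, in the proof of Theorem \ref{th:2} the lemma is used to produce an $\epsilon$ that is additionally smaller than $\delta/\max\{12b, 6(B+1)\}$, and the resulting bound is applied to the integral over $[Q^*_Y(\epsilon|x), Q^*_Y(1-\epsilon|x)]^c$; the central region is handled separately by the equicontinuity argument at the end, and it is the $f^*$-mass $2\epsilon$ of the tails that absorbs the bounded terms $2b$ and $B+1$. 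Your construction takes $\epsilon$ near $1/2$, so it can never satisfy the smallness requirement, and the quantity you bound (the middle integral) is precisely the part of the Kullback--Leibler integral that does not need this lemma at all. The fact that your argument uses neither the tail conditions nor the condition $\dot\beta^*_0(t)+x^T\dot\beta^*(t)\ge c_0\dot\beta^*_0(t)$ --- which you noted yourself --- is the tell: there is no ``separate lemma'' for the tails; this is that lemma.

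The genuine content is therefore the uniform tail control: for all sufficiently small $\epsilon$, $\sup_{x\in\scX}\int_{[Q^*_Y(\epsilon|x),Q^*_Y(1-\epsilon|x)]^c} f^*_Y(y|x)\,|\log\{f_0(y/\sigma+\Delta(y))/\sigma\}|\,dy<\delta$ uniformly over all $\Delta$ with $|\Delta(y)|<c_1|y|+c_2$. Here the discarded hypotheses do the work. Boundedness of $\dot\beta^*/\dot\beta^*_0$ together with the lower bound $c_0$ makes $q^*_Y(t|x)/q^*_Y(t|0)$ bounded away from $0$ and $\infty$, so Lemma \ref{lem:aux2} gives $Q^*_Y(F^*_Y(y|0)|x)=y+\Delta_{1,x}(y)$ with $|\Delta_{1,x}(y)|\le a|y|+b$ uniformly in $x$; the change of variable $z=Q^*_Y(F^*_Y(y|x)|0)$ then maps the tail region for $x$ onto the tail region for $x=0$ and converts the integrand into $f^*_Y(z|0)\,|\log f_0(z/\sigma+\Delta_2(z))|$ with $|\Delta_2(z)|\le a_1|z|+b_1$ for constants depending only on $(\sigma,c_1,c_2)$. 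The type I/II tail assumption on $f^*_Y(\cdot|0)$, combined with the at most logarithmic growth of $|\log f_0|$ for $f_0=t_\nu$ under linearly perturbed arguments, supplies a single integrable envelope independent of $x$ and $\Delta$, and dominated convergence sends the tail contribution to zero as $\epsilon\downarrow 0$, uniformly --- which is what Theorem \ref{th:2} actually consumes. Your proof, while correct for the literal display, does not establish this and cannot be adapted to it, since the middle-interval integral increases as $\epsilon$ decreases.
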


\begin{proof}
By the assumption on $f^*$, $q^*_Y(t|x) / q^*_Y(t|0) = 1 + x^T\dot\beta^*(t)/\dot\beta^*_0(t)$ is bounded away from zero and infinity. Hence, by Lemma \ref{lem:aux2}, there are constants $a, b > 0$ such that for every $x \in \scX$, 
$Q^*_Y(F^*_Y(y|0)|x) = y + \Delta_{1,x}(y)$, with $|\Delta_{1,x}(y)| \le a|y| + b$ for all $y \in A_0 := {\rm supp}(f^*_Y(\cdot|0))$. Fix any $x \in \scX$. By the change of variable $z = Q^*_Y(F^*_Y(y|x)|0)$,
\begin{align*}
\int f^*_Y(y|x) & \left|\log f_0\left(\frac{y}\sigma + \Delta(y)\right)\right|dy\\
& = \int_{A_0} f^*_Y(z|0)\left|\log f_0\left(\frac{Q^*_Y(F^*_Y(z|0)|x)}\sigma + \Delta(Q^*_Y((F^*_Y(z|0)|x))\right)\right|dz \\
& = \int_{A_0} f^*_Y(z|0) \left|\log f_0(z/\sigma + \Delta_2(z))\right|dz,
\end{align*}
with $|\Delta_2(z)| \le |\Delta_1(z)|/\sigma + |\Delta(z + \Delta_1(z))| \le a_1|z| + b_1$ where $a_1,b_1 > 0$ depend only on $\delta$, $\sigma$, $c_1$ and $c_2$. The tail assumption on $f^*_Y(\cdot|0)$ implies that the last integral is finite, proving the result!
\end{proof}

\begin{lemma}
Fix $\gamma_0 \in \bbR$, $\gamma \in \bbR^p$, $\sigma > 0$, $w : (0,1) \to \bbR^p$, and two differentiable, monotonically increasing functions $\zeta, \dagg\zeta : [0,1] \to [0,1]$, with $[\zeta(0), \zeta(1)] \subset [\dagg\zeta(0), \dagg\zeta(1)]$. Let $(\beta_0, \beta) = \mathcal{T}(\gamma_0, \gamma, \sigma, w, \zeta)$, $(\dagg\beta_0, \dagg\beta) = \mathcal{T}(\dagg\gamma_0, \dagg\gamma, \sigma, w, \dagg\zeta)$, where,
\[
\dagg\gamma_0 = \gamma_0 + \sigma \int_{\zeta(\tau_0)}^{\dagg\zeta(\tau_0)} q_0(u) du,~~\dagg\gamma = \gamma + \sigma\int_{\zeta(\tau_0)}^{\dagg\zeta(\tau_0)} q_0(u) h(u) du,
\]
with $h(\tau) := w(\tau)/\{a(w(\tau), \scX) \sqrt{1 + \|w(\tau)\|^2}\}$, $\tau \in (0,1)$. Fix any $x \in \scX$ and consider the probability function quartets $(Q_Y(\cdot|x), q_Y(\cdot|x), F_Y(\cdot|x), f_Y(\cdot|x))$, $(\dagg Q_Y(\cdot|x), \dagg q_Y(\cdot|x), \dagg F_Y(\cdot|x), \dagg f_Y(\cdot|x))$ where $Q_Y(\tau|x) = \beta_0(t) +x^T\beta(\tau)$, $\dagg Q_Y(\tau|x) = \dagg\beta_0(\tau) + x^T\dagg\beta(\tau)$. Then, $ {f_Y(y|x)}/{\dagg f_Y(y|x)} = {\dagg{\dot\zeta}(\dagg F_Y(y|x))}/{\dot\zeta(F_Y(y|x))}$ for all $y \in {\rm supp}(f_Y(\cdot|x))$.
\label{lem:aux}
\end{lemma}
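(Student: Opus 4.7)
The plan is to observe that $\tau \mapsto Q_Y(\tau|x)$ factors as $G\circ\zeta$ for a single strictly increasing primitive $G$ defined on the larger interval $[\dagg\zeta(0),\dagg\zeta(1)]$, and that the prescribed adjustment $(\dagg\gamma_0,\dagg\gamma)$ is engineered precisely so that $\dagg Q_Y(\tau|x) = G\circ\dagg\zeta(\tau)$ with the \emph{same} $G$. Once this common-primitive representation is in hand, the density ratio falls out by the chain rule together with the identity $f = 1/\dot Q$.

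First I would collapse $\beta_0(\tau) + x^T\beta(\tau)$ via \eqref{eq:tau0}--\eqref{eq:beta} into the single integral
\[
Q_Y(\tau|x) = \gamma_0 + x^T\gamma + \sigma\int_{\zeta(\tau_0)}^{\zeta(\tau)} q_0(u)\{1+x^T h(u)\}\,du,
\]
and define
\[
G(s) := \gamma_0 + x^T\gamma + \sigma\int_{\zeta(\tau_0)}^{s} q_0(u)\{1+x^T h(u)\}\,du,\qquad s\in[\dagg\zeta(0),\dagg\zeta(1)].
\]
The integrand $q_0(u)\{1+x^Th(u)\}$ is strictly positive on the whole interval: $q_0>0$ by construction, and $1+x^Th(u)>0$ follows for every $x\in\scX$ by the same bound used in the ``if'' direction of Theorem \ref{thm:char}, since $h(u) = w(u)/\{a(w(u),\scX)\sqrt{1+\|w(u)\|^2}\}$ and by definition of $a(\cdot,\scX)$ we have $\{-x^Tw(u)\}/\|w(u)\| \le a(w(u),\scX)$. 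Hence $G$ is a $C^1$ strictly increasing bijection onto its image, and by construction $Q_Y(\tau|x)=G(\zeta(\tau))$.

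Next, the analogous collapsing of $\dagg\beta_0(\tau)+x^T\dagg\beta(\tau)$ gives
\[
\dagg Q_Y(\tau|x) = \dagg\gamma_0 + x^T\dagg\gamma + \sigma\int_{\dagg\zeta(\tau_0)}^{\dagg\zeta(\tau)} q_0(u)\{1+x^Th(u)\}\,du.
\]
Plugging in the prescribed values of $\dagg\gamma_0$ and $\dagg\gamma$ and splicing the integrals over $[\zeta(\tau_0),\dagg\zeta(\tau_0)]$ and $[\dagg\zeta(\tau_0),\dagg\zeta(\tau)]$ by additivity reveals exactly $\dagg Q_Y(\tau|x)=G(\dagg\zeta(\tau))$---this telescoping is the entire algebraic content of the lemma, and it is what motivates the formulae for $(\dagg\gamma_0,\dagg\gamma)$. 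For $y\in\mathrm{supp}(f_Y(\cdot|x))$, setting $\tau=F_Y(y|x)$ and $\dagg\tau=\dagg F_Y(y|x)$, the identities $G(\zeta(\tau))=y=G(\dagg\zeta(\dagg\tau))$ combined with injectivity of $G$ yield $\zeta(\tau)=\dagg\zeta(\dagg\tau)$, so $G'$ is evaluated at the same point in both computations
\[
\frac{1}{f_Y(y|x)} = G'(\zeta(\tau))\,\dot\zeta(\tau),\qquad \frac{1}{\dagg f_Y(y|x)} = G'(\dagg\zeta(\dagg\tau))\,\dagg{\dot\zeta}(\dagg\tau).
\]
Taking the ratio cancels the shared $G'$ factor and leaves exactly $\dagg{\dot\zeta}(\dagg F_Y(y|x))/\dot\zeta(F_Y(y|x))$, as claimed.

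The hard part is really only bookkeeping---spotting that $(\dagg\gamma_0,\dagg\gamma)$ are exactly the constants of integration that extend the $\zeta$-representation of $Q_Y(\tau|x)$ to the larger $\dagg\zeta$-range through the same $G$. The only analytic subtlety worth flagging is the strict positivity of the integrand in $G$, which is what makes $G$ invertible so that $G'$ cancels cleanly in the final ratio; this reduces to the defining inequality for $a(\cdot,\scX)$ and is the same positivity already exploited in Theorem \ref{thm:char}.
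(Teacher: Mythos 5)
Your proof is correct and follows essentially the same route as the paper: both identify a common primitive for the quantile function (the paper's $Q_0(\cdot|x)$, up to an affine shift, is your $G$), verify by additivity of the integral that the choice of $(\dagg\gamma_0,\dagg\gamma)$ splices the $\zeta$- and $\dagg\zeta$-representations through that same primitive, and then cancel its derivative in the ratio $f_Y/\dagg f_Y$. The only cosmetic difference is that the paper explicitly inverts to write the densities in terms of $f_0(\cdot|x)$ at a common argument, whereas you invoke injectivity of $G$ to see directly that the two $G'$ factors are evaluated at the same point; these are the same idea.
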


\begin{proof}
%By the definition of $\beta_0$, we have $\zeta(\tau) = F_0(\gamma_0 + \{\beta_0(\tau) - \gamma_0\}/\sigma)$, and hence $\zeta(\tau_0) = F_0(\gamma_0)$. Similarly, $\dagg\zeta(\tau_0) = F_0(\gamma_0)$. 
Let $\tau_1 = \zeta(\tau_0)$ and, define,
\[
q_0(\tau|x) = q_0(\tau) \{1 + x^T h(\tau)\},~~\tau \in (0,1), x \in \scX.
\]
Then $q_0(\tau|x)$ is strictly positive, and, hence,
\[
Q_0(\tau|x) := \int_{\tau_1}^{\tau} q_0(u|x) du,~~ \tau \in (0,1), x \in \scX,
\]
defines valid quantile planes on $\scX$. Denote the associated conditional distribution and density functions by $F_0(\cdot|x)$ and $f_0(\cdot|x)$. By definition of $(\beta_0, \beta)$, $q_Y(\tau|x) = \dot \beta_0(\tau) + x^T \dot\beta(\tau) = \sigma q_0(\zeta(\tau))\left\{1 + x^T h(\zeta(\tau))\right\}\dot\zeta(\tau) = \sigma q_0(\zeta(\tau)|x)\dot\zeta(\tau)$, and, hence,
\begin{equation}
Q_Y(\tau|x) = \gamma_0 + x^T \gamma + \int_{\tau_0}^\tau q_Y(u|x)du = \gamma_0 + x^T \gamma + \sigma Q_0(\zeta(\tau)|x)\label{eq:a1}.
\end{equation}
Similarly, $\dagg q_Y(\tau|x) = \sigma q_0(\dagg\zeta(\tau)|x)\dot{\dagg\zeta}(\tau)$, and, hence,
\begin{equation}
\dagg Q_Y(\tau|x) = \dagg\gamma_0 + x^T \dagg\gamma + \int_{\tau_0}^\tau \dagg q_Y(u|x)du = \gamma_0 + x^T \gamma + \sigma Q_0(\dagg\zeta(\tau)|x).
\end{equation}
by the definitions of $\dagg\gamma_0$, $\dagg\gamma$.
Inverting \eqref{eq:a1}, we get, for every $x \in \scX$,
\[
\zeta(\tau) = F_0\left(\frac{Q_Y(\tau|x) - \gamma_0 - x^T \gamma}{\sigma} \bigg| x \right), \tau \in (0,1).
\]
Therefore, if $y \in (Q_Y(0|x), Q_Y(1|x))$, then,
\[
 f_Y(y|x) = \frac{1}{q_Y(F_Y(y|x) | x)} = \frac{1}{\sigma q_0(\zeta(F_Y(y|x))|x) \dot \zeta(F_Y(y|x))} = \frac{f_0(\frac{y - \gamma_0 - x^T \gamma}{\sigma} | x)}{\sigma \dot \zeta(F_Y(y|x))}.
\]
Similarly, $\dagg f_Y(y|x) = f_0(\frac{y - \gamma_0 - x^T \gamma}{\sigma} | x) / \{\sigma \dot{\dagg\zeta}(\dagg F_Y(y|x))$\}, proving the result!
 \end{proof}

\subsection{Approximating $f^*$ within assumed model space}
\label{a:proof4}
Let $\Pi_\nu$ denote the conditional prior distribution on $f$ under $\Pi$ given $\nu$. Theorem \ref{th:2} is proved in two stages. Let $\nu_0 > 0$ such that the tails of $f^*_Y(\cdot|0)$ are of type I or II with respect to $f_0(\cdot|\nu)$ for every $0 < \nu \le \nu_0$. First we show that for any such $\nu$ and any given $\delta > 0$, there exists an $\dagg{f} \in K_\delta(f^*)$ within our model space with nicely behaved underlying $w_j$ curves. Next we show $\Pi_\nu(f:\|\log (\dagg{f} / f)\|_\infty < \delta) > 0$ which leads to the claim of Theorem \ref{th:2}. The following lemma gives a precise statement of the first step.

\begin{lemma}\label{le:6}
%Suppose the true conditional quantiles admit the linear form \eqref{eq1} with true coefficient functions $\beta_j(\tau) = \beta^*_j(\tau)$, $\tau \in (0,1)$, $j = 0, \ldots, p$. Also assume the true conditional response density functions $f^*(\cdot|x)$, $x \in \scX$ satisfy
%  In addition, in the case when the pdf $f^*(y|x)$ of the true model has an unbounded support in $y$, assume that $q_0$ satisfies (\ref{eq:13}).
%  If there exists $M_0>0$ and $\sigma_0>0$ such that
%\begin{equation}\label{eq:KL2}
%K\{f^*(\cdot|x),f_0(\cdot)\}=\int f^*(y|x)\log\frac{f^*(y|x)}{f_0(y/\sigma_0)}dy\leq M_0<\infty,\
%\text{for all }x\in \scX,
%\end{equation}
%for some finite positive numbers $M_0, \sigma_0$. If $q_0$ satisfies \eqref{eq:13} then 
Let $f^*$ satisfy the conditions of Theorem \ref{th:2}. For any small $\delta>0$ and $0 < \nu \le \nu_0$, there exists an $\dagg{f} \in K_\delta(f^*)$ associated with $(\dagg\beta_0, \dagg\beta) = \mathcal{T}(\dagg\gamma_0, \dagg\gamma, \dagg\sigma, \dagg w, \dagg\zeta)$ where $\dagg w: [0,1] \to \bbR^p$ is bounded continuous and  $\dagg \zeta: [0,1] \to [0,1]$ is a diffeomorphism with $\dot{\dagg\zeta}(t) \in [e^{-B}, e^B]$ for all $t \in [0,1]$ for some finite $B > 0$.
\end{lemma}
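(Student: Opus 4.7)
My plan is to write $f^*$ exactly in the canonical form from Theorem~\ref{thm:model} and then perturb only the $\zeta$ component to obtain the desired $\dagg f$; Lemma~\ref{lem:aux} will give a transparent handle on the KL cost. Fix any $\sigma^* > 0$ and take $\gamma^*_0 = \beta^*_0(\tau_0)$, $\gamma^* = \beta^*(\tau_0)$, $\zeta^*(\tau) = F_0(\gamma^*_0 + (\beta^*_0(\tau) - \gamma^*_0)/\sigma^*)$, and $w^*(u) = v^*((\zeta^*)^{-1}(u))$, where $v^*$ is the function from Theorem~\ref{thm:char} applied to $(\beta^*_0, \beta^*)$. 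The assumption $\dot\beta^*_0 + x^T\dot\beta^* \ge c_0 \dot\beta^*_0$ forces $\|v^*(t)\|/\sqrt{1+\|v^*(t)\|^2} \le 1 - c_0$ uniformly in $t$, so $v^*$ is uniformly bounded; combined with the assumed continuity of $\dot\beta^*/\dot\beta^*_0$ on $[0,1]$ and continuity of the inverse of the map $v \mapsto v/[a(v,\scX)\sqrt{1+\|v\|^2}]$ on the set $\{y : a(y,\scX)\|y\| \le 1-c_0\}$, this makes $v^*$ bounded continuous on $[0,1]$. Composing with the continuous $(\zeta^*)^{-1}$, $w^*$ is bounded continuous on $[0,1]$, so we set $\dagg w := w^*$.

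For a small $\epsilon > 0$ to be chosen, construct $\dagg\zeta$ to equal $\zeta^*$ on $[\epsilon, 1-\epsilon]$ and to interpolate linearly to $(0,0)$ on $[0,\epsilon]$ and to $(1,1)$ on $[1-\epsilon,1]$, with small $C^1$ mollifications near the two joins. Since $\beta^*_0$ takes values in a compact interval over $[\epsilon, 1-\epsilon]$, $\dot\zeta^* = f_0(\beta^*_0/\sigma^*)\dot\beta^*_0/\sigma^*$ is bounded above and below there by strictly positive constants; the two linear slopes $\zeta^*(\epsilon)/\epsilon$ and $(1-\zeta^*(1-\epsilon))/\epsilon$ are also finite. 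Hence $\dot{\dagg\zeta} \in [e^{-B}, e^B]$ for some finite $B = B(\epsilon)$. Taking $\dagg\sigma = \sigma^*$ and $\dagg\gamma_0, \dagg\gamma$ per the setup of Lemma~\ref{lem:aux} (for $\epsilon < \tau_0 \wedge (1-\tau_0)$, these simplify to $\dagg\gamma_0 = \gamma^*_0, \dagg\gamma = \gamma^*$), that lemma gives $f^*_Y(y|x)/\dagg f_Y(y|x) = \dot{\dagg\zeta}((\dagg\zeta)^{-1}(\zeta^*(F^*_Y(y|x))))/\dot\zeta^*(F^*_Y(y|x))$. Substituting $t = F^*_Y(y|x)$ in the KL integral absorbs the $x$-dependence and yields
\[
d_{KL}(f^*, \dagg f) = \int_0^1 \left[\log\dot{\dagg\zeta}((\dagg\zeta)^{-1}(\zeta^*(t))) - \log\dot\zeta^*(t)\right] dt,
\]
which vanishes on $[\epsilon, 1-\epsilon]$ and is bounded in magnitude by $B + |\log\dot\zeta^*(t)|$ on the tail intervals.

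The main obstacle is to show $\int_0^1 |\log \dot\zeta^*(t)|\,dt < \infty$, so that the tail contributions to the KL vanish as $\epsilon \downarrow 0$. Changing variables $y = \beta^*_0(t)$ reduces this to finiteness of $\int|\log(f_0(y/\sigma^*)/\sigma^*)|\,f^*_Y(y|0)\,dy$ and $\int|\log f^*_Y(y|0)|\,f^*_Y(y|0)\,dy$. The first is controlled via $|\log t_\nu(y)| = O(\log(1+|y|))$ together with sufficient log-moment control on $f^*_Y(\cdot|0)$ provided by the type I/II tail hypothesis (for small $\nu$, $t_\nu$ has heavier tails than $f^*_Y(\cdot|0)$, making $\log(1+|Y|)$ integrable under $f^*_Y(\cdot|0)$). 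The second is the absolute entropy of $f^*_Y(\cdot|0)$, finite under the same hypothesis. The type II case is the more delicate of the two and requires a careful split of the $y$-space into bulk and tail regions where the qualitative comparison between $f^*_Y(\cdot|0)$ and $t_\nu$ shifts; Lemma~\ref{lem:aux3} is the model for the estimates needed here. Granted the integrability, the linear-slope bounds give $B(\epsilon) \le \log(1/\epsilon) + O(1)$, so $2B\epsilon = O(\epsilon\log(1/\epsilon)) \to 0$; choosing $\epsilon$ small enough that this quantity and each of the tail integrals of $|\log\dot\zeta^*|$ are $\le \delta/2$ delivers $d_{KL}(f^*, \dagg f) < \delta$.
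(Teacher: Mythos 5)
Your proposal follows the same general blueprint as the paper's proof — exact canonical representation $(\gamma^*_0, \gamma^*, \sigma^*, w^*, \zeta^*)$ of $f^*$, replacement of $\zeta^*$ by a tame diffeomorphism $\dagg\zeta$ agreeing with $\zeta^*$ on an interior interval, and use of Lemma~\ref{lem:aux} to reduce the KL computation to a one-dimensional integral over the $\tau$-axis (your observation that this integral is independent of $x$ is correct and nicely made). The construction and continuity of $w^*$ are handled essentially as in the paper. But there are three genuine gaps.

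First, you fix $\sigma^* > 0$ arbitrarily. The paper instead chooses $\sigma^*$ small enough that $c_L(\sigma^*) \le 1/2$ (type~I) or $u_L(\sigma^*)\log\{1/u_L(\sigma^*)\} \le \delta/2$ (type~II), and then sets the cut point $\delta_L = u_L(\sigma^*)$ in the type~II case, which has the specific virtue that $\dot\zeta^*(\delta_L) = 1$ and $\dot\zeta^*(t) \ge 1$ for $t < \delta_L$. This choice is what makes the interpolating piece's derivative controllable: with $\dot\zeta^*(\delta_L) = 1$ the paper's quadratic has derivative at the origin $b_L = 2\zeta^*(\delta_L)/\delta_L - \dot\zeta^*(\delta_L) \le 2/\delta_L$ and concavity ($a_L < 0$) falls out for free. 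In your version with an arbitrary $\epsilon < u_L(\sigma^*)$, the value $\dot\zeta^*(\epsilon)$ can be much larger than $1/\epsilon$ (nothing in the hypotheses rules this out without a monotonicity assumption on $\dot\zeta^*$ near $0$), and then any $C^1$ join at $t = \epsilon$ forces $\sup\dot{\dagg\zeta} \ge \dot\zeta^*(\epsilon)$. Your claimed bound $B(\epsilon) \le \log(1/\epsilon) + O(1)$ is therefore not justified, and with it the cornerstone estimate $2B\epsilon = O(\epsilon\log(1/\epsilon)) \to 0$ is unsupported.

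Second, you name $\int_0^1 |\log\dot\zeta^*(t)|\,dt < \infty$ as the main obstacle and sketch a proof via absolute entropy of $f^*_Y(\cdot|0)$. This is both unnecessary and not obviously true. Unnecessary because the KL divergence is one-sided — you only need an \emph{upper} bound on $\int_0^\epsilon\{\log\dot{\dagg\zeta}(\dagg\zeta^{-1}(\zeta^*(t))) - \log\dot\zeta^*(t)\}\,dt$, and the negative contribution of $-\log\dot\zeta^*(t)$ (which is $\le 0$ wherever $\dot\zeta^* \ge 1$, i.e., everywhere in $(0, u_L(\sigma^*))$ for type~II, and bounded for type~I) can simply be discarded. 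Not obviously true because $\int f|\log f| < \infty$ is not implied by the type~I/II hypotheses in any immediate way; you would need extra regularity that the theorem does not assume. The paper sidesteps this entirely by bounding $\dot\zeta^*(t)$ \emph{from below} on $(0,\delta_L]$ (by $c_L(\sigma^*)/2$ for type~I, by $1$ for type~II) and $\dot{\dagg\zeta}$ \emph{from above} by $2/\delta_L$, giving the clean estimate $\delta_L\log\{4/(\delta_L c_L(\sigma^*))\}$ or $\delta_L\log(2/\delta_L)$, each $\le \delta/2$ by the choice of $\delta_L$. You should replace your two-sided integrability claim with these one-sided bounds, which is where the careful choice of $\sigma^*$ earns its keep.

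Third, a smaller point: the type~I tail condition in the paper requires $Q(0) > -\infty$, i.e., $\zeta^*(0) > 0$. For your linear interpolation the slope on $[0,\epsilon]$ is $\zeta^*(\epsilon)/\epsilon \ge \zeta^*(0)/\epsilon$, which diverges as $\epsilon \downarrow 0$; your mollification also has to cope with $\dot\zeta^*(\epsilon)$ approaching $c_L(\sigma^*)$, possibly far below that linear slope. That the join can be done monotonically with $\dot{\dagg\zeta}$ bounded above by $2/\epsilon$ must be checked — the paper's concave quadratic does this automatically, but you invoke ``small $C^1$ mollifications'' without verifying they keep $\dot{\dagg\zeta}$ strictly positive and within $[e^{-B},e^B]$ for the required $B$.
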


\begin{proof}
Fix a $\nu \in (0, \nu_0)$ and a $\delta \in (0, \tau_0)$. All calculations below are carried out for this particular value of $\nu$ and we suppress $\nu$ from the notation $f_0(\cdot|\nu)$. 
%Define $H:[0,1/2]\times \scX \to \bbR$ as $H(0, x) = 0$, $x \in \scX$ and
%\[
%H(t,x)=\int_{S_t(x)} f^*(y|x)\log\frac{f^*(y|x)}{f_0(y)}dy, \quad t\in(0,1/2],\ x\in\scX,
% \]
% where $S_t(x)=\big(-\infty,Q^*(t|x)\big)\cup\big(Q^*(1-t|x),\infty\big)$.
% By the assumptions on $f^*$, $H$ is continuous on $[0,1/2]\times\scX$, and hence uniformly continuous, by compactness of $\scX$. Therefore for any $\epsilon>0$, we can find a $t_0\in(0,1/2]$ such $H(t,x) < \epsilon$ for all $x \in \scX$, $t\in(0,t_0)$.

Let $\gamma_0^* = \beta^*_0(\tau_0)$. Fix a $\sigma_L > 0$ such that $c_L(\sigma_L) \le 1/2$ if $f^*_Y(\cdot|0)$ has a type I left tail with respect to $f_0$, or, $u_L(\sigma_L) \log \{1/u_L(\sigma_L)\} \le \delta/2$ if the left tail is of type II. Similarly fix $\sigma_R$ and take $\sigma^* = \min\{\sigma_L, \sigma_R\}$. Define $\zeta^* : [0,1] \to [0,1]$ as
\[
\zeta^*_0(t) = F_0\left(\gamma^*_0 + \frac{\beta^*_0(t) - \gamma^*_0}{\sigma^*} \right), t \in [0,1],
\]
which is differentiable and monotonically increasing, and, whose derivative can be written as,
\[
\dot\zeta^*_0(t) = \frac1{\sigma^*}f_0\left(\gamma^*_0 + \frac{\beta^*_0(t) - \gamma^*_0}{\sigma^*}\right)\dot\beta^*_0(t) = \frac{(1/\sigma^*)f_0\left(\gamma^*_0 + \frac{\beta^*_0(t) - \gamma^*_0}{\sigma^*}\right)}{f^*_Y(\beta^*_0(t)|0)},~t \in (0,1),
\]
since $\beta^*_0(t) = Q_Y^*(t|0)$. 

Because $\zeta^*$ has a continuously differentiable inverse on $[\zeta^*(0), \zeta^*(1)]$, the relation $h^*(\zeta^*(u)) = \dot \beta^*(u) / \dot \beta^*_0(u)$ defines a map $h^*: [\zeta^*(0), \zeta^*(1)] \to \bbR^p$ that is bounded and continuous by the assumption of Theorem \ref{th:2}, and hence, can be extended to a bounded continuous function $h^*:[0,1] \to \bbR^p$. Define $w^* :[0,1] \to \bbR^p$ as follows, essentially repeating the construction in the ``Only if part'' of the proof of Theorem \ref{thm:char}. If $h^*(t) = 0$ then set $w^*(t) = 0$. Otherwise, take $c(t) = [ [1/\{\|h^*(t)\|a(h^*(\tau), \scX)\}]^2  - 1]^{-1/2}$ and set $w^*(t) = c(t) h^*(t) / \|h^*(t)\|$. By the assumption on $\dot\beta^*/\dot\beta^*_0$, $w^*$ is a bounded continuous function on $[0,1]$. 

By construction $(\beta^*_0, \beta^*) = \mathcal{T}(\gamma^*_0, \gamma^*, \sigma^*, w^*, \zeta^*)$. However this parameter vector may not be in our model space since we may have either $[\zeta^*(0) , \zeta^*(1)] \ne [0,1]$ or $\|\dot \zeta^*\|_\infty = \infty$. We correct this by introducing a proper diffeomorphism $\dagg\zeta$ on $[0,1]$ with $\dot{\dagg\zeta}$ bounded away from 0 and infinity, such that $\dagg\zeta(t) = \zeta^*(t)$ for $t \in [\delta_L, 1 - \delta_R]$ for suitably chosen small numbers $\delta_L, \delta_R > 0$. This is the crux of the approximation argument. 

If the left tail is type I, then $\zeta^*(0) > 0$ and $\lim_{t \downarrow 0} \dot \zeta^*(t) = c_L(\sigma^*) \in (0,1/2]$. So one can fix $\delta_L > 0$ small enough such that $\zeta^*(\delta_L) > \delta_L$, $\dot\zeta^*(t) \in (c_L(\sigma^*)/2, 1]$ for all $t \in (0, \delta_L]$, and, $\delta_L \log [4 / \{c_L(\sigma^*)\delta_L\}] < \delta/2$. Otherwise, the left tail is type II, and in that case choose $\delta_L = u_L(\sigma^*)$, which automatically ensures $\dot \zeta^*(t) \ge 1$ for all $t \in (0,1)$ with $\dot\zeta^*(\delta_L) = 1$, and, $\delta_L\log(1/\delta_L) \le \delta/2$. Since $\zeta^*(0) \ge 0$, we also must have $\zeta^*(\delta_L) \ge \delta_L$. Fix $\delta_R$ by repeating the same steps with the right tail. Define $\dagg\zeta: [0,1] \to \bbR$ as,
\[
\dagg\zeta(t) = \left\{ \begin{array}{ll} \zeta^*(t), & t \in [\delta_L,1 - \delta_R],\\[5pt] a_L t^2 + b_L t,& t \in [0, \delta_L) \\[5pt] 1 - a_R (1 - t)^2 - b_R (1 - t), & t \in (1 - \delta_R,1],\end{array}\right.
\]
where, 
\begin{align*}
& a_L = \frac{\delta_L \dot \zeta^*(\delta_L) - \zeta^*(\delta_L)}{ \delta_L^2},~~b_L = \frac{2 \zeta^*(\delta_L) - \delta_L \dot\zeta^*(\delta_L)}{\delta_L},\\
& a_R = \frac{\delta_R \dot \zeta^*(1 - \delta_R) - \{1 - \zeta^*(1 - \delta_R)\}}{ \delta_R^2},~~b_R = \frac{2 \{1 - \zeta^*(1 - \delta_R)\} - \delta_R \dot\zeta^*(1 - \delta_R)}{\delta_R}.
\end{align*}
By choice of $\delta_L$ and $\delta_R$, $a_L < 0$, $a_R < 0$ and $b_L \in [\dot\zeta^*(\delta_L), 2/\delta_L]$, $b_R \in [\dot\zeta^*(1 - \delta_R), 2 / \delta_R]$. It is straightforward to verify that $\dagg\zeta$ defines a diffeomorphism from $[0,1]$ onto $[0,1]$, with $\dagg{\dot \zeta}(t) \in [\dot\zeta^*(\delta_L), b_L]$ for all $t \in [0,\delta_L]$ and $\dagg{\dot \zeta}(t) \in [\dot\zeta^*(1 - \delta_R), b_R]$ for all $t \in [1 - \delta_R, 1]$. Therefore there exists a $B > 0$ such that $\dot {\dagg \zeta}(t) \in [e^{-B}, e^{B}]$ for all $t \in [0,1]$. 

Take $(\dagg\beta_0, \dagg\beta)=\mathcal{T}(\gamma^*_0, \gamma^*, \sigma^*, w^*, \dagg\zeta)$ with valid conditional quantile planes $\dagg Q_Y(\cdot|x)$ and associated cumulative distribution and probability density functions given by $\dagg F_Y(\cdot|x)$ and $\dagg f_Y(\cdot|x)$. By construction of $\dagg\zeta$, $\dagg Q_Y(\tau|x) = Q^*_Y(\tau|x)$ for all $\tau \in [\delta_L, 1 - \delta_R]$ and hence $\dagg F_Y(y|x) = F^*_Y(y|x)$ for all $y \in [Q^*_Y(\delta_L|x), Q^*_Y(1 - \delta_R|x)]$. Hence, by Lemma \ref{lem:aux}, 
\[
d_{KL}(f^*_Y(\cdot|x), \dagg f_Y(\cdot|x)) = \int_{y \in Q^*_Y(\delta_L|x), Q^*_Y(1 - \delta_R|x)]^c} f^*_Y(y|x) \log \frac{\dagg{\dot\zeta}(\dagg F_Y(y|x))}{\dot\zeta^*(F^*_Y(y|x))} dy.
\]
Split the integral above into two integrals, one over $y <  Q^*_Y(\delta_L|x)$ and the other over $y >  Q^*_Y(1 - \delta_R|x)$. When $y < Q^*_Y(\delta_L|x)]$, both $\dagg F_Y(y|x) < \delta_L$, and,  $F^*_Y(y|x) < \delta_L$. Clearly, $\dot{\dagg\zeta}(\dagg F_Y(y|x)) \le b_L \le 2 / \delta_L$. If left tail is type I then, $\dot \zeta^*(F^*_Y(y|x)) \ge c_L(\sigma^*) / 2$ and hence,
\begin{align*}
\int_{Q^*_Y(0|x)}^{Q^*_Y(\delta_L|x)} f^*_Y(y|x) \log \frac{\dagg{\dot\zeta}(\dagg F_Y(y|x))}{\dot\zeta^*(F^*_Y(y|x))} dy & \le  \left[\log \frac{4}{\delta_L c_L(\sigma^*)}\right]\int_{Q^*_Y(0|x)}^{Q^*_Y(\delta_L|x)} f^*_Y(y|x)dy\\
& = \delta_L \log \frac{4}{\delta_L c_L(\sigma^*)} \le \delta/2
\end{align*}
by the choice of $\delta_L$ for the type I left tail. On the other hand, if the left tail is type II, then $\dot\zeta^*(F^*_Y(y|x)) \ge 1$ and hence 
\[
\int_{Q^*_Y(0|x)}^{Q^*_Y(\delta_L|x)} f^*_Y(y|x) \log \frac{\dagg{\dot\zeta}(\dagg F_Y(y|x))}{\dot\zeta^*(F^*_Y(y|x))} dy \le \delta_L \log \frac{2}{\delta_L} \le \delta/2,
\]
again by the choice of $\delta_L$ for this case. Same arguments apply to the integral over $y \in [Q^*_Y(1 - \delta_R|x), Q^*_Y(1|x)]$, and hence, for every $x \in \scX$, $d_{KL}(f^*_Y(\cdot|x), \dagg f_Y(\cdot|x)) \le \delta$. Therefore $d_{KL}(f^*, \dagg f) = \int f_X(x) d_{KL}(f^*_Y(\cdot|x), \dagg f_Y(\cdot|x))dx \le \delta$. 
\end{proof}

%%% Proof of THEOREM 4 %%%

\subsection{Proof of Theorem \ref{th:2}}
Since the prior on $\nu$ has full support, it suffices to show that given any $\delta > 0$ and $\nu < \nu_0$, the conditional prior $\Pi_\nu = \Pi(\cdot|\nu)$ assigns positive mass to the event $\{f:d_{KL}(f^*, f) < 3\delta\}$. Fix any $\delta > 0$, and $\nu < \nu_0$. By Lemma \ref{le:6}, there is a $(\dagg\beta_0, \dagg\beta) = \mathcal{T}(\dagg\gamma_0, \dagg\gamma, \dagg\sigma, \dagg w, \dagg\zeta)$ with the associated probability density function $\dagg f$ satisfying $d_{KL}(f^*, \dagg f) < \delta$, where, $\dagg w: [0,1] \to \bbR^p$ is bounded continuous, and, $\dagg\zeta :[0,1] \to [0,1]$ is a diffeomorphism with $\|\log \dot{\dagg\zeta}\|_\infty < \infty$. %Let $\dagg w_0 = \log \dagg \zeta$.

For any $\lambda > 0$, let $A_\lambda$ denote the set of $(\gamma_0, \gamma, \sigma, w, \zeta)$ such that $|\gamma_0 - \dagg\gamma_0| < \lambda$, $\|\gamma - \dagg\gamma\| < \lambda\dagg \sigma / {\rm diam}(\scX)$, $|\sigma /\dagg\sigma - 1| < \lambda$, $w:[0,1] \to \bbR^p$ is continuous and $\sup_t \|w(t) - \dagg w(t)\| < \lambda$, $\zeta: [0,1] \to [0,1]$ is a diffeomorphism and $\|\log \dot\zeta - \log \dagg{\dot\zeta}\|_\infty < \lambda$. By construction and because of the full support properties of Gaussian processes \citep{tokdar&ghosh07}, the conditional prior $\Pi_\nu$ assigns a positive mass to the set of $f$ associated with $(\beta_0, \beta) = \mathcal{T}(\gamma_0, \gamma, \sigma, w, \zeta)$, $(\gamma_0, \gamma, \sigma, w, \zeta) \in A_\lambda$, for every $\lambda > 0$. So, it suffices to show that $\lambda > 0$ could be chosen small enough such that any $f$ associated with a $(\gamma_0, \gamma, \sigma, w, \zeta) \in A_\lambda$ satisfies $\int f^*_Y(y|x) \log \{\dagg f_Y(y|x)/ f_Y(y|x)\} dy \le 2\delta$ for all $x \in \scX$. 

Let $b = \|\log\dot{\dagg\zeta}\|_\infty + 1$. Since $\dagg w$ is bounded on $[0,1]$, there exists a $B > 0$ such that $\dagg h(t) := {\dagg w(t) }/\{a(\dagg w(t), \scX)\sqrt{1 + \|\dagg w(t)\|^2}\}$ satisfies
\[
1 + x^T\dagg h(t) \in \left[e^{-B}, e^B\right],~\mbox{for all}~t \in [0,1].
\]
Clearly, there exists a $\lambda_0 \in (0, 1/2)$ such that $\|\log\dot\zeta- \log\dot{\dagg\zeta}\|_\infty < \lambda_0$ implies $\|\log\dot\zeta\|_\infty < 2b$, and, $\sup_t \|w(t) - \dagg w(t)\| < \lambda_0$ implies $1 + x^T w(t) / \{a(w(t), \scX)\sqrt{1 + \|w(t)\|^2}\} \in [e^{-2B}, e^{2B}]$ for all $t \in [0,1]$. Take $c_1 = 1 + e^{2B}$, $c_2 = c_1\{|\dagg\gamma_0| + \|\dagg\gamma\| +1)\} + 1/2$, and, $\tilde c_1 = (1 + 2c_1)/\dagg\sigma$, $\tilde c_2 = 2(c_2 + |\dagg\gamma_0| + 1)/\dagg\sigma$. By Lemma \ref{lem:aux3} there exists an $0 < \epsilon < \delta / \max\{12b, 6(B+1)\}$ such that, 
\[
\sup_{x \in \scX} \int_{[Q^*_Y(\epsilon|x), Q^*_Y(1 - \epsilon|x)]} f^*_Y(y|x) |\log f_0(y/\dagg\sigma + \Delta(y))/\dagg\sigma| dy < \delta/3.
\] 
for every $\Delta: \bbR \to \bbR$ satisfying $|\Delta(y)| < \tilde c_1 |y| + \tilde c_2$ for all $y \in \bbR$

Take any $(\gamma_0, \gamma, \sigma, w, \zeta) \in A_{\lambda_0}$ and let $(\beta_0, \beta) = \mathcal{T}(\gamma_0, \gamma, \sigma, w, \zeta)$, $(\beta^{e\dagger}_0, {\beta}^{e\dagger}) = \mathcal{T}(\gamma^{e\dagger}_0, \gamma^{e\dagger}, \dagg\sigma, \dagg w, e)$, and, $(\beta^e_0, \beta^e) = \mathcal{T}(\gamma^e_0, \gamma^e, \sigma, w, e)$, where $e$ denotes the identity function on $[0,1]$ onto itself and
\begin{align*}
\gamma^{e\dagger}_0 &= \dagg\gamma_0 + \dagg\sigma \int_{\dagg\zeta(\tau_0)}^{\tau_0} q_0(u)du,~~\gamma^{e\dagger} = \dagg\gamma + \dagg\sigma \int_{\dagg\zeta(\tau_0)}^{\tau_0} q_0(u)\dagg h(u) du,\\
\gamma^{e}_0 &= \gamma_0 + \sigma \int_{\zeta(\tau_0)}^{\tau_0} q_0(u)du,~~\gamma^{e} = \gamma + \sigma \int_{\zeta(\tau_0)}^{\tau_0} q_0(u) h(u) du.\\
\end{align*}
with $h(t) = w(t) / \{a(w(t), \scX)\sqrt{1 + \|w(t)\|^2}\} $, $t \in [0,1]$. The definitions of $\gamma_0, \gamma^e_0, \gamma^{e\dagger}_0, \gamma^{e\dagger}$ match the requirements of Lemma \ref{lem:aux}. Let $(Q_Y(\cdot|x), q_Y(\cdot|X), F_Y(\cdot|x), f_Y(\cdot|x))$ denote the probability function quartet of $(\beta_0, \beta)$, and the same symbols with appropriate superscripts denote the same quantities associated with the other three pairs $(\beta^e_0, \beta^e)$, $(\dagg\beta_0, \dagg\beta)$ and $(\beta^{e\dagger}_0, \beta^{e\dagger})$. 

Consider the following factorization in log-scale
\[
\log \frac{\dagg f_Y(y|x)}{f_Y(y|x)} = \log \frac{\dagg f_Y(y|x)}{f^{e\dagger}_Y(y|x)} + \log \frac{f^e_Y(y|x)}{f_Y(y|x)} + \log \frac{f^{e\dagger}_Y(y|x)}{f^{e}_Y(y|x)}.
\]
By Lemma \ref{lem:aux}, $|\log \{f^{\dagger}_Y(y|x) / f^{e\dagger}_Y(y|x)\}| = |\log \dot{\dagg\zeta}(\dagg F_Y(y|x))| \le 2b$, and, $|\log\{f^e_Y(y|x) / f_Y(y|x)\}| = |\log \dot\zeta(F_Y(y|x))|\le 2b$. % , and hence the sum of the first two terms on the right hand side above could be bounded by  $4b$. %The third term is not bounded, but its integral with respect to $f^*_Y(y|x)dy$ for large $|y|$ may be bounded as follows.
Since $1 + x^T h(t) \in [e^{-2B}, e^{2B}]$ for all $t \in [0,1]$, we have, for any $x \in \scX$, $q^e_Y(t|x) = q^e_Y(t|0) \cdot [e^{-2B}, e^{2B}]$ for all $t \in (0,1)$, and hence, by Lemma \ref{lem:aux2}, $f^e_Y(y|x) = f^e_Y(y + \Delta_{1,x}(y)|0) \Delta_{2,x}(y)$, with $|\Delta_{1,x}(y)| \le c_1 |y| + c_2$ for all $y \in \bbR$, and, $\|\log \Delta_{2,x}\|_\infty \le 2B$. But, $f^e_Y(y|0) = f_0((y - \gamma_0) / \sigma)/\sigma$ and so, $f^e_Y(y|x) = f_0(y/\dagg\sigma + \tilde \Delta_{1,x}(y))\tilde \Delta_{2,x}(y)/\dagg\sigma$ with $|\tilde \Delta_{1,x}(y)| \le \tilde c_1 |y| + \tilde c_2$, for all $y \in \bbR$, and, $\|\log \tilde \Delta_{2,x}\|_\infty \le B + 1$. The same calculations work for $f^{e\dagger}_Y$ because $(\dagg\gamma_0, \dagg\gamma, \dagg\sigma, \dagg w, \dagg\zeta) \in A_{\lambda_0}$. Therefore, 
\[
\int_{[Q^*_Y(\epsilon|x), Q^*_Y(1 - \epsilon | x)]^c} f^*_Y(y|x) \log \frac{\dagg f_Y(y|x)}{f_Y(y|x)}dy < \delta,
\]
for every $x \in \scX$.

The map $(x, y) \mapsto \log \dagg f_Y(y|x)$ is equicontinuous on $\{(x,y): x \in \scX, y \in [Q^*_Y(\epsilon|x), Q^*_Y(1 - \epsilon|x)]\}$, and hence there exist a $\kappa > 0$ such that $\log|\dagg f_Y(y+z|x) / \dagg f_Y(y|x)| < \delta/2$ for all $x \in \scX$, $y \in [Q^*_Y(\epsilon|x), Q^*_Y(1 - \epsilon|x)]$, $|z| < \kappa$. Fix a small $0 < \eta < \kappa/2$ such that 
\[
\max(e^\eta - 1, 1 - e^{-\eta})\cdot\sup_{x\in\scX} \max\{|Q^*_Y(\epsilon|x) - \dagg\gamma_0 - x^T\dagg\gamma|, |Q^*_Y(1 - \epsilon|x) - \dagg\gamma_0 - x^T\dagg\gamma|\} < \frac\kappa2.
\]
By the equicontinuity of the maps $s \mapsto \log q_0(e^s)$ and $s \mapsto \dagg h(e^s)$ on the interval $[\log \epsilon, \log (1 - \epsilon)]$, and the continuity of the transformation $v \mapsto v / \{a(v, \scX)\sqrt{1 + \|v\|^2}\}$, one can fix $0 < \lambda < \min(\lambda_0, \kappa/4)$ such that for any $(\gamma_0, \gamma, \sigma, w, \zeta) \in A_\lambda$,
\[
\frac{q_Y(t|x)}{\dagg q_Y(t|x)} = \frac\sigma{\dagg\sigma} \times \frac{q_0(\zeta(t))}{q_0(\dagg\zeta(t))} \times \frac{1 + x^Th(\zeta(t))}{1 + x^T\dagg h(\dagg\zeta(t))} \times \frac{\dot \zeta(t)}{\dot{\dagg\zeta}(t)} \in [e^{-\eta}, e^{\eta}],
\]
for every $t \in [\epsilon,1-\epsilon]$ and $x \in \scX$. Consequently, by Lemma \ref{lem:aux2} for every $x \in \scX$ and $y \in [Q^*_Y(\epsilon|x), Q^*_Y(1 - \epsilon|x)]$, $|\log \{f_Y(y|x) / \dagg f_Y(y|x)\}| < \delta$. This proves the result.

%\end{proof}

\section{Computational details}

%\subsection{Log likelihood}

\IncMargin{1em} 
\begin{algorithm}
\SetKwInOut{Input}{input}
\SetKwInOut{Output}{output}
\Input{Model parameters\\
scalars: $\gamma_0$, $\sigma$, \\
vectors: $\gamma = (\gamma_1, \cdots, \gamma_p)^T$, \\
functions: $w = (w_1, \cdots, w_p):(0,1) \to \bbR^p$, diffeomorphism $\zeta$ on $[0,1]$\\
}
\Output{Log-likelihood score}
\tcp{Basic quantities}
\lFor{$l = 1:m$}{
set $\dot b_{0,l} = \sigma q_0(\zeta(t_l)) \dot\zeta(t_l)$\;
}
\For(~~~~\texttt{// Could be parallelized in $l$}){$l = 1:m$}{set $v_l = \omega(\zeta(t_l))$\;
\lFor{$i = 1:n$}{
calculate $a_{i} = x_i^T v_l$\;
}
calculate $a_{\scX} = \max_{1 \le i \le n} \{-a_{i}\} / \sqrt{\|v_l\|}$\;
\lFor{$i = 1:n$}{
set $\tilde a_{il} = a_i / \{a_{\scX} \cdot \sqrt{1 + \|v_l\|^2}\}$\;
}
}
\tcp{Calculatelog likelihood score by sequencing through obs}
set $\ell\ell = 0$ \tcp*[l]{initialize the log likelihood}
\For(~~~~\texttt{// Could be parallelized in $i$}){$i = 1:n$}{
calculate $Q_0 = \gamma_0 + \gamma^T x_i$\;
\eIf{$Y_i > Q_0$}{
set $Q_U = Q_0$ and $l = k$\;
\While{$Y_i > Q_U$}{
set $Q_L = Q_U$ and $l = l + 1$\;
\lIf{$l \le m$}{calculate $~~~~Q_U = Q_L + (\delta/2) \cdot \{\dot b_{0,l-1}(1 +  \tilde a_{i,l-1}) + \dot b_{0,l}(1 + \tilde a_{i, l})\}$}\; \lElse{set $Q_U = \infty$}
}
}{
set $Q_L = Q_0$ and $l = k + 1$\;
\While{$Y_i \le Q_L$}{
set $Q_U = Q_L$ and $l = l - 1$\;
\lIf{$l \ge 2$}{calculate $~~~~Q_L = Q_U - (\delta/2) \cdot \{\dot b_{0,l-1}(1 +  \tilde a_{i,l-1}) + \dot b_{0,l}(1 + \tilde a_{i, l})\}$}\; \lElse{set $Q_L = -\infty$}\;
}}
\eIf{$Q_L = -\infty$ {\bf or} $Q_U = \infty$}{set $\ell\ell = -\infty$}{
calculate $\alpha = (Y_i - Q_L) / (Q_U - Q_L)$\;
set $\ell\ell = \ell\ell - \log\{(1 - \alpha) \dot b_{0,l-1}(1 +  \tilde a_{i,l-1}) + \alpha \dot b_{0,l}(1 + \tilde a_{i, l})\}$}
}
\Return{$\ell\ell$}
\caption{Log-likelihood evaluation}\label{algo:loglik} \end{algorithm}
\DecMargin{1em}

\subsection{Centering the predictors}\label{A:centering}
A preprocessing step of our method is to center the observed predictors $\{x_1, \ldots, x_n\}$ around an interior point of their convex hull (Figure \ref{fig:centering}). While the sample mean vector automatically gives an interior point, it may lie too close to the hull boundary and lead to poorer model fit. A better strategy is to use the mean of the extreme points of the data cloud, but finding the extreme points becomes computationally intensive for $p > 2$. Instead, we employ a fast algorithm that recursively identifies $p + 1$ points $x^*_1, \ldots, x^*_{p+1}$, from the data cloud that are close to the boundary and far away from each other.

%Do we need further explanation for this?
%For example, if the center is close to the boundary, then the ``a(b,X)Ó function can be very close to zero at some direction b, which makes the \beta function defined in equation (3) highly unstable.

Consider a Gaussian process $f$ on $\bbR^p$ with covariance function $C(x, x') = \exp\{-\|\Delta^{-1}(x - x')\|^2\}$, where $\Delta$ is the $p\times p$ diagonal matrix with $j$-th element equaling the observed range of the $j$-th predictor, $j = 1, \ldots,p$. Take $x^*_1 = x_1$ and recursively select $x^*_j$ as the $x \in \{x_1, \ldots, x_n\}$ with maximum ${\rm Var}(f(x) | x^*_1, \ldots, x^*_{j-1})$, $j = 2, \ldots, p+1$. This recursive selection can be carried out extremely fast, with computational complexity of the order $(p+1)n\log n$ flops, by carrying out a rank-$(p+1)$ incomplete, pivoted Cholesky decomposition of the $n\times n$ non-negative definite matrix $K = ((C(x_i, x_j)))$, for example, by using the {\sf inchol} function of the {\sf R} package {\sf kernlab}. Such implementations depend on the order in which the $x_i$s are stored. To encourage selection close from the boundary, we prearrange the $x_i$s in decreasing order of their Mahalanobis distance $\|S^{-1}(x_i - \bar x)\|$ from mean $\bar x$, where $S$ denotes the sample covariance.

\begin{figure}
\centering
\includegraphics[scale =0.85]{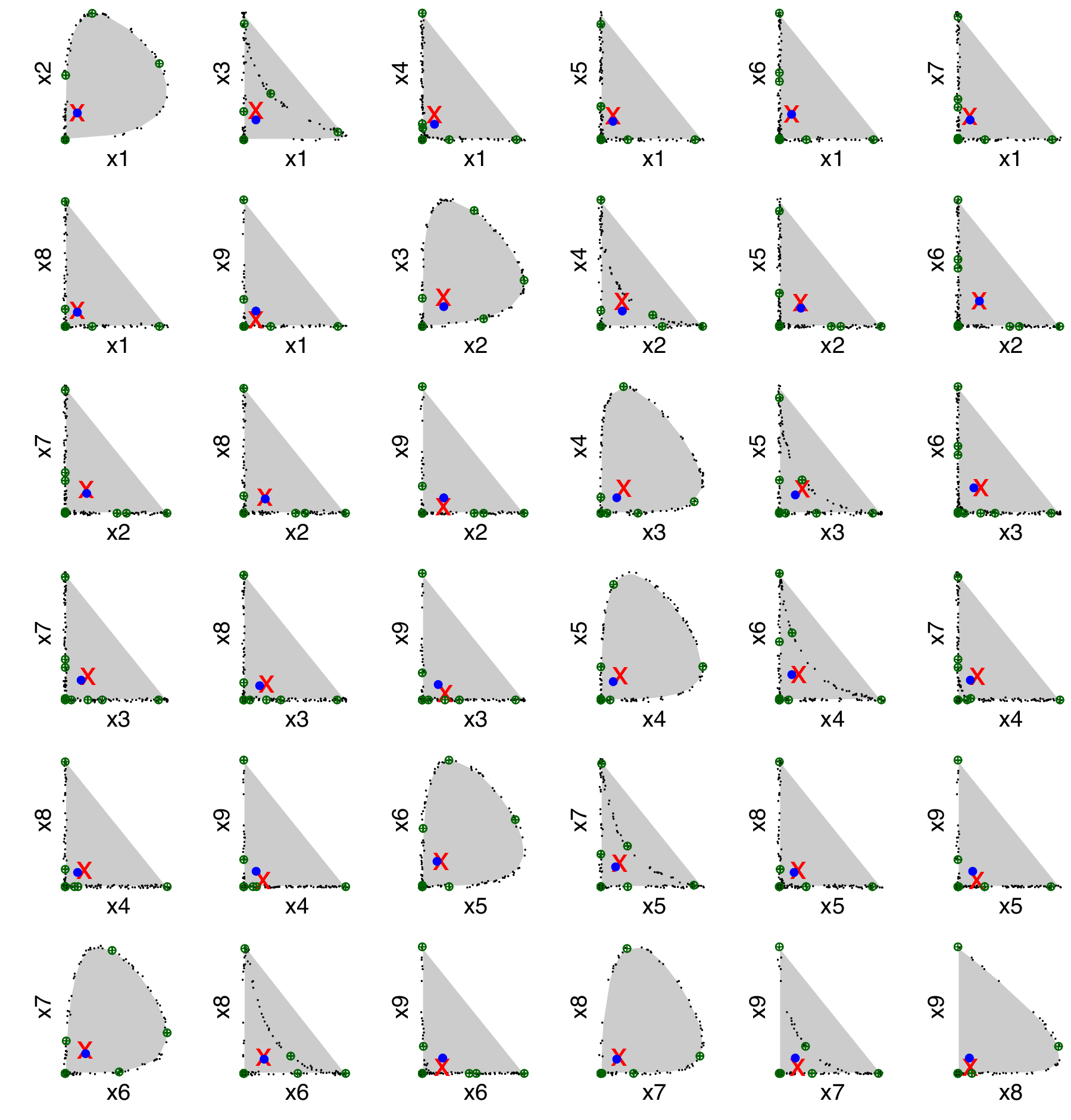}
\caption{A toy demonstration of finding an interior point of the convex hull of observed predictors. The observed predictor vectors are  9 dimensional B-spline transforms of 500 uniform random drawn from $[0,1]$, shown as black dots (with some jitter to improve visibility) in the pairwise plots. The red X denotes the projection of the sample mean, and the large blue dot denotes the projection of the interior point found by our preprocessing method. The green crosshairs are the selected $10$ points. }\label{fig:centering}
\end{figure}

\subsection{Choosing $\lambda$ grid points}\label{a:support}
In choosing the grid points $\lambda_g$, $g = 1, \ldots, G$, for $\lambda_j$, it is important to ensure that the conditional prior distributions $N(0, \kappa_j^2 C_{**}(\lambda_g))$ remain sufficiently overlapped for neighboring $\lambda_g$ values, since otherwise, the grid based discretization of the prior on $\lambda$ may lead to poor mixing of the Markov chain sampler. If overlap is measured by the Kullback-Leibler divergence $d(\lambda, \lambda') := d_{KL}(N(0, \kappa_j^2 C_{**}(\lambda)), N(0, \kappa_j^2 C_{**}(\lambda')))$, which does not depend on $\kappa_j$, it is easy to see that one must use a non-uniform grid of $\lambda$ values since for a given $\Delta > 0$, $d(\lambda, \lambda + \Delta)$ is much larger for a small $\lambda$ than a large one. To choose this non-uniform grid, we set $\lambda_1$ to be the smallest value in the predetermined range, one that gives $\rho_{0.1}(\lambda_1) = 0.99$, and then increment $\lambda$ recursively so that $d(\lambda_{g-1}, \lambda_{g}) = 1$, $g = 2, 3, \ldots$, until the whole range is covered. 

\end{appendices}

\bibliographystyle{chicago}
\bibliography{qr}

\begin{thebibliography}{}

\bibitem[\protect\citeauthoryear{Abrevaya}{Abrevaya}{2001}]{Abrevaya01}
Abrevaya, J. (2001).
\newblock The effects of demographics and maternal behavior on the distribution
  of birth outcomes.
\newblock {\em Empirical Economics\/}~{\em 26}, 247--257.

\bibitem[\protect\citeauthoryear{Andrieu and Thoms}{Andrieu and
  Thoms}{2008}]{andrieu.thoms}
Andrieu, C. and J.~Thoms (2008).
\newblock A tutorial on adaptive mcmc.
\newblock {\em Statistical Computing\/}~{\em 18}, 343--373.

\bibitem[\protect\citeauthoryear{Banerjee, Gelfand, Finley, and Sang}{Banerjee
  et~al.}{2008a}]{Banerjee2008}
Banerjee, S., A.~E. Gelfand, A.~O. Finley, and H.~Sang (2008a).
\newblock {G}aussian predictive process models for large spatial data sets.
\newblock {\em Journal of the Royal Statistical Society Series B\/}~{\em 70},
  825--848.

\bibitem[\protect\citeauthoryear{Banerjee, Gelfand, Finley, and Sang}{Banerjee
  et~al.}{2008b}]{banerjee&etal08}
Banerjee, S., A.~E. Gelfand, A.~O. Finley, and H.~Sang (2008b).
\newblock {G}aussian predictive process models for large spatial data sets.
\newblock {\em Journal of the Royal Statistical Society Series B\/}~{\em 70},
  825--848.

\bibitem[\protect\citeauthoryear{Bondell, Reich, and Wang}{Bondell
  et~al.}{2010}]{Bondell2010}
Bondell, H.~D., B.~J. Reich, and H.~Wang (2010).
\newblock Non-crossing quantile regression curve estimation.
\newblock {\em Biometrika\/}~{\em 97}, 825--838.

\bibitem[\protect\citeauthoryear{Burgette, Reiter, and Miranda}{Burgette
  et~al.}{2011}]{Burgette2011}
Burgette, L., J.~Reiter, and M.~Miranda (2011).
\newblock Exploratory quantile regression with many covariates: an application
  to adverse birth outcomes.
\newblock {\em Epidemiology\/}~{\em 22}, 859--66.

\bibitem[\protect\citeauthoryear{Chernozhukov, Fern\'andez-Val, and
  Galichon}{Chernozhukov et~al.}{2011}]{Chernozhukov2011}
Chernozhukov, V., I.~Fern\'andez-Val, and A.~Galichon (2011).
\newblock Quantile and probability curves without crossing.
\newblock {\em Econometrica\/}~{\em 78}, 1093--1125.

\bibitem[\protect\citeauthoryear{De~Iorio, M\"{u}ller, Rosner, and
  MacEachern}{De~Iorio et~al.}{2004}]{DeIorio2004}
De~Iorio, M., P.~M\"{u}ller, G.~Rosner, and S.~MacEachern (2004).
\newblock An {ANOVA} model for dependent random measures.
\newblock {\em Journal of the American Statistical Association\/}~{\em 99},
  205--215.

\bibitem[\protect\citeauthoryear{Dunham, Cade, and Terrell}{Dunham
  et~al.}{2002}]{Dunham2002}
Dunham, J., B.~Cade, and J.~Terrell (2002).
\newblock Influences of spatial and temporal variation on fish-habitat
  relationships defined by regression quantiles.
\newblock {\em Transactions of the American Fisheries Society\/}~{\em 131},
  86--98.

\bibitem[\protect\citeauthoryear{Dunson and Taylor}{Dunson and
  Taylor}{2005}]{Dunson2005}
Dunson, D.~B. and J.~A. Taylor (2005).
\newblock Approximate {B}ayesian inference for quantiles.
\newblock {\em J. Nonparametr. Stat.\/}~{\em 17}, 385--400.

\bibitem[\protect\citeauthoryear{Elsner, Kossin, and Jagger}{Elsner
  et~al.}{2008}]{Elsner2008}
Elsner, J.~B., J.~P. Kossin, and T.~H. Jagger (2008).
\newblock The increasing intensity of the strongest tropical cyclones.
\newblock {\em Nature\/}~{\em 455}, 92--95.

\bibitem[\protect\citeauthoryear{Haario, Saksman, and Tamminen}{Haario
  et~al.}{1999}]{Haario1999}
Haario, H., E.~Saksman, and J.~Tamminen (1999).
\newblock Adaptive proposal distribution for random walk metropolis algorithm.
\newblock {\em Computational Statistic\/}~{\em 14}, 375--395.

\bibitem[\protect\citeauthoryear{He}{He}{1997}]{He1997}
He, X. (1997).
\newblock Quantile curves without crossing.
\newblock {\em American Statistician\/}~{\em 51}, 186--192.

\bibitem[\protect\citeauthoryear{Hosmer and Lemeshow}{Hosmer and
  Lemeshow}{1998}]{Hosmer1998}
Hosmer, D.~W. and S.~Lemeshow (1998).
\newblock {\em Applied Survival Analysis: Regression Modeling of Time to Event
  Data}.
\newblock New York, NY: John Wiley and Sons Inc.

\bibitem[\protect\citeauthoryear{Jara, Hanson, Quintana, M\"uller, and
  Rosner}{Jara et~al.}{2011}]{DPpackage}
Jara, A., T.~Hanson, F.~A. Quintana, P.~M\"uller, and G.~L. Rosner (2011).
\newblock Dppackage: Bayesian semi- and nonparametric modeling in {R}.
\newblock {\em Journal of Statistical Software\/}~{\em 40}, 1--30.

\bibitem[\protect\citeauthoryear{Kai, Li, and Zou}{Kai
  et~al.}{2011}]{kai2011new}
Kai, B., R.~Li, and H.~Zou (2011).
\newblock New efficient estimation and variable selection methods for
  semiparametric varying-coefficient partially linear models.
\newblock {\em Annals of statistics\/}~{\em 39\/}(1), 305.

\bibitem[\protect\citeauthoryear{Koenker}{Koenker}{2005a}]{Koenker2005}
Koenker, R. (2005a).
\newblock {\em Quantile regression}.
\newblock Cambridge University Press, Cambridge.

\bibitem[\protect\citeauthoryear{Koenker}{Koenker}{2005b}]{Koenker2005vig}
Koenker, R. (2005b).
\newblock Quantile regression in {R}: A vignette.
\newblock http://cran.r-project.org/web/packages/quantreg/vignettes/rq.pdf.

\bibitem[\protect\citeauthoryear{Koenker}{Koenker}{2008}]{Koenker2008}
Koenker, R. (2008).
\newblock Censored quantile regression redux.
\newblock {\em Journal of Statistical Software\/}~{\em 27}, 1--25.

\bibitem[\protect\citeauthoryear{Koenker and Bassett}{Koenker and
  Bassett}{1978}]{Koenker1978}
Koenker, R. and G.~Bassett (1978).
\newblock Regression quantiles.
\newblock {\em Econometrica\/}~{\em 46}, 33--50.

\bibitem[\protect\citeauthoryear{Lenk}{Lenk}{1988}]{Lenk88}
Lenk, P.~J. (1988).
\newblock The logistic normal distribution for {B}ayesian, nonparametric,
  predictive densities.
\newblock {\em Journal of American Statistical Association\/}~{\em 83},
  509--516.

\bibitem[\protect\citeauthoryear{Monahan and Boos}{Monahan and
  Boos}{1992}]{Monahan1992}
Monahan, J.~F. and D.~D. Boos (1992).
\newblock Proper likelihoods for {B}ayesian analysis.
\newblock {\em Biometrika\/}~{\em 79}, 271--278.

\bibitem[\protect\citeauthoryear{Nierenberg, Stukel, Baron, Dain, and
  Greenberg}{Nierenberg et~al.}{1989}]{nierenberg}
Nierenberg, D.~W., T.~A. Stukel, J.~A. Baron, B.~J. Dain, and E.~R. Greenberg
  (1989).
\newblock Determinants of plasma levels of beta-carotene and retinol.
\newblock {\em American Journal of Epidemiology\/}~{\em 130\/}(3), 511--521.

\bibitem[\protect\citeauthoryear{Portnoy}{Portnoy}{2003}]{Portnoy2003}
Portnoy, S. (2003).
\newblock Censored quantile regression.
\newblock {\em Journal of American Statistical Association\/}~{\em 98},
  1001--1012.

\bibitem[\protect\citeauthoryear{Reich, Fuentes, and Dunson}{Reich
  et~al.}{2011}]{Reich2011}
Reich, B.~J., M.~Fuentes, and D.~B. Dunson (2011).
\newblock Bayesian spatial quantile regression.
\newblock {\em Journal of the American Statistical Association\/}~{\em 106},
  6--20.

\bibitem[\protect\citeauthoryear{Schwartz}{Schwartz}{1965}]{Schwartz1965}
Schwartz, L. (1965).
\newblock On {B}ayes procedures.
\newblock {\em Z. Wahrscheinlichkeitstheorie und Verw. Gebiete\/}~{\em 4},
  10--26.

\bibitem[\protect\citeauthoryear{Silverman}{Silverman}{1985}]{Silverman1985}
Silverman, B.~W. (1985).
\newblock Some aspects of the spline smoothing approach to non-parametric curve
  fitting.
\newblock {\em Journal of the Royal Statistical Society series B\/}~{\em 47},
  1--52.

\bibitem[\protect\citeauthoryear{Snelson and Ghahramani}{Snelson and
  Ghahramani}{2006}]{snelson&ghahramani06}
Snelson, E. and Z.~Ghahramani (2006).
\newblock Sparse {G}aussian processes using pseudo-inputs.
\newblock In Y.~Weiss, B.~Sch\"{o}lkopf, and J.~Platt (Eds.), {\em Advances in
  Neural Information Processing Systems}, Volume~18, Cambrisge, Massachussetts.
  The MIT Press.

\bibitem[\protect\citeauthoryear{Tokdar}{Tokdar}{2007}]{tokdar07}
Tokdar, S.~T. (2007).
\newblock Towards a faster implementation of density estimation with logistic
  {G}aussian process priors.
\newblock {\em Journal of Computational and Graphical Statistics\/}~{\em 16},
  633--655.

\bibitem[\protect\citeauthoryear{Tokdar and Ghosh}{Tokdar and
  Ghosh}{2007}]{tokdar&ghosh07}
Tokdar, S.~T. and J.~K. Ghosh (2007).
\newblock Posterior consistency of logistic {G}aussian process priors in
  density estimation.
\newblock {\em Journal of Statistical Planning and Inference\/}~{\em 137},
  34--42.

\bibitem[\protect\citeauthoryear{Tokdar and Kadane}{Tokdar and
  Kadane}{2012}]{Tokdar2012}
Tokdar, S.~T. and J.~B. Kadane (2012).
\newblock Simultaneous linear quantile regression: a semiparametric {B}ayesian
  approach.
\newblock {\em Bayesian Analysis\/}~{\em 7}, 51--72.

\bibitem[\protect\citeauthoryear{Tokdar, Zhu, and Ghosh}{Tokdar
  et~al.}{2010}]{Tokdar2010}
Tokdar, S.~T., Y.~M. Zhu, and J.~K. Ghosh (2010).
\newblock Density regression with logistic {G}aussian process priors and
  subspace projection.
\newblock {\em {B}ayesian Analayis\/}~{\em 5}, 316--344.

\bibitem[\protect\citeauthoryear{van~der Vaart and van Zanten}{van~der Vaart
  and van Zanten}{2008}]{vanderVaart&vanZanten08}
van~der Vaart, A.~W. and J.~H. van Zanten (2008).
\newblock Rates of contraction of posterior distributions based on {G}aussian
  process priors.
\newblock {\em Annals of Statistics\/}~{\em 36}, 1435--1463.

\bibitem[\protect\citeauthoryear{van~der Vaart and van Zanten}{van~der Vaart
  and van Zanten}{2009}]{vanderVaart&vanZanten09}
van~der Vaart, A.~W. and J.~H. van Zanten (2009).
\newblock Adaptive bayesian estimation using a {G}aussian random field with
  inverse gamma bandwidth.
\newblock {\em The Annal of Statistics\/}~{\em 37}, 2655--2675.

\end{thebibliography}

\end{document}